\documentclass{article}
\usepackage[linesnumbered,ruled,vlined]{algorithm2e}
\usepackage{amsmath,amsfonts, amsthm, amssymb}
\usepackage{dsfont}
\usepackage{cases}
\usepackage{tablefootnote,graphicx}
\usepackage{natbib}

\usepackage{refcount}
\usepackage{geometry}
\usepackage{booktabs}
\usepackage{xcolor,colortbl}
\geometry{verbose,tmargin=1in,bmargin=1in,lmargin=1in,rmargin=1in}
\usepackage{makecell}
\usepackage{multirow}
\usepackage[unicode=true,
 bookmarks=false,
 breaklinks=false,pdfborder={0 0 1},colorlinks=false]
 {hyperref}
\hypersetup{colorlinks,citecolor=blue,filecolor=blue,linkcolor=blue,urlcolor=blue}
\newtheorem{lemma}{Lemma}
\newtheorem{remark}{Remark}
\newtheorem{theorem}{Theorem}
\newtheorem{definition}{Definition}
\newcounter{tmp}

\usepackage{amssymb}
\usepackage{pifont}
\newcommand{\cmark}{\ding{51}}%
\newcommand{\xmark}{\ding{55}}%

\definecolor{LightCyan}{rgb}{0.88,1,1}

\definecolor{dacong}{RGB}{10,103,68}

\definecolor{yc}{RGB}{200,3,68}

\allowdisplaybreaks

\title{Faster Last-iterate Convergence of Policy Optimization in Zero-Sum Markov Games\footnotetext{Authors are listed alphabetically.}}
\author{Shicong Cen\textsuperscript{1}\thanks{Department of Electrical and Computer Engineering, Carnegie Mellon University; email: \texttt{shicongc@andrew.cmu.edu}. }
		\qquad  Yuejie Chi\textsuperscript{1}\thanks{Department of Electrical and Computer Engineering, Carnegie Mellon University; email: \texttt{yuejiechi@cmu.edu}. }
	\qquad Simon S. Du\textsuperscript{2,3}\thanks{Paul G. Allen School of Computer Science and Engineering, University of Washington; email: \texttt{ssdu@cs.washington.edu}.}
	\qquad Lin Xiao\textsuperscript{3}\thanks{Meta AI Research; email: \texttt{linx@fb.com}.} \\[1ex]
	\textsuperscript{1}Carnegie Mellon University \\ \textsuperscript{2}University of Washington\\ \textsuperscript{3}Meta AI Research \\
	\\}

\newcommand{\ex}[2]{\mathbb{E}_{#1}\left[#2\right]}
\newcommand{\exlim}[2]{\mathop\mathbb{E}\limits_{#1}\left[#2\right]}
\newcommand{\s}{s} 
\newcommand{\prn}[1]{\left({#1}\right)} 
\newcommand{\prnbig}[1]{\big({#1}\big)} 
\newcommand{\brk}[1]{\left[{#1}\right]} 
\newcommand{\norm}[1]{\left\|{#1}\right\|} 
\newcommand{\normbig}[1]{\big\|{#1}\big\|} 
\newcommand{\innprod}[1]{\big\langle{#1} \big\rangle} 

\newcommand{\cS}{\mathcal{S}}
\newcommand{\cA}{\mathcal{A}}
\newcommand{\cB}{\mathcal{B}}
\newcommand{\cO}{\mathcal{O}}
\newcommand{\cC}{\mathcal{C}}

\newcommand{\cH}{\mathcal{H}}

\newcommand{\KL}[2]{\mathsf{KL}\prnbig{{#1}\,\|\,{#2}}}
\newcommand{\KLs}[2]{\mathsf{KL}_s\prnbig{{#1}\,\|\,{#2}}}
\newcommand{\KLrho}[2]{\mathsf{KL}_\rho\prnbig{{#1}\,\|\,{#2}}}
\newcommand{\KLbig}[2]{\mathsf{KL}\prnbig{{#1}\,\|\,{#2}}}


\newcommand{\one}{\mathbf{1}}
\newcommand{\ent}[1]{\cH\big(#1\big)}
\newcommand{\best}[1]{#1_\tau^\star}
\newcommand{\besth}[1]{#1_{h,\tau}^\star}

\newcommand{\Q}[1]{Q^{{({#1})}}}
\newcommand{\V}[1]{V^{{({#1})}}}
\newcommand{\bmut}{\bar{\mu}^{(t)}}
\newcommand{\bnut}{\bar{\nu}^{(t)}}
\newcommand{\bzt}{\bar{\zeta}^{(t)}}
\newcommand{\bmutp}{\bar{\mu}^{(t+1)}}
\newcommand{\bnutp}{\bar{\nu}^{(t+1)}}
\newcommand{\bztp}{\bar{\zeta}^{(t+1)}}

\newcommand{\mut}{\mu^{(t)}}
\newcommand{\nut}{\nu^{(t)}}
\newcommand{\zt}{\zeta^{(t)}}
\newcommand{\mutp}{\mu^{(t+1)}}
\newcommand{\nutp}{\nu^{(t+1)}}
\newcommand{\ztp}{\zeta^{(t+1)}}
\newcommand{\mutm}{\mu^{(t-1)}}
\newcommand{\nutm}{\nu^{(t-1)}}

\begin{document}

\maketitle

\begin{abstract}

Multi-Agent Reinforcement Learning (MARL)---where multiple agents learn to interact in a shared dynamic environment---permeates across a wide range of critical applications. While there has been substantial progress on understanding the global convergence of policy optimization methods in single-agent RL, designing and analysis of efficient policy optimization algorithms in the MARL setting present significant challenges, which unfortunately, remain highly inadequately addressed by existing theory. In this paper, we focus on the most basic setting of competitive multi-agent RL, namely two-player zero-sum Markov games, and study equilibrium finding algorithms in both the infinite-horizon discounted setting and the finite-horizon episodic setting. 
We propose a single-loop policy optimization method with symmetric updates from both agents, where the policy is updated via the entropy-regularized optimistic multiplicative weights update (OMWU) method and the value is updated on a slower timescale. We show that, in the full-information tabular setting, the proposed method achieves a finite-time last-iterate linear convergence to the quantal response equilibrium of the regularized problem, which translates to a sublinear last-iterate convergence to the Nash equilibrium by controlling the amount of regularization. Our convergence results improve upon the best known iteration complexities, and lead to a better understanding of policy optimization in competitive Markov games.

\end{abstract} 

    \paragraph{Keywords:} zero-sum Markov game, entropy regularization, policy optimization, global convergence, last-iterate convergence

\tableofcontents

\section{Introduction}

Policy optimization methods \citep{williams1992simple,sutton2000policy,kakade2002natural,peters2008natural,konda2000actor}, which cast sequential decision making  as value maximization problems with regards to (parameterized) policies, have been instrumental in enabling recent successes of reinforcement learning (RL). 
See e.g., \citet{schulman2015trust,schulman2017proximal,silver2016mastering}. Despite its empirical popularity, the theoretical underpinnings of policy optimization methods remain elusive until very recently. For single-agent RL problems, a flurry of recent works has made substantial progress on understanding the global convergence of policy optimization methods under the framework of Markov Decision Processes (MDP) \citep{agarwal2019optimality,bhandari2019global,mei2020global,cen2020fast,lan2021policy,bhandari2020note,zhan2021policy,khodadadian2021linear,xiao2022convergence}. Despite the nonconcave nature of value maximization, (natural) policy gradient methods are shown to achieve global convergence at a sublinear rate \citep{agarwal2019optimality,mei2020global} or even a linear rate in the presence of regularization \citep{mei2020global,cen2020fast,lan2021policy,zhan2021policy} when the learning rate is constant.

Moving beyond single-agent RL, Multi-Agent Reinforcement Learning (MARL) is the next frontier---where multiple agents learn to interact in a shared dynamic environment---permeating across critical applications such as multi-agent networked systems, autonomous vehicles, robotics, and so on. Designing and analysis of efficient policy optimization algorithms in the MARL setting present significant challenges and new desiderata, which unfortunately, remain highly inadequately addressed by existing theory.

\subsection{Policy optimization for competitive RL}

In this work, we focus on one of the most basic settings of competitive multi-agent RL, namely two-player zero-sum Markov games \citep{shapley1953stochastic}, and study equilibrium finding algorithms in both the infinite-horizon discounted setting and the finite-horizon episodic setting. In particular, our designs gravitate around algorithms that are \textit{single-loop}, \textit{symmetric}, with {\em finite-time} \textit{last-iterate} convergence to the Nash Equilibrium (NE) or Quantal Response Equilibrium (QRE) under bounded rationality, two prevalent solution concepts in game theory. These design principles naturally come up as a result of pursuing simple yet efficient algorithms: \textit{single-loop} updates preclude sophisticated interleaving of rounds between agents; \textit{symmetric} updates ensure no agent will compromise its rewards in the learning process, which can be otherwise exploited by a faster-updating opponent; in addition, asymmetric updates typically lead to one-sided convergence, i.e., only one of the agents is guaranteed to converge to the minimax equilibrium in a non-asymptotic manner, which is less desirable; moreover, \textit{last-iterate convergence} guarantee absolves the need for agents to switch between learning and deployment; last but not least, it is desirable to converge as fast as possible, where the iteration complexities are {\em non-asymptotic} with clear dependence on salient problem parameters.

Substantial algorithmic developments have been made for finding equilibria in two-player zero-sum Markov games, where Dynamical Programming (DP) techniques have long been used as a fundamental building block, leading to prototypical iterative schemes such as Value Iteration (VI) \citep{shapley1953stochastic} and Policy Iteration (PI) \citep{van1978discounted,patek1999stochastic}. Different from their single-agent counterparts, these methods require solving a two-player zero-sum matrix game for every state per iteration. A considerable number of recent works \citep{zhao2021provably,alacaoglu2022natural,cen2021fast,chen2021sample} are based on these DP iterations, by plugging in various (gradient-based) solvers of two-player zero-sum matrix games. However, these methods are inherently nested-loop, which are less convenient to implement. In addition, PI-based methods are asymmetric and come with only one-sided convergence guarantees \citep{patek1999stochastic,zhao2021provably,alacaoglu2022natural}.

Going beyond nested-loop algorithms, single-loop policy gradient methods have been proposed recently for solving two-player zero-sum Markov games. Here, we are interested in finding an $\epsilon$-optimal NE or QRE in terms of the duality gap, i.e. the difference in the value functions when either of the agents deviates from the solution policy. 
\begin{itemize}
\item For the infinite-horizon discounted setting, \citet{daskalakis2020independent} demonstrated that the independent policy gradient method, with direct parameterization and asymmetric learning rates, finds an $\epsilon$-optimal NE within a polynomial number of iterations. \citet{zeng2022regularized} improved over this rate using an entropy-regularized policy gradient method with softmax parameterization and asymmetric learning rates. On the other end, \citet{wei2021last} proposed an optimistic gradient descent ascent (OGDA) method \citep{rakhlin2013optimization} with direct parameterization and symmetric learning rates,\footnote{To be precise, \citet{wei2021last} proved the average-iterate convergence of the duality gap, as well as the last-iterate convergence of the policy in terms of the Euclidean distance to the set of NEs, where it is possible to translate the latter last-iterate convergence to the duality gap (see Appendix~\ref{sec:translation}). The resulting iteration complexity, however, is much worse than that of the average-iterate convergence in terms of the duality gap, with a problem-dependent constant that can scale pessimistically with salient problem parameters.} which achieves a last-iterate convergence at a rather pessimistic iteration complexity.

\item For the finite-horizon episodic setting, \citet{zhang2022policy,yang2022oftrl} showed that the weighted average-iterate of the optimistic Follow-The-Regularized-Leader (FTRL) method, when combined with slow critic updates, finds an $\epsilon$-optimal NE in a polynomial number of iterations. 
\end{itemize}

A more complete summary of prior results can be found in Table \ref{table:watermelon} and Table \ref{table:firemelon}. In brief, while there have been encouraging progresses in developing computationally efficient policy gradient methods for solving zero-sum Markov games, achieving fast finite-time last-iterate convergence with single-loop and symmetric update rules remains a challenging goal.

\subsection{Our contributions}

Motivated by the positive role of entropy regularization in enabling faster convergence of policy optimization in single-agent RL \citep{cen2020fast,lan2021policy} and two-player zero-sum games \citep{cen2021fast}, we propose a single-loop policy optimization algorithm for two-player zero-sum Markov games in both the infinite-horizon and finite-horizon settings. The proposed algorithm follows the style of actor-critic \citep{konda2000actor}, with the actor updating the policy via the entropy-regularized optimistic multiplicative weights update (OMWU) method \citep{cen2021fast} and the critic updating the value function on a slower timescale. Both agents execute multiplicative and symmetric policy updates, where the learning rates are carefully selected to ensure a fast last-iterate convergence. In both the infinite-horizon and finite-horizon settings, we prove that the last iterate of the proposed method learns the optimal value function and converges at a linear rate to the unique QRE of the entropy-regularized Markov game, which can be further translated into finding the NE by setting the regularization sufficiently small. 
\begin{itemize}  
	\item For the infinite-horizon discounted setting, the last iterate of our method takes at most 
	$$\widetilde{\cO}\left(\frac{|\cS|}{(1-\gamma)^4\tau}\log\frac{1}{\epsilon}\right)$$ iterations for finding an $\epsilon$-optimal QRE under entropy regularization, where $\widetilde{\cO}(\cdot)$ hides logarithmic dependencies. Here, $|\cS|$ is the size of the state space, $\gamma$ is the discount factor, and $\tau$ is the regularization parameter. Moreover, this implies the last-iterate convergence with an iteration complexity of 
	$$\widetilde{\cO}\left(\frac{|\cS|}{(1-\gamma)^5\epsilon}\right)$$ for finding an $\epsilon$-optimal NE.
	\item For the finite-horizon episodic setting, the last iterate of our method takes at most 
	$$\widetilde{\cO}\left(\frac{H^2}{\tau}\log \frac{1}{\epsilon}\right)$$ 
	iterations for finding an $\epsilon$-optimal QRE under entropy regularization, where $H$ is the horizon length. Similarly, this implies the last-iterate convergence with an iteration complexity of
	$$\widetilde{\cO}\left(\frac{H^3}{\epsilon}\right)$$ 
for finding an $\epsilon$-optimal NE.
\end{itemize}

Detailed comparisons between the proposed method and prior arts are provided in Table \ref{table:watermelon} and Table \ref{table:firemelon}. To the best of our knowledge, this work presents the first method that is simultaneously single-loop, symmetric, and achieves fast finite-time last-iterate convergence in terms of the duality gap in both infinite-horizon and finite-horizon settings. From a technical perspective, the infinite-horizon discounted setting is in particular challenging, where ours is the first single-loop algorithm that guarantees an iteration complexity of $\widetilde{\cO}(1/\epsilon)$ for last-iterate convergence in terms of the duality gap, with clear and improved dependencies on other problem parameters in the meantime. In contrast, several existing works introduce additional problem-dependent constants \citep{daskalakis2020independent,wei2021last,zeng2022regularized} in the iteration complexity, which can scale rather pessimistically---sometimes even exponentially---with problem dimensions \citep{li2021softmax}.

Our technical developments require novel ingredients that deviate from prior tools such as error propagation analysis for Bellman operators \citep{perolat2015approximate,patek1999stochastic} from a dynamic programming perspective, as well as the gradient dominance condition \citep{daskalakis2020independent,zeng2022regularized} from a policy optimization perspective. Importantly, at the core of our analysis lies a carefully-designed one-step error contraction bound for policy learning, together with a set of recursive error bounds for value learning,  all of which tailored to the non-Euclidean OMWU update rules that have not been well studied in the setting of Markov games.

\begin{table}[t]
\centering
\begin{tabular}{cccccc}
\toprule
\makecell{Solution\\type}& Reference  & \makecell{Iteration\\ complexity} & \makecell{Single\\ loop} & \makecell{Symmetric}  & \makecell{Last-iterate\\ convergence} \\
\midrule
\multirow{14}{*}{$\epsilon$-NE} & \makecell{\textbf{PI-based Methods}\\ \textbf{\citet{zhao2021provably}}\\\textbf{\citet{alacaoglu2022natural}}} & $\widetilde{\cO}\Big(\frac{\|1/\rho\|_\infty}{(1-\gamma)^3\epsilon}\Big)^*$ & \xmark  & \xmark   & {\color{blue} \cmark}\\
\cmidrule{2-6}
& \makecell{\textbf{VI-based Methods}\\ \textbf{\citet{cen2021fast}}\\ \textbf{\citet{chen2021sample}}} & $\widetilde{\cO}\Big(\frac{1}{(1-\gamma)^3\epsilon}\Big)$ & \xmark  & {\color{blue}\cmark}   & {\color{blue}\cmark}\\
\cmidrule{2-6}
& \textbf{\citet{daskalakis2020independent}} & \makecell{Polynomial$^*$} & {\color{blue}\cmark} & \xmark &   \xmark\\
\cmidrule{2-6}
& \textbf{\citet{zeng2022regularized}} & $\widetilde{\cO}\Big(\frac{|\cS|^{2}\|1/\rho\|_\infty^{5}}{(1-\gamma)^{14}c^4\epsilon^3}\Big)^*$ & {\color{blue}\cmark} &  \xmark& {\color{blue} \cmark}\\
\cmidrule{2-6}
& \multirow{3}{*}{\textbf{\citet{wei2021last}}} & $\widetilde{\cO} \Big(\frac{|\cS|^3}{(1-\gamma)^{9} \epsilon^2}\Big)$& {\color{blue}\cmark} & {\color{blue}\cmark}   & \xmark  \\
\cmidrule{3-6}
&  & $\widetilde{\cO} \Big(\frac{|\cS|^5(|\cA|+|\cB|)^{1/2}}{(1-\gamma)^{16} c^4 \epsilon^2}\Big)$& {\color{blue}\cmark} & {\color{blue}\cmark}   & {\color{blue}\cmark}  \\
\cmidrule{2-6}
& \textbf{{\color{blue}This Work}} \cellcolor{LightCyan}& $\widetilde{\cO}\Big(\frac{|\cS|}{(1-\gamma)^5\epsilon}\Big)$ \cellcolor{LightCyan} & {\color{blue}\cmark} \cellcolor{LightCyan} & {\color{blue}\cmark} \cellcolor{LightCyan}   & {\color{blue}\cmark} \cellcolor{LightCyan} \\
\midrule
\multirow{6}{*}{$\epsilon$-QRE} & \makecell{\textbf{VI-based Methods}\\ \textbf{\citet{cen2021fast}}} & $\widetilde{\cO}\Big(\frac{1}{(1-\gamma)^3}\log^2\frac{1}{\epsilon}\Big)$ & \xmark  & {\color{blue}\cmark}   & {\color{blue}\cmark}\\
\cmidrule{2-6}
 & \textbf{\citet{zeng2022regularized}} & $\widetilde{\cO}\Big(\frac{|\cS|^2\|1/\rho\|^5_\infty}{(1-\gamma)^{11}c^4\tau^3}\log\frac{1}{\epsilon}\Big)^*$ & {\color{blue}\cmark} & \xmark   & {\color{blue} \cmark}\\
\cmidrule{2-6}
& \textbf{{\color{blue}This Work}} \cellcolor{LightCyan} & $\widetilde{\cO}\Big(\frac{|\cS|}{(1-\gamma)^4\tau}\log\frac{1}{\epsilon}\Big)$ \cellcolor{LightCyan} & {\color{blue}\cmark} \cellcolor{LightCyan} & {\color{blue}\cmark} \cellcolor{LightCyan}   & {\color{blue}\cmark} \cellcolor{LightCyan}\\
\bottomrule
\end{tabular}
\caption{Comparison of policy optimization methods for finding an $\epsilon$-optimal NE (resp. QRE) of two-player zero-sum discounted Markov games in terms of the duality gap, i.e., a policy pair $(\mu, \nu)$ satisfying $\max_{\mu', \nu'}(V^{\mu', \nu}(\rho) - V^{\mu, \nu'}(\rho)) \le \epsilon$ (resp. $\max_{\mu', \nu'}(V_\tau^{\mu', \nu}(\rho) - V_\tau^{\mu, \nu'}(\rho)) \le \epsilon$). Note that $*$ implies one-sided convergence, i.e., only one of the agents is guaranteed to achieve finite-time convergence to the equilibrium. Here, $c>0$ refers to some problem-dependent constant. For simplicity and a fair comparison, we replace various notions of concentrability coefficient and distribution mismatch coefficient with a crude upper bound $\|1/\rho\|_\infty$, where $\rho$ is the initial state distribution. }
\label{table:watermelon}
\end{table}

\begin{table}[ht]
\centering
\begin{tabular}{cccccc}
\toprule
\makecell{Solution\\type}& Reference  & \makecell{Iteration\\ complexity} & \makecell{Single\\ loop} & \makecell{Symmetric}   & \makecell{Last-iterate\\ convergence} \\\midrule
\multirow{8}{*}{$\epsilon$-NE} & \makecell{\textbf{\citet{zhang2022policy}}\\\textbf{OFTRL}} & $\widetilde{\cO}\big(\frac{H^{28/5}}{\epsilon^{6/5}}\big)$ & {\color{blue}\cmark}& {\color{blue}\cmark} & \xmark\\
\cmidrule{2-6}
& \makecell{\textbf{\citet{zhang2022policy}}\\\textbf{modified OFTRL}} & $\widetilde{\cO}\big(\frac{H^4}{\epsilon}\big)$ & {\color{blue}\cmark} & {\color{blue}\cmark}& \xmark\\
\cmidrule{2-6}
& \makecell{\textbf{\citet{yang2022oftrl}}\\\textbf{OFTRL}} & $\widetilde{\cO}\big(\frac{H^5}{\epsilon}\big)$ & {\color{blue}\cmark} & {\color{blue}\cmark}& \xmark\\
\cmidrule{2-6}
& {\textbf{{\color{blue}This Work}}} \cellcolor{LightCyan} & $\widetilde{\cO}\big(\frac{H^3}{\epsilon}\big)$ \cellcolor{LightCyan} & {\color{blue}\cmark} \cellcolor{LightCyan} & {\color{blue}\cmark} \cellcolor{LightCyan} &  {\color{blue}\cmark} \cellcolor{LightCyan}\\
\midrule
\multirow{1}{*}{$\epsilon$-QRE} & {\textbf{{\color{blue}This Work}}} \cellcolor{LightCyan} & $\widetilde{\cO}\big(\frac{H^2}{\tau}\log \frac{1}{\epsilon}\big)$ \cellcolor{LightCyan} & {\color{blue}\cmark} \cellcolor{LightCyan} & {\color{blue}\cmark} \cellcolor{LightCyan}   & {\color{blue}\cmark} \cellcolor{LightCyan}\\
\bottomrule
\end{tabular}
\caption{Comparison of policy optimization methods for finding an $\epsilon$-optimal NE or QRE of two-player zero-sum episodic Markov games in terms of the duality gap.  }
\label{table:firemelon}
\end{table}

\subsection{Related works}

\paragraph{Learning in two-player zero-sum matrix games.}{\ }\citet{freund1999adaptive} showed that the average iterate of Multiplicative Weight Update (MWU) method converges to NE at a rate of $\cO(1/\sqrt{T})$, which in principle holds for many other no-regret algorithms as well. \citet{daskalakis2011near} deployed the excessive gap technique of Nesterov and improved the convergence rate to $\cO(1/T)$, which is achieved later by \citep{rakhlin2013optimization} with a simple modification of MWU method, named Optimistic Mirror Descent (OMD) or more commonly, OMWU. 
Moving beyond average-iterate convergence, \citet{bailey2018multiplicative} demonstrated that MWU updates, despite converging in an ergodic manner, diverge from the equilibrium. \citet{daskalakis2018last,wei2021last} explored the last-iterate convergence guarantee of OMWU, assuming uniqueness of NE. \citet{cen2021fast} established linear last-iterate convergence of entropy-regularized OMWU without uniqueness assumption. \citet{sokota2022unified} showed that optimistic update is not necessary for achieving linear last-iterate convergence in the presence of regularization, albeit with a more strict restriction on the step size.

\paragraph{Learning in two-player zero-sum Markov games.} In addition to the aforementioned works on policy optimization methods (policy-based methods) for two-player zero-sum Markov games (cf.~Table \ref{table:watermelon} and Table \ref{table:firemelon}), a growing body of works have developed model-based methods \citep{liu2021sharp,zhang2020model,li2022minimax} and value-based methods \citep{bai2020provable,bai2020near,chen2021almost,jin2021v,sayin2021decentralized,xie2020learning}, with a primary focus on learning NE in a sample-efficient manner. Our work, together with prior literatures on policy optimization, focuses instead on learning NE in a computation-efficient manner assuming full-information. 

\paragraph{Entropy regularization in RL and games.} Entropy regularization is a popular algorithmic idea in RL \citep{williams1991function} that promotes exploration of the policy. A recent line of works \citep{mei2020global,cen2020fast,lan2021policy,zhan2021policy} demonstrated that incorporating entropy regularization provably accelerates policy optimization in single-agent MDPs by enabling fast linear convergence. While the positive role of entropy regularization is also verified in various game-theoretic settings, e.g.,  two-player zero-sum matrix games \citep{cen2021fast}, zero-sum polymatrix games \citep{leonardos2021exploration}, and potential games \citep{cen2022independent}, it remains highly unexplored the interplay between entropy regularization and policy optimization in Markov games with only a few exceptions \citep{zeng2022regularized}.

\subsection{Notations}
We denote the probability simplex over a set $\cA$ by $\Delta(\cA)$. We use bracket with subscript to index the entries of a vector or matrix, e.g., $[x]_{a}$ for $a$-th element of a vector $x$, or simply $x(a)$ when it is clear from the context. Given two distributions $x, y \in \Delta(\cA)$, the Kullback-Leibler (KL) divergence from $y$ to $x$ is denoted by $\KLbig{x}{y} = \sum_{a\in\cA}x(a)(\log x(a) - \log y(a))$. Finally, we denote by $\norm{A}_\infty$ the maximum entrywise absolute value of a matrix $A$, i.e., $\norm{A}_\infty = \max_{i,j}|A_{i,j}|$.

%

\section{Algorithm and theory: the infinite-horizon setting}
\label{sec:infinite_horizon}

\subsection{Problem formulation}
\label{sec:prob_formulation}

\paragraph{Two-player zero-sum discounted Markov game.} A two-player zero-sum discounted Markov game is defined by a tuple $\mathcal{M} = (\cS, \cA, \cB, P, r, \gamma)$, with finite state space $\cS$, finite action spaces of the two players $\cA$ and $\cB$, reward function $r: \cS \times \cA \times \cB \to [0,1]$, transition probability kernel $P:\cS\times\cA\times\cB \to \Delta(\cS)$ and discount factor $0 \le \gamma < 1$. The action selection rule of the max player (resp. the min player) is represented by $\mu: \cS \to \Delta(\cA)$ (resp. $\nu: \cS \to \Delta(\cB)$), where the probability of selecting action $a \in \cA$ (resp. $b\in\cB$) in state $s\in\cS$ is specified by $\mu(a|s)$ (resp. $\nu(b|s)$). The probability of transitioning from state $s$ to a new state $s'$ upon selecting the action pair $(a,b) \in \cA, \cB$ is given by $P(s'|s,a,b)$. 

\paragraph{Value function and Q-function.}
For a given policy pair $\mu, \nu$, the state value of $s\in\cS$ is evaluated by the expected discounted sum of rewards with initial state $s_0 = s$:
\begin{equation}
	\forall s\in\cS: \qquad V^{\mu, \nu}(s) = \ex{}{\sum_{t=0}^\infty \gamma^t r(s_t, a_t, b_t) \big| s_0 = s},
	\label{eq:V_def}
\end{equation}
the quantity the max player seeks to maximize while the min player seeks to minimize.
Here, the trajectory $(s_0, a_0, b_0, s_1, \cdots)$ is generated according to $a_t \sim \mu(\cdot|s_t)$, $b_t \sim \nu(\cdot|s_t)$ and $s_{t+1}\sim P(\cdot|s_t, a_t, b_t)$. Similarly, the $Q$-function $Q^{\mu, \nu}(s,a,b)$ evaluates the expected discounted cumulative reward with initial state $s$ and initial action pair $(a, b)$:
\begin{equation}
	\forall (s,a,b)\in\cS\times\cA\times\cB: \qquad Q^{\mu, \nu}(s,a,b) = \ex{}{\sum_{t=0}^\infty \gamma^t r(s_t, a_t, b_t) \big| s_0 = s, a_0 = a, b_0 = b}.
\end{equation}
For notation simplicity, we denote by $Q^{\mu, \nu}(s)\in\mathbb{R}^{|\cA|\times|\cB|}$ the matrix $[Q^{\mu,\nu}(s,a,b)]_{(a,b)\in\cA\times\cB}$, so that 
\[
	\forall s\in\cS: \qquad V^{\mu, \nu}(s) = \mu(s)^\top Q^{\mu, \nu}(s) \nu(s).
\]
\citet{shapley1953stochastic} proved the existence of a policy pair $(\mu^{\star}, \nu^{\star})$ that solves the min-max problem 
$$\max_{\mu} \min_{\nu} V^{\mu, \nu}(s)$$ 
for all $s\in\cS$ simultaneously, and that the mini-max value is unique. A set of such optimal policy pair $(\mu^{\star}, \nu^{\star})$ is called the Nash equilibrium (NE) to the Markov game.

\paragraph{Entropy regularized two-player zero-sum Markov game.} Entropy regularization is shown to provably accelerate convergence in single-agent RL \citep{geist2019theory, mei2020global, cen2020fast} and facilitate the analysis in two-player zero-sum matrix games \citep{cen2021fast} as well as Markov games \citep{cen2021fast,zeng2022regularized}. The entropy-regularized value function $V_\tau^{\mu, \nu}(s)$ is defined as
\begin{equation}
	\forall s\in\cS: \qquad V_\tau^{\mu, \nu}(s) = \ex{}{\sum_{t=0}^\infty \gamma^t {\Big(}r(s_t, a_t, b_t) - \tau \log \mu(a_t|s_t) + \tau \log \nu(b_t|s_t) {\Big)} \Big|s_0 = s},
	\label{eq:V_tau_def}
\end{equation}
where $\tau \ge 0$ is the regularization parameter. Similarly, the regularized $Q$-function $Q_\tau^{\mu, \nu}$ is given by
\begin{equation}
	\forall (s,a,b)\in\cS\times\cA\times\cB: \qquad Q_\tau^{\mu, \nu}(s) = r(s,a,b) + \gamma\ex{s'\sim P(\cdot|s,a,b)}{V_\tau^{\mu,\nu}(s')}.	\label{eq:Q_tau_def}
\end{equation}
It is known that \citep{cen2021fast} there exists a unique pair of policy $(\best{\mu}, \best{\nu})$ that solves the min-max entropy-regularized problem 
\begin{subequations}
\begin{equation}
	\max_{\mu}\min_{\nu} \; V_\tau^{\mu, \nu} (s),
	\label{eq:min_max_markov}
\end{equation}
or equivalently
\begin{equation}
	\max_{\mu}\min_{\nu} \; \mu(s)^\top Q_\tau^{\mu, \nu} (s) \nu(s) + \tau \ent{\mu(s)} - \tau \ent{\nu(s)}
	\label{eq:min_max_Q}
\end{equation}
\end{subequations}
for all $s\in \cS$, and we call $(\best{\mu}, \best{\nu})$ the quantal response equilibrium (QRE) \citep{mckelvey1995quantal} to the entropy-regularized Markov game. We denote the associated regularized value function and Q-function by
\[
V_\tau^\star (s) = V_\tau^{\best{\mu},\best{\nu}}(s)\quad \text{and}\quad Q_\tau^\star (s,a,b) = Q_\tau^{\best{\mu},\best{\nu}}(s,a,b).
\] 

\paragraph{Goal.} We seek to find an $\epsilon$-optimal QRE or $\epsilon$-QRE (resp. $\epsilon$-optimal NE or $\epsilon$-NE) $\zeta = (\mu, \nu)$ which satisfies 
\begin{equation}
\max_{s\in\cS, \mu', \nu'}\Big(V_\tau^{\mu',\nu}(s) - V_\tau^{\mu,\nu'}(s)\Big) \le \epsilon
\label{eq:eps_QRE_max_s}
\end{equation}
(resp. $\max_{s\in\cS, \mu', \nu'}\Big(V^{\mu',\nu}(s) - V^{\mu,\nu'}(s)\Big) \le \epsilon$) in a computationally efficient manner. In truth, the solution concept of $\epsilon$-QRE provides an approximation of $\epsilon$-NE with appropriate choice of the regularization parameter $\tau$. Basic calculations tell us that	
\begin{align*}
V^{\mu',\nu}(s) - V^{\mu,\nu'}(s) &= \big(V_\tau^{\mu',\nu}(s) - V_\tau^{\mu,\nu'}(s)\big) + \big(V^{\mu', \nu}(s) - V_\tau^{\mu', \nu }(s)\big) - \big(V^{\mu,\nu'}(s) - V_\tau^{\mu,\nu'}(s)\big)\\
		&\le V_\tau^{\mu',\nu}(s) - V_\tau^{\mu,\nu'}(s) + \frac{\tau(\log|\cA| + \log|\cB|)}{1-\gamma},
	\end{align*}
which guarantees that an $\epsilon/2$-QRE is an $\epsilon$-NE as long as $\tau \leq  \frac{(1-\gamma)\epsilon}{2(\log|\cA| + \log|\cB|) }$. For technical convenience, we assume 
\begin{equation}\label{eq:pomelo}
\tau \le \frac{1}{\max\{1, \log|\cA| + \log|\cB|\}}
\end{equation} 
throughout the paper. In addition, one might instead be interested in the expected (entropy-regularized) value function when the initial state is sampled from a distribution $\rho \in \Delta(\cS)$ over $\cS$, which are given by
$$ V_{\tau}^{\mu,\nu}(\rho) := \exlim{s\sim \rho}{V_{\tau}^{\mu,\nu}(s)}, \qquad \mbox{and} \qquad V^{\mu,\nu}(\rho) := \exlim{s\sim \rho}{V^{\mu,\nu}(s)}. $$
The $\epsilon$-QRE/NE can be defined analogously, which facilitates comparisons to a number of related works.


\paragraph{Additional notation.} For notation convenience, we denote by $\zeta$ the concatenation of a policy pair $\mu$ and $\nu$, i.e., $\zeta = (\mu,\nu)$. The QRE to the regularized problem is denoted by $\best{\zeta} = (\best{\mu}, \best{\nu})$. We use shorthand notation $\mu(s)$ and $\nu(s)$ to denote $\mu(\cdot|s)$ and $\nu(\cdot|s)$. In addition, we write $\KL{\mu(s)}{\mu'(s)}$ and $\KL{\nu(s)}{\nu'(s)}$ as $\KLs{\mu}{\mu'}$ and $\KLs{\nu}{\nu'}$, and let
\begin{equation*}
    \KLs{\zeta}{\zeta'} = \KLs{\mu}{\mu'} + \KLs{\nu}{\nu'}.
\end{equation*}

\subsection{Single-loop algorithm design}
\label{sec:result_discounted}

In this section, we propose a single-loop policy optimization algorithm for finding the QRE of the entropy-regularized Markov game, which is generalized from the entropy-regularized OMWU method \citep{cen2021fast} for solving entropy-regularized matrix games, with a careful orchestrating of the policy update and the value update.

\paragraph{Review: entropy-regularized OMWU for two-player zero-sum matrix games.} We briefly review the algorithm design of entropy-regularized OMWU method for two-player zero-sum matrix game \citep{cen2021fast}, which our method builds upon. The problem of interest can be described as
\begin{equation}
	\max_{\mu\in\Delta(\cA)} \min_{\nu\in\Delta(\cB)} \mu^\top A \nu + \tau \cH(\mu) - \tau \cH(\nu),
	\label{eq:min_max_matrix}
\end{equation}
where $A \in \mathbb{R}^{|\cA| \times |\cB|}$ is the payoff matrix of the game. The update rule of entropy-regularized OMWU with learning rate $\eta > 0$ is defined as follows: $\forall a\in \cA, b\in\cB$,
\begin{subequations}
\begin{align}
\begin{cases}
    \mut(a) \propto \mutm(a)^{1-\eta\tau} \exp(\eta [A\bnut]_a)\\
    \nut(b) \propto \nutm(b)^{1-\eta\tau} \exp(-\eta [A^\top\bmut]_b)\\
\end{cases}, \\
\begin{cases}
	\bmutp(a) \propto \mut(a)^{1-\eta\tau} \exp(\eta[A\bnut]_a)\\
	\bnutp(b) \propto \nut(b)^{1-\eta\tau} \exp(-\eta[A^\top\bmut]_b)
\end{cases}.
\end{align}
\end{subequations}
We remark that the update rule can be alternatively motivated from the perspective of natural policy gradient \citep{kakade2002natural,cen2020fast} or mirror descent \citep{lan2021policy,zhan2021policy} with optimistic updates. In particular, the midpoint $(\bmutp, \bnutp)$ serves as a prediction of $(\mutp,\nutp)$ by running one step of mirror descent. \citet{cen2021fast} established that the last iterate of entropy-regularized OMWU converges to the QRE of the matrix game \eqref{eq:min_max_matrix} at a linear rate $(1-\eta\tau)^t$, as long as the step size $\eta$ is no larger than $\min\left\{ \frac{1}{2\|A\|_\infty + 2\tau}, \frac{1}{4\|A\|_\infty}\right\}$.

\paragraph{Single-loop algorithm for two-player zero-sum Markov games.} In view of the similarity in the problem formulations of \eqref{eq:min_max_Q} and \eqref{eq:min_max_matrix}, it is tempting to apply the aforementioned method to the Markov game in a state-wise manner, where the $Q$-function assumes the role of the payoff matrix. It is worth noting, however, that $Q$-function depends on the policy pair $\zeta = (\mu, \nu)$ and is hence changing concurrently with the update of the policy pair. We take inspiration from \citet{wei2021last} and equip the entropy-regularized OMWU method with the following update rule that iteratively approximates the value function in an actor-critic fashion:
\begin{equation*}
		\Q{t+1}(s,a,b) = r(s,a,b) + \gamma \ex{s' \sim P(\cdot|s,a,b)}{\V{t}(s')},
\end{equation*}
where $\V{t+1}$ is updated as a convex combination of the previous $\V{t}$ and the regularized game value induced by $\Q{t+1}$ as well as the policy pair $\bztp = (\bmutp, \bnutp)$:
\begin{equation}
\begin{aligned}
	\V{t+1}(s) &= (1-\alpha_{t+1})\V{t}(s) \\
	&\hspace{-2ex}+\alpha_{t+1}\big[ \bmutp(s)^\top \Q{t+1}(s)\bnutp(s) + \tau \ent{\bmutp(s)} - \tau \ent{\bnutp(s)} \big].
\end{aligned}
\end{equation}
The update of $V$ becomes more conservative with a smaller learning rate $\alpha_t$, hence stabilizing the update of policies. However, setting $\alpha_t$ too small slows down the convergence of $V$ to $\best{V}$. A key novelty---suggested by our analysis---is the choice of the constant learning rates $\alpha:= \alpha_t = \eta\tau$ which updates at a slower timescale than the policy due to $\tau<1$. This is in sharp contrast to the vanishing sequence $\alpha_t = \frac{2/(1-\gamma) + 1}{2/(1-\gamma) + t}$ adopted in \citet{wei2021last}, which is essential in their analysis but inevitably leads to a much slower convergence. We summarize the detailed procedure in Algorithm \ref{alg:kaoya}. 
Last but not least, it is worth noting that the proposed method access
the reward via ``first-order information'', i.e., either agent can only update its policy with the marginalized value function $Q(s)\nu(s)$ or $Q(s)^\top \mu(s)$. Update rules of this kind are instrumental in breaking the curse of multi-agents in the sample complexity when adopting sample-based estimates in \eqref{eq:value_update_kaoya}, as we only need to estimate the marginalized Q-function rather than its full form \citep{li2022minimax,chen2021sample}.


\begin{algorithm}[t]
\label{alg:kaoya}
\caption{Entropy-regularized OMWU for Discounted Two-player Zero-sum Markov Game}
\textbf{Input:} Regularization parameter $\tau > 0$, learning rate for policy update $\eta > 0$, learning rate for value update $\{\alpha_t\}_{t=1}^{\infty}$.\\
\textbf{Initialization:} Set $\mu^{(0)}, \bar{\mu}^{(0)}$, $\nu^{(0)}$ and $\bar{\nu}^{(0)}$ as uniform policies; and set  
\begin{equation*}
\Q{0} = 0,\quad \V{0} = \tau(\log |\cA| - \log |\cB|).
\end{equation*}

\SetKwProg{ForP}{for}{ do in parallel}{end}
\For{$t = 0,1,\cdots$}{
\ForP{all $s\in \cS$}{

\begin{subequations} \label{eq:update_whole}
When $t \ge 1$, update policy pair $\zt(s)$ as:
\begin{equation}
\begin{cases}
    \mut(a|s) \propto \mutm(a|s)^{1-\eta\tau} \exp(\eta [\Q{t}(s)\bnut(s)]_a)\\
    \nut(b|s) \propto \nutm(b|s)^{1-\eta\tau} \exp(-\eta [\Q{t}(s)^\top\bmut(s)]_b)\\
\end{cases}.
\label{eq:update}
\end{equation}

Update policy pair $\bztp(s)$ as:
\begin{equation}
\begin{cases}
	\bmutp(a|s) \propto \mut(a|s)^{1-\eta\tau} \exp(\eta[\Q{t}(s)\bnut(s)]_a)\\
	\bnutp(b|s) \propto \nut(b|s)^{1-\eta\tau} \exp(-\eta[\Q{t}(s)^\top\bmut(s)]_b)
\end{cases}.
\label{eq:update_bar}
\end{equation}
\end{subequations}

Update $\Q{t+1}(s)$ and $\V{t+1}(s)$ as
\begin{equation} \label{eq:value_update_kaoya}
	\begin{cases}
		\Q{t+1}(s,a,b) &= r(s,a,b) + \gamma \ex{s' \sim P(\cdot|s,a,b)}{\V{t}(s')}\\
		\V{t+1}(s)\quad &= (1-\alpha_{t+1})\V{t}(s) \\
		&\hspace{-4ex}+\alpha_{t+1}\big[ \bmutp(s)^\top \Q{t+1}(s)\bnutp(s) + \tau \ent{\bmutp(s)} - \tau \ent{\bnutp(s)} \big]
	\end{cases}.
\end{equation}
}
}
\end{algorithm}


\subsection{Theoretical guarantees}
Below we present our main results concerning the last-iterate convergence of Algorithm \ref{alg:kaoya} for solving entropy-regularized two-player zero-sum Markov games in the infinite-horizon discounted setting. The proof is postponed to Appendix \ref{sec:analysis_discounted}.

\begin{theorem}
\label{thm:Prospero_A}
Setting $0 < \eta \le \frac{(1-\gamma)^3}{32000|\cS|}$ and $\alpha_t = \eta\tau$, it holds for all $t \ge 0$ that
\begin{subequations}
\begin{multline} \label{eq:Prospero_A}
	\max\left\{\frac{1}{|\cS|}\sum_{s\in\cS}\KLs{\best{\zeta}}{\zt},\,  \frac{1}{2|\cS|}\sum_{s\in\cS}\KLs{\best{\zeta}}{\bzt}, \,  \frac{3\eta}{|\cS|}\sum_{s\in\cS}\normbig{\Q{t}(s) - \best{Q}(s)}_\infty\right\} \\
  \le \frac{3000}{(1-\gamma)^2\tau}\Big(1-\frac{(1-\gamma)\eta\tau}{4}\Big)^{t};
\end{multline}
and 
\begin{align}
	&\max_{s\in \cS, \mu, \nu}\Big(V_\tau^{\mu,\bar{\nu}^{(t)}}(s) - V_\tau^{\bmut,\nu}(s)\Big)\le \frac{6000|\cS|}{(1-\gamma)^3\tau}\max\left\{\frac{8}{(1-\gamma)^2\tau}, \frac{1}{\eta}\right\}\Big(1-\frac{(1-\gamma)\eta\tau}{4}\Big)^{t}.
	\label{eq:Miranda_A}
\end{align}
\end{subequations}
\end{theorem}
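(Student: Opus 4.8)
The plan is to prove \eqref{eq:Prospero_A} by induction on $t$, treating the three quantities inside the maximum as a single coupled Lyapunov potential, and then to deduce the duality-gap bound \eqref{eq:Miranda_A} as a corollary. Because the OMWU update at a state $s$ depends only on $\Q{t}(s)$ while the value update couples states through the transition kernel, I would track the policy and value dynamics separately and then glue them together: a one-step contraction for policy learning (per state) and a pair of recursive bounds for value learning (in the $\ell_\infty$-over-states norm). First I would establish the policy contraction. For a fixed payoff matrix this is exactly the entropy-regularized OMWU guarantee of \citet{cen2021fast}, which contracts $\KLs{\best{\zeta}}{\cdot}$ at rate $1-\eta\tau$. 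The new ingredient is that the algorithm uses the running matrix $\Q{t}(s)$ in place of the target payoff $\best{Q}(s)$ that defines the QRE $(\best{\mu}(s),\best{\nu}(s))$. I would redo the non-Euclidean descent computation and isolate the gradient-mismatch term $(\Q{t}(s)-\best{Q}(s))\bnut(s)$, bounding its effect by a perturbation proportional to $\eta\,\normbig{\Q{t}(s)-\best{Q}(s)}_\infty$ via Cauchy-Schwarz and Pinsker's inequality. This yields, for each $s$, a one-step inequality of the form $\KLs{\best{\zeta}}{\ztp}+\tfrac12\KLs{\best{\zeta}}{\bztp}\le (1-\eta\tau)\big(\KLs{\best{\zeta}}{\zt}+\tfrac12\KLs{\best{\zeta}}{\bzt}\big)+C\eta\,\normbig{\Q{t}(s)-\best{Q}(s)}_\infty$, whose weighting matches the factors $1$ and $\tfrac12$ in the theorem.

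Next I would set up the value recursion. The Bellman-type update gives the clean one-step bound $\normbig{\Q{t+1}(s)-\best{Q}(s)}_\infty\le \gamma\,\|\V{t}-\best{V}\|_\infty$ for every $s$, since the rewards cancel and $\mathbb{E}_{s'}[\V{t}(s')-\best{V}(s')]$ is controlled by the $\ell_\infty$-over-states value error. For the value itself, I would use that $\best{V}(s)$ is the fixed point of the averaging update and write $\V{t+1}(s)-\best{V}(s)=(1-\alpha)(\V{t}(s)-\best{V}(s))+\alpha\big(g^{(t+1)}(s)-\best{V}(s)\big)$, where $g^{(t+1)}(s)$ denotes the regularized game value of $(\bmutp(s),\bnutp(s))$ under $\Q{t+1}(s)$. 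I would split $g^{(t+1)}(s)-\best{V}(s)$ into a payoff-error piece, bounded by $\normbig{\Q{t+1}(s)-\best{Q}(s)}_\infty$, and a policy-suboptimality piece, bounded using the $\tau$-strong convexity and concavity of the regularized bilinear objective by a constant times the midpoint divergence $\KLs{\best{\zeta}}{\bztp}$. Taking the maximum over states and crudely upper bounding $\max_s\KLs{\best{\zeta}}{\bztp}$ by $|\cS|$ times its average is what injects the factor $|\cS|$ into the step-size restriction.

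Finally I would combine the three recursions. With $\alpha=\eta\tau$ the pure value self-contraction rate is $1-\alpha+\alpha\gamma=1-(1-\gamma)\eta\tau$, which is the slow timescale and the binding one; the choice $\eta\le (1-\gamma)^3/(32000|\cS|)$ guarantees that the cross terms (the Q-error feeding the policy bound and the midpoint divergence feeding the value bound) are small enough that the weighted maximum of the three quantities still contracts at rate $1-(1-\gamma)\eta\tau/4$. An induction then gives \eqref{eq:Prospero_A}, with the prefactor $3000/((1-\gamma)^2\tau)$ fixed by checking the base case $t=0$ from the uniform initialization and the normalization \eqref{eq:pomelo}. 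For \eqref{eq:Miranda_A}, I would bound the duality gap at the midpoint policy by a performance-difference decomposition into a discounted sum of per-state matrix-game duality gaps; each term is controlled by the midpoint divergence and the Q-error already bounded in \eqref{eq:Prospero_A}, and converting the discounted state sum and the $\max_s$ into the averaged quantities of \eqref{eq:Prospero_A} produces the extra $|\cS|/(1-\gamma)$ and the $\max\{8/((1-\gamma)^2\tau),1/\eta\}$ factors.

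I expect the main obstacle to be the joint step of redoing the OMWU contraction for a perturbed, time-varying payoff and then closing the coupled recursion. Individually each contraction runs at rate roughly $1-\eta\tau$, but the two error sources feed into each other, and a naive combination does not contract; the delicate point is choosing the potential weights ($1,\tfrac12,3\eta$) and the learning-rate relation $\alpha=\eta\tau$ with $\eta=O((1-\gamma)^3/|\cS|)$ so that the off-diagonal coupling is dominated and the $(1-\gamma)/4$-relaxed rate survives uniformly in $t$.
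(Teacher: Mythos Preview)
Your value-recursion step hides the real obstacle. You propose to bound the policy-suboptimality piece $f_s(\best{Q},\bmutp,\bnutp)-\best{V}(s)$ pointwise by ``a constant times $\KLs{\best{\zeta}}{\bztp}$'' using $\tau$-strong convexity. The tightest such pointwise bound is Lemma~\ref{lemma:dualgap_matrix}, and its coefficient is $\frac{4}{(1-\gamma)^2\tau}$, not $O(1)$. Feeding this into your coupled induction, the off-diagonal from policy error into the value recursion is $\alpha\cdot\frac{4|\cS|}{(1-\gamma)^2\tau}=\frac{4\eta|\cS|}{(1-\gamma)^2}$, while the off-diagonal from the Q-error into the policy recursion is $O(\eta)$. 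Their product scales like $\eta^2|\cS|/(1-\gamma)^2$, which matches the $\eta^2$ scaling of the product of contraction margins $(r-d_1)(r-d_2)\asymp(1-\gamma)\eta^2\tau^2$ at the target rate $r=1-\frac{(1-\gamma)\eta\tau}{4}$. The $\eta$'s cancel and you are left with the constraint $|\cS|\lesssim(1-\gamma)^3\tau^2$, which is false in general; no step-size restriction closes your induction.

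The paper circumvents this by never bounding $f_s^{(l)}-f_s^\star$ pointwise in KL. Instead it unrolls $\V{t}(s)-\best{V}(s)=\sum_{l=0}^t\alpha_{l,t}\big(f_s^{(l)}-f_s^\star\big)$ and controls the \emph{time-weighted sum} via a regret inequality (Lemma~\ref{lemma:one_step_regret}) whose KL terms telescope across $l$: the cumulative policy suboptimality collapses to a bounded boundary term plus Q-error terms, with no $1/\tau$ amplification (see the role of $G^{(t)}$ in the proof of Lemma~\ref{lemma:Ophelia}). This is why the whole argument is carried out with the time-averaging weights $\lambda_{l,t}$ of \eqref{def_lambda_lt} rather than by one-step induction, and why Lemmas~\ref{lemma:Laertes}--\ref{lemma:Claudius} track both $\normbig{\Q{l+1}-\Q{l}}_{\Gamma(\rho)}$ and $\normbig{\Q{l+1}-\best{Q}}_{\Gamma(\rho)}$ over the reachable-distribution set $\Gamma(\rho)$ (this, not a $\max_s$-to-average conversion, is where the concentrability $\cC_\rho\le|\cS|$ enters the step-size bound). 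Your one-step policy inequality also drops the $\normbig{\Q{t}-\Q{t+1}}_\infty$ and $\normbig{\Q{t-1}-\Q{t}}_\infty$ terms that the optimistic extrapolation necessarily produces (cf.~Lemma~\ref{lemma:core}); these cannot be absorbed into $\normbig{\Q{t}-\best{Q}}_\infty$ without the same weighted-sum machinery.
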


Theorem \ref{thm:Prospero_A} demonstrates that as long as the learning rate $\eta$ is small enough, the last iterate of Algorithm \ref{alg:kaoya} converges at a linear rate for the entropy-regularized Markov game. Compared with prior literatures investigating on policy optimization, our analysis focuses on the last-iterate convergence of non-Euclidean updates in the presence of entropy regularization, which appears to be the first of its kind. Several remarks are in order, with detailed comparisons in Table~\ref{table:watermelon}.

\begin{itemize}
	\item \textbf{Linear convergence to the QRE.} Theorem \ref{thm:Prospero_A}  demonstrates that the last iterate of Algorithm \ref{alg:kaoya} takes at most $\widetilde{\cO}\left(\frac{1}{(1-\gamma)\eta\tau}\log\frac{1}{\epsilon}\right)$ iterations to yield an $\epsilon$-optimal policy in terms of the KL divergence to the QRE $\max\limits_{s\in\cS}\KLs{\best{\zeta}}{\bzt} \le \epsilon$, the entrywise error of the regularized Q-function $\normbig{\Q{t} - \best{Q}}_\infty \le \epsilon$, as well as the duality gap $\max\limits_{s\in\cS, \mu, \nu}\big(V_\tau^{\mu,\bar{\nu}^{(t)}}(s) - V_\tau^{\bmut,\nu}(s)\big) \le \epsilon$ at once. Minimizing the bound over the learning rate $\eta$, the proposed method is guaranteed to find an $\epsilon$-QRE within 
	\[
	\widetilde{\cO}\left(\frac{|\cS|}{(1-\gamma)^4\tau}\log\frac{1}{\epsilon}\right)
	\]
	iterations, which significantly improves upon the one-side convergence rate of \citet{zeng2022regularized}.
	\item \textbf{Last-iterate convergence to $\epsilon$-optimal NE.} By setting $\tau =\frac{(1-\gamma)\epsilon}{2(\log|\cA| + \log|\cB|)}$, this immediately leads to provable last-iterate convergence to an $\epsilon$-NE, with an iteration complexity of
	\[
	\widetilde{\cO}\left(\frac{|\cS|}{(1-\gamma)^5\epsilon}\right),
	\]
	which again outperforms the convergence rate of \citet{wei2021last}. 
\end{itemize}

\begin{remark}
The learning rate $\eta$ is constrained to be inverse proportional to $|\cS|$, which is for the worst case and can be potentially loosened for problems with a small concentrability coefficient. We refer interested readers to Appendix \ref{sec:analysis_discounted} for details.
\end{remark}

\section{Algorithm and theory: the finite-horizon setting}

\label{sec:result_episodic}


\paragraph{Episodic two-player zero-sum Markov game.}

An episodic two-player zero-sum Markov game is defined by a tuple $\{\cS, \cA, \cB, H, \{P_h\}_{h=1}^H, \{r_h\}_{h=1}^H\}$, with $\cS$ being a finite state space, $\cA$ and $\cB$ denoting finite action spaces of the two players, and $H > 0$ the horizon length. Every step $h \in [H]$ admits a transition probability kernel $P_h: \cS \times \cA \to \Delta(\cS)$ and reward function $r_h: \cS \times \cA \times \cB \to [0,1]$. Furthermore, $\mu = \{\mu_h\}_{h=1}^H$ and $\{\nu_h\}_{h=1}^H$ denote the policies of the two players, where the probability of the max player choosing $a\in\cA$ (resp. the min player choosing $b\in\cB$) at time $h$ is specified by $\mu_h(a|s)$ (resp. $\nu_h(a|s)$).

\paragraph{Entropy regularized value functions.}
 The value function and Q-function characterize the expected cumulative reward starting from step $h$ by following the policy pair $\mu, \nu$. For conciseness, we only present the definition of entropy-regularized value functions below and remark that the their un-regularized counterparts $V_h^{\mu, \nu}$ and $Q_{h}^{\mu,\nu}$ can be obtained by setting $\tau = 0$. We have
\[
    \begin{aligned}
    {V_{h, \tau}^{\mu, \nu}}(s) &= \ex{}{\sum_{h'=h}^{H}\brk{r_{h'}(s_{h'}, a_{h'}, b_{h'}) - \tau \log \mu_{h'}(a_{h'}|s_{h'}) + \tau \log \nu_{h'}(b_{h'}|s_{h'})}\;\Big|\;s_h = s};\\
    {Q_{h, \tau}^{\mu, \nu}}(s,a,b) &= r_h(s,a,b) + \ex{s'\sim P_h(\cdot|s,a,b)}{{V_{h+1, \tau}^{\mu, \nu}}(s')}.
    \end{aligned}
\]
The solution concept of NE and QRE are defined in a similar manner by focusing on the episodic versions of value functions. 
We again denote the unique QRE by $\zeta_\tau^{\star} =(\mu_\tau^\star, \nu_\tau^\star)$.
\paragraph{Proposed method and convergence guarantee}
It is straightforward to adapt Algorithm \ref{alg:kaoya} to the episodic setting with minimal modifications, with detailed procedure showcased in Algorithm \ref{alg:ramen}.
\begin{algorithm}[t]
\label{alg:ramen}
\caption{Entropy-regularized OMWU for Episodic Two-player Zero-sum Markov Game}
\textbf{Input:} Regularization parameter $\tau > 0$, learning rate for policy update $\eta > 0$, learning rate for value update $\{\alpha_t\}_{t=1}^{\infty}$.\\
\textbf{Initialization:} Set $\mu^{(0)}, \bar{\mu}^{(0)}$, $\nu^{(0)}$ and $\bar{\nu}^{(0)}$ as uniform policies; set
\begin{equation*}
\Q{0} = 0,\quad \V{0} = \tau(\log |\cA| - \log |\cB|).
\end{equation*}
\SetKwProg{ForP}{for}{ do in parallel}{end}
\For{$t = 0,1,\cdots$}{
\ForP{all $h \in [H]$, $s\in \cS$}{
\begin{subequations}
When $t \ge 1$, update policy pair $\zt_h(s)$ as:
\begin{equation}
\begin{cases}
    \mut_h(a|s) \propto \mutm_h(a|s)^{1-\eta\tau} \exp(\eta [\Q{t}_h(s)\bnut_h(s)]_a)\\
    \nut_h(b|s) \propto \nutm_h(b|s)^{1-\eta\tau} \exp(-\eta [\Q{t}_h(s)^\top\bmut_h(s)]_b)\\
\end{cases}.
\label{eq:update_epi}
\end{equation}

Update policy pair $\bztp_h(s)$ as:
\begin{equation}
\begin{cases}
	\bmutp_h(a|s) \propto \mut_h(a|s)^{1-\eta\tau} \exp(\eta[\Q{t}_h(s)\bnut_h(s)]_a)\\
	\bnutp_h(b|s) \propto \nut_h(b|s)^{1-\eta\tau} \exp(-\eta[\Q{t}_h(s)^\top\bmut_h(s)]_b)
\end{cases}.
\label{eq:update_bar_epi}
\end{equation}
\end{subequations}

Update $\Q{t+1}_h(s)$ and $\V{t+1}_h(s)$ as
\begin{equation}
	\begin{cases}
		\Q{t+1}_h(s,a,b) &= r_h(s,a,b) + \gamma \ex{s' \sim P_h(\cdot|s,a,b)}{\V{t}_{h+1}(s')}\\
		\V{t+1}_{h}(s)\quad &= (1-\alpha_{t+1})\V{t}_{h}(s) \\
		&\hspace{-4ex}+\alpha_{t+1}\Big[ \bmutp_{h}(s)^\top \Q{t+1}_{h}(s)\bnutp_{h}(s) + \tau \ent{\bmutp_{h}(s)} - \tau \ent{\bnutp_{h}(s)} \Big]
	\end{cases}.
\end{equation}
}
}
\end{algorithm}
The analysis, which substantially deviates from the discounted setting, exploits the structure of finite-horizon MDP and time-inhomogeneous policies, enabling a much larger range of learning rates as showed in the following theorem.
\begin{theorem}
\label{thm:Antonio}
	Setting $0 < \eta \le \frac{1}{8H}$ and $\alpha_t = \eta\tau$, it holds for all $h \in [H]$ and $t \ge T_h :=(H-h)T_{\mathsf{start}}$ with $T_{\mathsf{start}} = \lceil\frac{1}{\eta\tau}\log H\rceil$ that
	\begin{subequations}
	\begin{equation}
		\normbig{\besth{Q} - \Q{t}_{h}}_\infty \le (1-\eta\tau)^{t-T_h} t^{H-h};
		\label{eq:Q_conv_epi}
	\end{equation}
	\begin{equation}
		\max_{s\in\cS, \mu, \nu}\big(V_{h,\tau}^{\mu, \bnut}(s) - V_{h,\tau}^{\bmut, \nu}(s)\big) \le 4(1-\eta\tau)^{t-T_h}\max\left\{\frac{8H^2}{\tau}, \frac{1}{\eta}\right\} \Big(\frac{8H}{\tau} + 6\eta t^{H-h+1}\Big).
		\label{eq:gap_conv_epi}
	\end{equation}
	\end{subequations}
\end{theorem}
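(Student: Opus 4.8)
The plan is to run a backward induction on the horizon index $h$, from $h=H$ down to $h=1$, treating each layer as a slowly perturbed entropy-regularized matrix game whose payoff is $\Q{t}_h(s)$. The guiding observation is that, since $\Q{t+1}_h(s,a,b) = r_h(s,a,b) + \gamma\,\mathbb{E}_{s'\sim P_h}[\V{t}_{h+1}(s')]$ depends only on the value at layer $h+1$, the layer-$h$ dynamics are driven exclusively by layers above it. In particular, at $h=H$ the payoff $\Q{t}_H = r_H = Q^\star_{H,\tau}$ is fixed for all $t\ge 1$, so \eqref{eq:Q_conv_epi} holds trivially at the base case and the layer-$H$ iterates simply run the matrix-game OMWU of \citet{cen2021fast} on a fixed matrix, converging at the clean rate $(1-\eta\tau)^t$. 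The inductive step shows that once the layer-$(h+1)$ value error has converged, the perturbation it injects into $\Q{t}_h$ is small enough that the layer-$h$ OMWU still contracts, while solving the resulting recursion multiplies the polynomial-in-$t$ prefactor by one additional power of $t$.

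First I would establish three coupled one-step bounds at a fixed layer $h$. The first is a \emph{policy contraction}: adapting the matrix-game analysis of \citet{cen2021fast} to the time-varying payoff $\Q{t}_h$, one shows that a potential built from $\sum_s \KLs{\zeta^\star_{h,\tau}}{\zt_h}$ and $\sum_s\KLs{\zeta^\star_{h,\tau}}{\bzt_h}$ obeys $\Phi^{(t+1)}_h \le (1-\eta\tau)\Phi^{(t)}_h + C\eta\normbig{\Q{t}_h - \besth{Q}}_\infty$, where the additive term captures the discrepancy between the payoff used in the update and the payoff $\besth{Q}$ defining the QRE. The second is a \emph{value-averaging} bound: since $\alpha_t=\eta\tau$, the value update in Algorithm~\ref{alg:ramen} gives $\normbig{\V{t+1}_h - \besth{V}}_\infty \le (1-\eta\tau)\normbig{\V{t}_h - \besth{V}}_\infty + \eta\tau\cdot\mathrm{est}_h(t)$, where $\mathrm{est}_h(t)$ is the error of the game-value estimate $\bmutp_h(s)^\top\Q{t+1}_h(s)\bnutp_h(s) + \tau\ent{\bmutp_h(s)} - \tau\ent{\bnutp_h(s)}$ relative to $\besth{V}(s)$. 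The third is a \emph{propagation} bound $\normbig{\Q{t+1}_h - \besth{Q}}_\infty \le \gamma\normbig{\V{t}_{h+1} - V^\star_{h+1,\tau}}_\infty$, read directly off the definitions, which feeds the layer-$(h+1)$ value error into the layer-$h$ payoff error.

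The induction is closed by solving these recursions through a resonant discrete convolution. Passing from layer $h+1$ to layer $h$, the propagation bound turns the layer-$(h+1)$ value error into the layer-$h$ payoff error $\normbig{\Q{t}_h - \besth{Q}}_\infty$, a decaying exponential $(1-\eta\tau)^{t-T_{h+1}}$ times a polynomial in $t$; feeding this into the value-averaging recursion and using that the forcing decays at the \emph{same} rate $(1-\eta\tau)$ as the homogeneous contraction, the convolution $\sum_{k\le t}(1-\eta\tau)^{t-k}(1-\eta\tau)^{k-T_{h+1}}k^{d} = (1-\eta\tau)^{t-T_{h+1}}\sum_{k\le t}k^{d}\approx (1-\eta\tau)^{t-T_{h+1}}\frac{t^{d+1}}{d+1}$ resonates and raises the degree by one. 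Thus each descent from layer $H$ (a fixed matrix, degree $0$) down to layer $h$ adds exactly one power of $t$, so the payoff error at layer $h$ inherits degree $H-h$, which is the factor $t^{H-h}$ recorded in \eqref{eq:Q_conv_epi}. The shift of the start index by $T_{\mathsf{start}}$ per layer supplies a factor of at least $H$ of slack, since $(1-\eta\tau)^{T_{\mathsf{start}}}\le 1/H$; this absorbs the degree-dependent constant $\frac{1}{d+1}$ and the horizon factors generated by each convolution, so the overall prefactor stays uniform across all $H$ layers rather than compounding like $H!$. Finally, the duality-gap bound \eqref{eq:gap_conv_epi} follows from \eqref{eq:Q_conv_epi} together with the policy convergence, by translating the regularized duality gap into the KL divergence to the QRE plus the Q-function error, evaluated at the midpoint iterate $(\bmut,\bnut)$.

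The step I expect to be the main obstacle is the degree bookkeeping in the value recursion. A naive chain---the payoff error drives the policy error (one resonant summation, degree $+1$), and the policy error in turn drives the value error (a second summation, degree $+2$)---would increase the polynomial degree by \emph{two} per layer and destroy the claimed $t^{H-h}$ rate. The crux is therefore to show that $\mathrm{est}_h(t)$ is controlled by the \emph{matrix} perturbation $\normbig{\Q{t+1}_h - \besth{Q}}_\infty$ rather than by the higher-degree policy error: because $(\bmutp_h, \bnutp_h)$ is the regularized optimistic iterate for the payoff $\Q{t+1}_h$, its game value tracks the saddle value of $\Q{t+1}_h$, which differs from $\besth{V}$ by at most the matrix error; this keeps each layer at exactly one extra power of $t$, with the policy error (degree $H-h+1$) computed only as a side quantity for the duality gap. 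A secondary difficulty is the policy contraction itself: unlike the Euclidean OGDA analysis of \citet{wei2021last}, the non-Euclidean OMWU update must be re-analyzed for a payoff that drifts with $t$, so the cross terms coupling $\normbig{\Q{t}_h - \besth{Q}}_\infty$ to the mirror-descent potential have to be bounded carefully, which is precisely where the enlarged learning-rate range $\eta\le\frac{1}{8H}$, free of any $|\cS|$ dependence, is exploited.
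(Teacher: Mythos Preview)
Your overall architecture matches the paper almost exactly: backward induction on $h$, base case $h=H$ with $\Q{t}_H=r_H$, a policy-contraction potential driven by $\normbig{\Q{t}_h-\besth{Q}}_\infty$, the value-averaging recursion with $\alpha_t=\eta\tau$, the resonant convolution $\sum_{k\le t}(1-\eta\tau)^{t-k}(1-\eta\tau)^{k-T_{h+1}}k^d$ that raises the polynomial degree by one per layer, and the role of $T_{\mathsf{start}}$ in absorbing the accumulated $O(H)$ constants. You also put your finger on the genuine crux, namely that a naive chaining (Q-error $\to$ policy error $\to$ value error) would cost two powers of $t$ per layer.

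Where the proposal has a real gap is the mechanism you offer to avoid that degree-two blowup. The claim that $\mathrm{est}_h(t)$, the error of $f_s(\Q{t+1}_h,\bmutp_h,\bnutp_h)$ relative to $\besth{V}(s)$, is controlled pointwise by $\normbig{\Q{t+1}_h-\besth{Q}}_\infty$ because the optimistic iterate ``tracks the saddle value of $\Q{t+1}_h$'' is not true per iteration: one step of OMWU does not place the game value near the saddle value, and the gap between them is precisely the (higher-degree) policy error you are trying to avoid. The paper's resolution is different and more delicate. It proves a one-step regret-type inequality (the episodic analogue of Lemma~\ref{lemma:one_step_regret}) of the form
\[
f_s(\Q{l}_h,\bar\mu^{(l)}_h,\bar\nu^{(l)}_h)-\besth{V}(s)\;\le\;C\,\normbig{\Q{l}_h-\besth{Q}}_\infty+\tfrac{1-\eta\tau}{\eta}\,G^{(l-1)}_h(s)-\tfrac{1}{\eta}\,G^{(l)}_h(s),
\]
where $G^{(l)}_h$ is a KL potential in the policies. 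The point is not that the last two terms are small per step; it is that after weighting by $\eta\tau(1-\eta\tau)^{t-l}$ and summing over $l$, they \emph{telescope}, leaving only a boundary term $\tau(1-\eta\tau)^{t-t_1}G^{(t_1-1)}_h$ at the start index. Thus the policy error enters only once as an initial condition, not as a forcing inside the resonant sum, and the degree indeed goes up by exactly one. This telescoping no-regret structure (Lemma~\ref{lemma:Horatio} in the paper) is the missing idea in your plan; without it, your value-averaging recursion cannot be closed at the stated rate.
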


Theorem \ref{thm:Antonio} implies that the last iterate of Algorithm~\ref{alg:ramen} takes no more than $\widetilde{\cO}\big(H T_{\mathsf{start}} + \frac{H}{\eta\tau}\log\frac{1}{\epsilon}\big) = \widetilde{\cO}\big(\frac{H}{\eta\tau}\log\frac{1}{\epsilon}\big)$ iterations for finding an $\epsilon$-QRE. Minimizing the bound over the learning rate $\eta$, Algorithm \ref{alg:ramen} is guaranteed to find an $\epsilon$-QRE in
\[
\widetilde{\cO}\left(\frac{H^2}{\tau}\log\frac{1}{\epsilon}\right)
\]
iterations, which translates into an iteration complexity of $\widetilde{\cO}\left(\frac{H^3}{\epsilon}\right)$ for finding an $\epsilon$-NE in terms of the duality gap, i.e., 
$
\max_{s\in\cS, h\in[H], \mu, \nu}\Big(V_{h}^{\mu,\bar{\nu}^{(t)}}(s) - V_{h}^{\bmut,\nu}(s)\Big) \le \epsilon$,
by setting $\tau = \cO\Big(\frac{\epsilon}{H(\log|\cA| + \log|\cB|)}\Big)$.


\section{Discussion}
\label{sec:discussion}
This work develops policy optimization methods for zero-sum Markov games that feature single-loop and symmetric updates with provable last-iterate convergence guarantees. Our approach yields better iteration complexities in both infinite-horizon and finite-horizon settings, by adopting entropy regularization and non-Euclidean policy update. Important future directions include investigating whether larger learning rates are possible without knowing problem-dependent information a priori, extending the framework to allow function approximation, and designing sample-efficient implementations of the proposed method.
Last but not least, the introduction of entropy regularization requires each agent to reveal the entropy of their current policy to each other, which prevents the proposed method from being fully \textit{decentralized}. Can we bypass this by dropping the entropy information in value learning? We leave the answers to future work.

\section*{Acknowledgments}

The authors would like to thank Gen Li and Zeyuan Allen-Zhu for valuable discussions.
Part of this work was completed while S. Cen was an intern at Meta AI Research. S.~Cen and Y. Chi are supported in part by the grants ONR N00014-19-1-2404, NSF CCF-1901199, CCF-2106778 and CNS-2148212. S.~Cen is also gratefully supported by Wei Shen and Xuehong Zhang Presidential Fellowship, and Nicholas Minnici Dean's Graduate Fellowship in Electrical and Computer Engineering at Carnegie Mellon University.
S. S. Du acknowledges funding from NSF Awards CCF-2212261, IIS-2143493, DMS-2134106, CCF-2019844 and IIS-2110170.
 
\bibliography{bibfileGame,bibfileRL}
\bibliographystyle{abbrvnat}

\appendix

\section{Analysis for the infinite-horizon setting}
\label{sec:analysis_discounted}

We begin with the definitions of a certain concentrability coefficient, as well as the regularized minimax mismatch coefficient,  which allow us to present general theorems that take into account the problem structure in a more refined manner, from which Theorem \ref{thm:Prospero_A} follow directly.

\begin{definition}
	Given $\rho\in \Delta(\cS)$ with $\rho(s) > 0, \forall s\in S$, the concentrability coefficient $c_\rho(t)$ is defined as
	\begin{equation*}
	c_\rho(t) = \sup_{\substack{x^{(l)}\in \cA^{\cS},1\le l \le t,\\y^{(l)}\in \cB^{\cS},1\le l \le t}}	\Big\|\frac{\rho P_{x^{(1)},y^{(1)}} \cdots P_{x^{(t)},y^{(t)}}}{\rho}\Big\|_\infty,
	\end{equation*}
	where $P_{x^{(l)}, y^{(l)}} \in \mathbb{R}^{|\cS|\times|\cS|}$ is the state transition matrix induced by a pair of deterministic policy $x^{(l)}, y^{(l)}$:
	\[
	[P_{x^{(l)}, y^{(l)}}]_{s,s'} = P(s'|s,x^{(l)}(s),y^{(l)}(s)).
	\]
	Let $\cC_\rho$ be the maximum value of $c_\rho(t)$ over $t\ge 0$:
	\begin{equation*}
		\cC_\rho = \sup_{t\ge0} c_\rho(t).
	\end{equation*}
	In addition, let $\Gamma(\rho)$ be the set of all possible distribution over $\cS$ induced by an initial state distribution $\rho$ and deterministic policy sequences, i.e., 
	\begin{equation*}
		\Gamma(\rho) = \bigcup_{t=0}^\infty \big\{\rho P_{x^{(1)},y^{(1)}} \cdots P_{x^{(t)},y^{(t)}}: x^{(l)}\in \cA^{\cS},y^{(l)}\in \cB^{\cS}, \forall l \in [t]\big\}.
	\end{equation*}
\end{definition}

The following definition of the the regularized minimax mismatch coefficient parallels that of the unregularized one in \citep{daskalakis2020independent}. 

\begin{definition}
	We define the regularized minimax mismatch coefficient by
	\begin{equation*}
		\cC_{\rho,\tau}^\dagger = \max\bigg\{\max_{\mu} \bigg\|\frac{d_\rho^{\mu, \nu_\tau^\dagger(\mu)}}{\rho}\bigg\|_\infty,\, \max_{\nu} \bigg\|\frac{d_\rho^{\mu_\tau^\dagger(\nu), \nu}}{\rho}\bigg\|_\infty\bigg\}.
	\end{equation*}
	Here, $\nu_\tau^\dagger(\mu)$ denotes the optimal policy of the min player when the max player adopts policy $\mu$:
	\[
		\nu_\tau^\dagger(\mu) = \arg \min_{\nu} V_\tau^{\mu, \nu}(\rho),
	\]
	and $\mu_\tau^\dagger(\nu)$ is defined in a symmetric way. The discounted state visitation distribution $d_\rho^{\mu,\nu}$ is defined as
	\[
	d_\rho^{\mu, \nu}(s) = (1-\gamma)\exlim{s_0 \sim \rho}{\sum_{t=0}^{\infty} \gamma^t P(s_t = s | s_0)}.
	\]
\end{definition}

We make note that Theorem \ref{thm:Prospero_A} is the direct corollary of the following theorems, by setting $\rho$ to the uniform distribution over $\cS$, where $\cC_\rho$ and $\|1/\rho\|_\infty$ admit a trivial upper bound $|\cS|$. By a slight abuse of notation, let $\KLrho{\zeta}{\zeta'}$ denote $\ex{s\sim \rho}{\KLs{\zeta}{\zeta'}}$ for $\rho \in \Delta(\cS)$.

\begin{theorem}
\label{thm:Prospero}
With $0 < \eta \le \frac{(1-\gamma)^3}{32000\cC_\rho}$, and $\alpha_i = \eta\tau$, we have
\begin{align*}
	\max\Big\{\KLrho{\best{\zeta}}{\zt}, \frac{1}{2}\KLrho{\best{\zeta}}{\bzt}, 3\eta \exlim{s\sim \rho}{\normbig{\Q{t}(s) - \best{Q}(s)}_\infty}\Big\}\le \frac{3000}{(1-\gamma)^2\tau}\Big(1-\frac{(1-\gamma)\eta\tau}{4}\Big)^{t}.
\end{align*}
\end{theorem}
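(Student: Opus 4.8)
The plan is to run a coupled induction that simultaneously controls the policy error (measured by KL divergence to the QRE) and the value error $\normbig{\Q{t}(s)-\best{Q}(s)}_\infty$, both in $\rho$-weighted expectation. The backbone is the observation that, by the Bellman characterization of the QRE in \eqref{eq:min_max_Q}, the pair $\best{\zeta}(s)$ is precisely the quantal response equilibrium of the \emph{static} matrix game with payoff $\best{Q}(s)$ at each state $s$. Consequently each per-state update \eqref{eq:update}--\eqref{eq:update_bar} is exactly an entropy-regularized OMWU step for a matrix game whose payoff $\Q{t}(s)$ is a drifting perturbation of the true payoff $\best{Q}(s)$. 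I would therefore import the one-step matrix-game contraction of \citet{cen2021fast}, but run against the target $\best{\zeta}$ with the perturbed payoff. Viewing the update as optimistic mirror descent, the descent lemma gives, at each state, an inequality of the form
\[\KLs{\best{\zeta}}{\ztp} + \Phi^{(t+1)}(s) \le (1-\eta\tau)\big(\KLs{\best{\zeta}}{\zt} + \Phi^{(t)}(s)\big) + O\big(\eta\normbig{\Q{t}(s)-\best{Q}(s)}_\infty\big),\]
where $\Phi^{(t)}(s)$ is an auxiliary midpoint potential coupling $\zt$ and $\bzt$ (needed to absorb the optimism), and the last term—obtained after a Cauchy--Schwarz/Young split—accounts for using $\Q{t}(s)$ in place of $\best{Q}(s)$. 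Since the policy update is decoupled across states, taking $\mathbb{E}_{s\sim\rho}$ preserves this bound.

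Next I would derive the value recursion. From \eqref{eq:value_update_kaoya}, the $Q$-error satisfies $\Q{t+1}(s,a,b)-\best{Q}(s,a,b) = \gamma\,\mathbb{E}_{s'}[\V{t}(s')-\best{V}(s')]$, so the backup contracts by $\gamma$ but spreads errors across states according to the transition kernel; this is exactly where the concentrability coefficient $\cC_\rho$ enters when we return to the $\rho$-weighted norm. The $V$-update is a convex combination with weight $\alpha=\eta\tau$, and the injected game value $\bmutp(s)^\top\Q{t+1}(s)\bnutp(s)+\tau\ent{\bmutp(s)}-\tau\ent{\bnutp(s)}$ differs from $\best{V}(s)$ by a quantity controlled—using smoothness of the regularized game value—by $\normbig{\Q{t+1}(s)-\best{Q}(s)}_\infty$ and the policy error $\KLs{\best{\zeta}}{\bztp}$. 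This produces a recursion bounding the $V$-error by $(1-\alpha)$ times its previous value plus $\alpha$ times a combination of the $Q$-error and the policy error.

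Finally I would close the loop. Let $M^{(t)}$ denote the maximum of the three weighted quantities in the statement; the weights $\tfrac12$ on $\KLrho{\best{\zeta}}{\bzt}$ and $3\eta$ on the value error are tuned precisely so that the three individual recursions align to a common contraction factor. Assuming inductively $M^{(t)}\le \frac{3000}{(1-\gamma)^2\tau}\big(1-\frac{(1-\gamma)\eta\tau}{4}\big)^{t}$, I substitute the policy recursion and the value recursion and invoke $\eta\le\frac{(1-\gamma)^3}{32000\cC_\rho}$ to absorb every cross-term (policy error feeding the value update, value error feeding the policy update) into the slack between the raw per-step factors $(1-\eta\tau)$, $(1-\alpha)$ and the target factor $1-\frac{(1-\gamma)\eta\tau}{4}$; the smallness of $\eta$ is exactly what makes these perturbations lower order. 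The base case at $t=0$ follows from the uniform-policy initialization together with $\Q{0}=0$, $\V{0}=\tau(\log|\cA|-\log|\cB|)$, using $\|\best{Q}\|_\infty=O(1/(1-\gamma))$ and the crude KL bound $O(\log(|\cA||\cB|))$ against the uniform policy.

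The main obstacle is the two-timescale coupling between the fast policy iterate and the slow value iterate: the policy update presumes a fixed payoff while $\Q{t}$ drifts, and simultaneously the value update presumes the policy is near the current game's equilibrium. Disentangling these requires showing that, although the policy nominally contracts at rate $(1-\eta\tau)$ and the value at a rate governed by $\gamma$ and $\alpha=\eta\tau$, the unavoidable cross-feedback terms are dominated—so that $1-\frac{(1-\gamma)\eta\tau}{4}$ emerges as the slowest surviving mode. This hinges on a delicate bookkeeping of constants in which the factor $\cC_\rho$ from Bellman error propagation is balanced against the tiny step size. Relative to \citet{wei2021last}, the key novelty is reconciling the non-Euclidean (KL) geometry of the OMWU policy update with the $\ell_\infty$ geometry natural to value iteration, and exploiting the \emph{constant} learning rate $\alpha=\eta\tau$ (rather than a vanishing one) to obtain a linear rather than sublinear rate.
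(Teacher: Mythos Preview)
Your architecture is right in spirit---the proof does couple a per-state OMWU descent inequality (Lemma~\ref{lemma:core}) with a value-error recursion (Lemmas~\ref{lemma:Laertes}--\ref{lemma:Ophelia})---but the one-step induction on $M^{(t)}$ that you propose cannot close. The descent lemma gives $\KLrho{\best{\zeta}}{\ztp}\le(1-\eta\tau)\KLrho{\best{\zeta}}{\zt}+c\,\eta\,\exlim{s\sim\rho}{\normbig{\Q{t}(s)-\best{Q}(s)}_\infty}$ with $c$ a \emph{universal constant} (Lemma~\ref{lemma:core} gives $c=2$). Under your weighting $3\eta\,\exlim{s\sim\rho}{\normbig{\Q{t}-\best{Q}}_\infty}\le M^{(t)}$, the cross-term becomes $\tfrac{c}{3}M^{(t)}=\Theta(1)\cdot M^{(t)}$, which is not lower order and destroys the $(1-\eta\tau)$ contraction. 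Making $\eta$ small does not help: it shrinks the contraction gain $\eta\tau$ and the weight $3\eta$ on the value error in lockstep, so the ratio is unchanged. No rescaling of the three weights rescues a purely one-step argument.

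The paper's mechanism is different in kind. It does \emph{not} try to bound the $Q$-error by the distance to the QRE; instead it bounds the $Q$-error recursively by the \emph{consecutive policy drift} $\KLrho{\bar{\zeta}^{(l+1)}}{\bar{\zeta}^{(l)}}$ (Lemmas~\ref{lemma:Laertes}--\ref{lemma:Claudius}). This works because Lemma~\ref{lemma:core} carries a \emph{positive} $\big(1-\eta\tau-\tfrac{4\eta}{1-\gamma}\big)\KLrho{\bztp}{\bzt}$ term on its left-hand side, and the step-size condition $\eta\lesssim(1-\gamma)^3/\cC_\rho$ is exactly what lets this term absorb the accumulated $Q$-error feedback. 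Technically this is not an induction but a weighted telescoping: one sums Lemma~\ref{lemma:core} over $l=0,\dots,t$ against the weights $\lambda_{l+1,t+1}=\alpha_{l+1}\prod_{i>l+1}\big(1-\tfrac{1-\gamma}{4}\alpha_i\big)$ and then invokes Lemma~\ref{lemma:Claudius}. Two further points your sketch glosses over: (i) the value recursion is genuinely non-Markovian---$\V{t}$ is a convex combination of \emph{all} past one-step game values, so $\normbig{\Q{t+1}-\best{Q}}$ depends on the whole history $\{\normbig{\Q{l}-\best{Q}}\}_{l\le t}$ with weights $\alpha_{l,t}$, not just the previous iterate---and (ii) the Bellman backup shifts the state distribution, so the recursion must be run in the stronger $\Gamma(\rho)$-norm (supremum over all distributions reachable from $\rho$) before passing back to $\rho$-expectation via $\cC_\rho$; tracking only $\exlim{s\sim\rho}{\cdot}$ as you do is not enough.
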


\begin{theorem}
\label{thm:Miranda}
With $0 < \eta \le \frac{(1-\gamma)^3}{32000\cC_\rho}$, and $\alpha_i = \eta\tau$, we have
\begin{align*}
	&\max_{s\in\cS, \mu, \nu}\Big(V_\tau^{\mu,\bar{\nu}^{(t)}}(s) - V_\tau^{\bmut,\nu}(s)\Big)\le \frac{6000\|1/\rho\|_\infty}{(1-\gamma)^3\tau}\max\Big\{\frac{8}{(1-\gamma)^2\tau}, \frac{1}{\eta}\Big\}\Big(1-\frac{(1-\gamma)\eta\tau}{4}\Big)^{t},
\end{align*}
and
\begin{align*}
	&\max_{\mu, \nu}\big(V_\tau^{\mu,\bnut}(\rho) - V_\tau^{\bmut,\nu}(\rho)\big)\le \frac{6000\cC_{\rho,\tau}^\dagger}{(1-\gamma)^3\tau}\max\Big\{\frac{8}{(1-\gamma)^2\tau}, \frac{1}{\eta}\Big\}\Big(1-\frac{(1-\gamma)\eta\tau}{4}\Big)^{t}.
\end{align*}
\end{theorem}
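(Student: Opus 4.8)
The plan is to obtain Theorem~\ref{thm:Miranda} as a corollary of Theorem~\ref{thm:Prospero}, which already controls the forward divergence $\KLrho{\best{\zeta}}{\bzt}$ at the advertised linear rate. Writing the duality gap as the sum of the two one-sided exploitabilities $\max_{\mu}V_\tau^{\mu,\bnut}(s)-V_\tau^\star(s)$ and $V_\tau^\star(s)-\min_{\nu}V_\tau^{\bmut,\nu}(s)$, both of which are nonnegative because $\best{\nu}$ (resp. $\best{\mu}$) is the QRE best response to $\best{\mu}$ (resp. $\best{\nu}$), it suffices to bound each separately; I focus on the max player's, the min player's being symmetric.

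First I would set up a contraction recursion reducing the Markov-game exploitability to a \emph{one-step} matrix-game gap evaluated with the true regularized $Q$-function. Let $\mu^\dagger$ be the max player's best response to $\bnut$ and $\delta(s):=V_\tau^{\mu^\dagger,\bnut}(s)-V_\tau^\star(s)$. The Bellman recursion gives $Q_\tau^{\mu^\dagger,\bnut}(s,a,b)=\best{Q}(s,a,b)+\gamma\,\mathbb{E}_{s'\sim P(\cdot|s,a,b)}[\delta(s')]$, and invoking the key fact that $(\best{\mu}(s),\best{\nu}(s))$ is the saddle point of the entropy-regularized matrix game with payoff $\best{Q}(s)$ yields the pointwise inequality $\delta(s)\le\big[W_s(\bnut(s))-W_s(\best{\nu}(s))\big]+\gamma\normbig{\delta}_\infty$, where $W_s(\nu):=\max_{p\in\simplexA}\big[p^\top\best{Q}(s)\nu+\tau\ent{p}\big]-\tau\ent{\nu}$ is minimized over $\simplexB$ exactly at $\best{\nu}(s)$ and satisfies $W_s(\best{\nu}(s))=V_\tau^\star(s)$. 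Taking the supremum over $s$ and rearranging gives $\normbig{\delta}_\infty\le\frac{1}{1-\gamma}\max_s\big[W_s(\bnut(s))-W_s(\best{\nu}(s))\big]$, supplying the $\frac{1}{1-\gamma}$ horizon factor. For the $\rho$-weighted bound in the second display I would instead apply the regularized performance-difference lemma and bound the induced visitation ratio $d_\rho^{\mu^\dagger,\bnut}/\rho$ by the minimax mismatch coefficient $\cC_{\rho,\tau}^\dagger$, which is precisely the quantity that appears there.

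The crux is then converting the one-step gap $W_s(\bnut(s))-W_s(\best{\nu}(s))$ into the forward divergence $\KLs{\best{\zeta}}{\bzt}$ controlled by Theorem~\ref{thm:Prospero}. Since $\best{\nu}(s)$ is the interior simplex minimizer of $W_s$, the first-order term vanishes and the gap equals the Bregman divergence $D_{W_s}(\bnut(s),\best{\nu}(s))$; splitting $W_s$ into its log-sum-exp part and its negative-entropy part, the log-sum-exp piece is smooth with constant of order $\normbig{\best{Q}(s)}_\infty^2/\tau\lesssim 1/((1-\gamma)^2\tau)$, and Pinsker's inequality turns the resulting $\ell_1$-squared term into $\KLs{\best{\zeta}}{\bzt}$, reproducing both the smoothness constant $\max\{8/((1-\gamma)^2\tau),1/\eta\}$ and the correct (forward) KL direction. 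I expect this conversion to be the main obstacle: the negative-entropy contribution to the Bregman divergence is naturally the \emph{reverse} divergence $\KLbig{\bnut(s)}{\best{\nu}(s)}$, so matching it to the forward one requires care---exploiting that near the QRE the two directions agree to leading order and that $\tau$ is small, so this term is dominated by the log-sum-exp contribution---mirroring the matrix-game analysis of \citet{cen2021fast} but now with a state-varying payoff $\best{Q}(s)$.

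Finally I would assemble the pieces. For the state-wise bound I convert $\max_s\KLs{\best{\zeta}}{\bzt}$ to $\KLrho{\best{\zeta}}{\bzt}$ using $\max_s(\cdot)\le\norm{1/\rho}_\infty\,\mathbb{E}_{s\sim\rho}[\cdot]$, valid since the summands are nonnegative; for the $\rho$-weighted bound I use the mismatch-coefficient estimate from the performance-difference step. In both cases substituting the linear-rate bound $\KLrho{\best{\zeta}}{\bzt}\le\frac{6000}{(1-\gamma)^2\tau}\big(1-\frac{(1-\gamma)\eta\tau}{4}\big)^{t}$ from Theorem~\ref{thm:Prospero} and collecting the factors $\frac{1}{1-\gamma}\cdot\frac{1}{(1-\gamma)^2\tau}\cdot\max\{\frac{8}{(1-\gamma)^2\tau},\frac{1}{\eta}\}\cdot\norm{1/\rho}_\infty$ (resp. with $\cC_{\rho,\tau}^\dagger$ in place of $\norm{1/\rho}_\infty$) reproduces exactly the claimed bounds. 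The min-player exploitability follows from the symmetric argument applied to $U_s(\mu):=\min_{q\in\simplexB}\big[\mu^\top\best{Q}(s)q-\tau\ent{q}\big]+\tau\ent{\mu}$, which is maximized at $\best{\mu}(s)$.
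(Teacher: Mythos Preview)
Your reduction of the Markov-game exploitability to a one-step gap with payoff $\best{Q}(s)$ is correct and matches the paper's Lemma~\ref{lemma:dualgap_markov_2_matrix}. Your Bregman calculation for $W_s$ is also right and recovers exactly the paper's Lemma~\ref{lemma:dualgap_matrix}: the log-sum-exp part gives a term controlled by the forward divergence $\KLs{\best{\nu}}{\bnut}$ with coefficient of order $1/((1-\gamma)^2\tau)$, and the negative-entropy part gives $\tau\,\KLs{\bnut}{\best{\nu}}$, the \emph{reverse} divergence.

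The gap is in your treatment of this reverse term. You propose to absorb $\tau\,\KLs{\bzt}{\best{\zeta}}$ into the forward divergence by arguing that ``near the QRE the two directions agree'' or that ``$\tau$ is small so this term is dominated.'' Neither works non-circularly: a local KL-equivalence (as in Lemma~\ref{lemma:KL_local_equiv}) requires the iterate to already be close to $\best{\zeta}$, and replacing the reverse KL by a uniform bound via \eqref{eq:log_bound} gives a constant that does not decay in $t$. Theorem~\ref{thm:Prospero} as stated bounds only the forward direction $\KLrho{\best{\zeta}}{\bzt}$, so you cannot close the argument from it alone.

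The paper's fix is an extra ingredient you are missing: Lemma~\ref{lemma:bar_zeta_2_zeta}, which combined with the intermediate inequality~\eqref{eq:zeta_conv} from the proof of Theorem~\ref{thm:Prospero} yields the joint bound
\[
\tfrac{1}{2}\KLrho{\best{\zeta}}{\bzt}+\eta\tau\,\KLrho{\bzt}{\best{\zeta}}\ \le\ \frac{3000}{(1-\gamma)^2\tau}\Big(1-\frac{(1-\gamma)\eta\tau}{4}\Big)^{t}.
\]
With this in hand, one simply writes $\frac{4}{(1-\gamma)^2\tau}\,\mathrm{fwd}+\tau\,\mathrm{rev}\le \max\{\frac{8}{(1-\gamma)^2\tau},\frac{1}{\eta}\}\big(\tfrac{1}{2}\,\mathrm{fwd}+\eta\tau\,\mathrm{rev}\big)$ and substitutes. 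Note in particular that the $1/\eta$ inside the $\max$ is \emph{not} a smoothness constant of $W_s$ as you suggest; it is exactly the ratio $\tau/(\eta\tau)$ needed to match the reverse-KL coefficient, so its appearance is a signature of the reverse term you tried to drop.
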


\paragraph{Key lemmas.}
While Theorem \ref{thm:Prospero} and \ref{thm:Miranda} focus on the case where $\alpha_i = \eta\tau, \forall i \ge 1$, we assume in the following lemmas that the sequence $\{\alpha_i\}$ is non-increasing and bounded by $\eta\tau$ for generality.
For notational simplicity, we set $\Q{-1} = 0$, $\bar{\zeta}^{(-1)} = \bar{\zeta}^{(0)}$ and $\alpha_0 = 1$. It follows from the update rule \eqref{eq:update} that $\bar{\zeta}^{(1)} = \zeta^{(0)} = \bar{\zeta}^{(0)}$. Let us introduce  
\begin{equation} \label{def:arr_step}
\alpha_{l, t} = 
 \alpha_l \prod_{i=l+1}^{t}(1-\alpha_i), 
\end{equation}
and
\begin{equation} \label{def_lambda_lt}
	\lambda_{l,t} = \alpha_l \prod_{i=l+1}^{t}\Big(1-\frac{1-\gamma}{4}\cdot \alpha_i\Big).
\end{equation}
It follows straightforwardly that 
\[
\sum_{l=0}^t \alpha_{l,t} = \alpha_0 = 1.
\]
We start with the following lemma. 
\begin{lemma}
\label{lemma:core}
Suppose $0 < \eta \le 1/\tau$. It holds for all $t \ge 0$ that
\begin{align} 
			&\KLrho{\best{\zeta}}{\ztp} - (1-\eta\tau)\KLrho{\best{\zeta}}{\zt} +\Big(1-\eta\tau - \frac{4\eta}{1-\gamma}\Big)\KLrho{\bztp}{\bzt}  + \eta\tau \KLrho{\bztp}{\best{\zeta}}  \nonumber \\
		&\qquad+ \Big(1-\frac{2\eta}{1-\gamma}\Big)\KLrho{\ztp}{\bztp} + (1-\eta\tau)\KLrho{\bzt}{\zt}  - \frac{2\eta}{1-\gamma}\KLrho{\bzt}{\bar{\zeta}^{(t-1)}} \nonumber\\
		&\le \exlim{s\sim \rho}{2\eta \normbig{\Q{t+1}(s) - \best{Q}(s)}_\infty + \frac{4\eta^2}{1-\gamma} \normbig{\Q{t}(s) - \Q{t+1}(s)}_\infty + \frac{12\eta^2}{1-\gamma}\normbig{\Q{t-1}(s) - \Q{t}(s)}_\infty}. 	\label{eq:core}
	\end{align}
\end{lemma}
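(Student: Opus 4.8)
The plan is to establish \eqref{eq:core} state by state and then average over $s\sim\rho$. Every quantity in \eqref{eq:core} is either a $\rho$-averaged KL divergence $\KLrho{\cdot}{\cdot}=\ex{s\sim\rho}{\KLs{\cdot}{\cdot}}$ or a $\rho$-average of an $\infty$-norm Q-discrepancy, so it suffices to prove the pointwise statement for a fixed $s$ (with $\KLrho{\cdot}{\cdot}$ replaced by $\KLs{\cdot}{\cdot}$ and the expectation dropped) and then take $\ex{s\sim\rho}{\cdot}$. Fixing $s$ and suppressing the state argument from the policies, the updates \eqref{eq:update}--\eqref{eq:update_bar} are precisely the entropy-regularized OMWU iterations of \citet{cen2021fast} for the matrix game with payoff $\Q{t}(s)$; the only new feature is that this payoff drifts with $t$ and merely approximates the QRE payoff $\best{Q}(s)$. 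I will repeatedly use that $\best{\zeta}=(\best{\mu},\best{\nu})$ is the exact regularized saddle point of the fixed payoff $\best{Q}(s)$, i.e. $\best{\mu}\propto\exp(\tau^{-1}\best{Q}(s)\best{\nu})$ and $\best{\nu}\propto\exp(-\tau^{-1}\best{Q}(s)^\top\best{\mu})$. Note that the value update \eqref{eq:value_update_kaoya} plays no role in this lemma: the Q-discrepancies on the right-hand side are left intact, to be controlled separately by the companion value-error recursion.

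The engine of the proof is the KL three-point identity $\KL{x}{z}-\KL{x}{y}-\KL{y}{z}=\langle\log y-\log z,\,x-y\rangle$. First I would take logarithms of \eqref{eq:update} and \eqref{eq:update_bar} to obtain the first-order identities $\log\mut-(1-\eta\tau)\log\mutm=\eta\,\Q{t}(s)\bnut-\text{const}$ and $\log\bmutp-(1-\eta\tau)\log\mut=\eta\,\Q{t}(s)\bnut-\text{const}$, with the sign flipped for the $\nu$-player. Pairing these identities against the comparator $\best{\zeta}$ and invoking the three-point identity produces the descent term $\KLrho{\best{\zeta}}{\ztp}-(1-\eta\tau)\KLrho{\best{\zeta}}{\zt}$ together with the regularization-driven term $\eta\tau\KLrho{\bztp}{\best{\zeta}}$; the $(1-\eta\tau)$ factor and this extra term are exactly what the $1-\eta\tau$ exponent of the multiplicative update is engineered to generate, and they are the source of the linear contraction. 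Pairing the same identities against the comparator $\bztp$ instead supplies the optimism slack terms $\KLrho{\ztp}{\bztp}$, $\KLrho{\bzt}{\zt}$, and $\KLrho{\bztp}{\bzt}$. Adding the $\mu$-player and $\nu$-player versions, the zero-sum structure causes the bilinear contributions built from $\best{Q}(s)$ to cancel in pairs up to the entropy terms, exactly as in the matrix-game analysis.

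What survives the cancellation is precisely the payoff mismatch. Writing $\Q{t}(s)=\best{Q}(s)+(\Q{t}(s)-\best{Q}(s))$ and telescoping through $\Q{t+1}(s)$ and $\Q{t-1}(s)$ isolates three residual inner products of a policy difference against a Q-difference, each bounded by H\"older's inequality $|\langle p-q,\,Mv\rangle|\le\|p-q\|_1\|Mv\|_\infty\le\|p-q\|_1\|M\|_\infty$ for a distribution $v$. The residual against $\Q{t+1}(s)-\best{Q}(s)$ pairs with an $O(1)$ policy difference (bounded via $\|p-q\|_1\le 2$), yielding $2\eta\normbig{\Q{t+1}(s)-\best{Q}(s)}_\infty$; the residuals against the consecutive differences $\Q{t}(s)-\Q{t+1}(s)$ and $\Q{t-1}(s)-\Q{t}(s)$ pair instead with one-step OMWU moves, which are themselves $O(\eta/(1-\gamma))$ in $\ell_1$ since $\normbig{\Q{t}(s)}_\infty\le\frac{1}{1-\gamma}$, producing the $\frac{\eta^2}{1-\gamma}$-scaled error terms; the presence of $\Q{t-1}(s)$ alongside $\Q{t}(s),\Q{t+1}(s)$ is the two-step memory of the optimistic prediction, which reuses the step-$t$ gradient $\Q{t}(s)\bnut$ at two successive base points. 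The remaining residuals involve only policy differences with coefficient $\eta\normbig{\Q{t}(s)}_\infty\le\frac{\eta}{1-\gamma}$; converting them to KL via Pinsker ($\|\cdot\|_1^2\le 2\KL{\cdot}{\cdot}$) together with an AM-GM split is what generates the slack coefficients $1-\eta\tau-\frac{4\eta}{1-\gamma}$, $1-\frac{2\eta}{1-\gamma}$, and the lone negative term $-\frac{2\eta}{1-\gamma}\KLrho{\bzt}{\bar{\zeta}^{(t-1)}}$, an optimism ``debt'' carried from the previous step that is meant to be cancelled by adjacent-step slack once \eqref{eq:core} is telescoped. I expect the bookkeeping here to be the main obstacle: preserving the optimistic two-step structure while substituting the drifting payoff, and tuning the AM-GM weights so that the surviving coefficients are simultaneously nonnegative for small $\eta$ and matched \emph{exactly} to those in the statement. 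Restoring the $s$-dependence and averaging over $\rho$ then yields \eqref{eq:core}.
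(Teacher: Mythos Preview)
Your plan is correct and matches the paper's proof essentially step for step: fix $s$, take logarithms of the updates and subtract the QRE fixed-point relation, pair against $\bztp(s)-\best{\zeta}(s)$, expand the left side via the KL three-point identity, and bound the two residual cross terms $\langle\log\bztp-\log\ztp,\bztp-\ztp\rangle$ and $\langle\log\zt-\log\bzt,\bztp-\bzt\rangle$ with H\"older plus Pinsker plus the one-step policy bounds $\normbig{\bmutp-\mutp}_1\le\frac{2\eta}{1-\gamma}$, $\normbig{\bmutp-\bmut}_1\le\frac{6\eta}{1-\gamma}$. The only minor inaccuracy is your description of the three $Q$-error terms as arising from a single ``telescoping through $\Q{t+1}$ and $\Q{t-1}$'': in fact the $2\eta\normbig{\Q{t+1}-\best{Q}}_\infty$ term appears directly in the main pairing (since the $\ztp$-update uses $\Q{t+1}$), while the $\Q{t}-\Q{t+1}$ and $\Q{t-1}-\Q{t}$ contributions arise separately from the first and second cross terms respectively; but this is a cosmetic bookkeeping point, not a gap.
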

\begin{proof}
	See Appendix \ref{sec:pf_lemma_core}.
\end{proof}
We continue to bound the terms on the right hand side of \eqref{eq:core}. By a slight abuse of notation, we denote
\[
\normbig{\Q{t+1} - \best{Q}}_{\Gamma(\rho)} = \sup_{\chi\in\Gamma(\rho)}\exlim{s\sim\chi}{\normbig{\Q{t+1}(s) - \best{Q}(s)}_\infty},
\]
and
\[
\normbig{\Q{t+1} - \Q{t}}_{\Gamma(\rho)} = \sup_{\chi\in\Gamma(\rho)}\exlim{s\sim\chi}{\normbig{\Q{t+1}(s) - \Q{t}(s)}_\infty}.
\]
The following two lemmas establish a set of recursive bounds that relate $\big\{\normbig{\Q{l+1} - \best{Q}}_{\Gamma(\rho)}\big\}_{0\leq l \leq t}$ and $\big\{\normbig{\Q{l+1} - \Q{l}}_{\Gamma(\rho)}\big\}_{0\leq l \leq t}$ with $\big\{\KLrho{\bar{\zeta}^{(l+1)}}{\bar{\zeta}^{(l)}}\big\}_{0\leq l \leq t-1}$. 

\begin{lemma}
\label{lemma:Laertes}
	Suppose that $0 < \eta \le \min\{{(1-\gamma)}/{180}, {(1-\gamma)^2}/{48}\}$. It holds for all $t \ge 1$ that
\begin{equation}
	\normbig{\Q{t+1} - \Q{t}}_{\Gamma(\rho)} \le \frac{1+\gamma}{2} \sum_{l=1}^{t} \alpha_{l, t}\normbig{\Q{l} - \Q{l-1}}_{\Gamma(\rho)} + \frac{4\cC_\rho}{\eta} \cdot \sum_{l=1}^{t} \alpha_{l, t}\KLrho{\bar{\zeta}^{(l)}}{\bar{\zeta}^{(l-1)}}.
    \label{eq:Laertes}
\end{equation} 
When $t = 0$, we have $\normbig{\Q{1} - \Q{0}}_{\Gamma(\rho)} \le 2$.
\end{lemma}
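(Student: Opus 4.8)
The plan is to differentiate the value recursion \eqref{eq:value_update_kaoya} between consecutive iterations and propagate the resulting one-step changes. Writing $\delta_Q^{(t+1)}(s) = \Q{t+1}(s) - \Q{t}(s)$ and $\delta_V^{(t)}(s) = \V{t}(s) - \V{t-1}(s)$, the Bellman-type update gives $\delta_Q^{(t+1)}(s,a,b) = \gamma\,\mathbb{E}_{s'\sim P(\cdot|s,a,b)}\!\left[\delta_V^{(t)}(s')\right]$. Since the maximizing action pair in $\max_{a,b}\sum_{s'}P(s'|s,a,b)\,\lvert\delta_V^{(t)}(s')\rvert$ realizes a deterministic policy, and $\Gamma(\rho)$ is closed under applying deterministic transition matrices, I would first establish the contraction $\normbig{\delta_Q^{(t+1)}}_{\Gamma(\rho)}\le \gamma\,\normbig{\delta_V^{(t)}}_{\Gamma(\rho)}$, abusing $\normbig{\cdot}_{\Gamma(\rho)}$ for the scalar field $\delta_V^{(t)}$. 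This reduces the whole lemma to controlling the value increments $\delta_V^{(t)}$.

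Next I would unroll $\delta_V^{(t)}$. Let $g^{(l)}(s)$ denote the bracketed entropy-regularized game value in \eqref{eq:value_update_kaoya} at step $l$ (so $g^{(0)}(s)=\V{0}(s)$), and set $\delta_g^{(j)} = g^{(j)} - g^{(j-1)}$. From $\V{l}(s) = (1-\alpha_l)\V{l-1}(s) + \alpha_l\, g^{(l)}(s)$ a short computation yields $\delta_V^{(t)}(s) = \sum_{j=1}^{t}\tilde\alpha_{j,t}\,\delta_g^{(j)}(s)$ with $\tilde\alpha_{j,t} = \alpha_t\prod_{i=j}^{t-1}(1-\alpha_i)$, and since $\{\alpha_i\}$ is non-increasing one checks $\tilde\alpha_{j,t}\le\alpha_{j,t}$, with equality when $\alpha_i\equiv\eta\tau$. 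I then decompose $\delta_g^{(j)}(s)$ by inserting $\Q{j-1}(s)$ as the payoff while moving the policy: a \emph{payoff-change} part $\bar\mu^{(j)}(s)^\top\delta_Q^{(j)}(s)\,\bar\nu^{(j)}(s)$, bounded by $\normbig{\delta_Q^{(j)}(s)}_\infty$, and a \emph{policy-change} part equal to the increment of $f_{\Q{j-1}(s)}(\mu,\nu):=\mu^\top \Q{j-1}(s)\nu + \tau\cH(\mu) - \tau\cH(\nu)$ from $(\bar\mu^{(j-1)},\bar\nu^{(j-1)})$ to $(\bar\mu^{(j)},\bar\nu^{(j)})$. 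Combining the payoff-change part with the factor $\gamma$ from the contraction and $\tilde\alpha_{j,t}\le\alpha_{j,t}$ yields the first term $\tfrac{1+\gamma}{2}\sum_{l}\alpha_{l,t}\normbig{\Q{l}-\Q{l-1}}_{\Gamma(\rho)}$, the slack $\tfrac{1+\gamma}{2}-\gamma=\tfrac{1-\gamma}{2}$ being discarded.

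The crux is bounding the policy-change part \emph{linearly} in KL rather than through $\ell_1$/Pinsker, which would only give a square root. Here I would exploit the mirror-descent form of \eqref{eq:update_bar}: taking logarithms gives $\eta\,[\Q{j-1}(s)\bar\nu^{(j-1)}(s)]_a = \log\bar\mu^{(j)}(a|s) - (1-\eta\tau)\log\mu^{(j-1)}(a|s) + \mathrm{const}$, which expresses the payoff gradient as a scaled difference of log-policies, and it is precisely this that produces the $1/\eta$ prefactor. Substituting this identity (and its min-player counterpart) into the first-order Bregman expansion of $f_{\Q{j-1}(s)}$ converts each gradient-times-policy-difference term into inner products of log-policy ratios against policy increments, i.e. into $\KLs{\cdot}{\cdot}$ terms. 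After collecting, the policy-change part is bounded pointwise by $\tfrac{C}{\eta}\KLs{\bar\zeta^{(j)}}{\bar\zeta^{(j-1)}}$ for an absolute constant $C$; taking the $\Gamma(\rho)$ supremum and using $\chi(s)\le \mathcal{C}_\rho\,\rho(s)$ for $\chi\in\Gamma(\rho)$ turns $\mathbb{E}_{s\sim\chi}$ into $\mathcal{C}_\rho\,\mathbb{E}_{s\sim\rho}$, delivering the second term $\tfrac{4\mathcal{C}_\rho}{\eta}\sum_{l}\alpha_{l,t}\KLrho{\bar\zeta^{(l)}}{\bar\zeta^{(l-1)}}$.

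I expect the main obstacle to be the bookkeeping in this last step: the expansion point $(\bar\mu^{(j-1)},\bar\nu^{(j-1)})$ differs from the update base $(\mu^{(j-1)},\nu^{(j-1)})$, and the substituted gradient carries $\bar\nu^{(j-1)}$ whereas the exact gradient of $f$ at $\bar\mu^{(j)}$ involves $\bar\nu^{(j)}$; reconciling these mismatches spawns auxiliary KL cross-terms that must be folded back into $\KLs{\bar\zeta^{(j)}}{\bar\zeta^{(j-1)}}$ using the step-size restriction $\eta\le\min\{(1-\gamma)/180,(1-\gamma)^2/48\}$. Pinning the constant to exactly $4$ and verifying that no residual $\ell_1$ term survives is the delicate part. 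The base case $t=0$ is immediate: $\Q{0}=0$ and $\Q{1}(s,a,b)=r(s,a,b)+\gamma\tau(\log|\cA|-\log|\cB|)$, so $\normbig{\Q{1}-\Q{0}}_\infty\le 1+\gamma\tau(\log|\cA|+\log|\cB|)\le 2$ by \eqref{eq:pomelo}, which gives $\normbig{\Q{1}-\Q{0}}_{\Gamma(\rho)}\le 2$.
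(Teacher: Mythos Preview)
Your high-level strategy matches the paper's: reduce to $\gamma\,\normbig{\delta_V^{(t)}}_{\Gamma(\rho)}$, unroll $\delta_V^{(t)}=\sum_j \tilde\alpha_{j,t}\,\delta_g^{(j)}$ with $\tilde\alpha_{j,t}\le\alpha_{j,t}$, split $\delta_g^{(j)}$ into a payoff-change and a policy-change piece, and control the latter via the three-point/mirror-descent identity. Two points in your write-up are not quite right, however, and they are exactly where the two step-size constraints enter.

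First, the policy-change part is \emph{not} bounded purely by $\tfrac{C}{\eta}\KLs{\bar\zeta^{(j)}}{\bar\zeta^{(j-1)}}$. The way the paper reconciles the $\mu^{(j-1)}$-vs-$\bar\mu^{(j-1)}$ mismatch you flag is to write the single-step relation $\bar\mu^{(j)}\propto \bar\mu^{(j-1)}\exp(\eta w^{(j-1)})$ with
\[
w^{(j-1)}=\Q{j-1}(s)\bar\nu^{(j-1)}(s)+(1-\eta\tau)\bigl(\Q{j-1}(s)\bar\nu^{(j-1)}(s)-\Q{j-2}(s)\bar\nu^{(j-2)}(s)\bigr)-\tau\log\bar\mu^{(j-1)}(s);
\]
see \eqref{eq:mu_update_single_ver}. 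The optimistic correction term in $w^{(j-1)}$ leaves, after applying the three-point identity and H\"older, a residual of order $\tfrac{\eta}{1-\gamma}\normbig{\Q{j-1}(s)-\Q{j-2}(s)}_\infty$ together with an index-shifted KL term $\KLs{\bar\zeta^{(j-1)}}{\bar\zeta^{(j-2)}}$ (this is the content of Lemma~\ref{lemma:f_diff}). After shifting indices and summing, the total coefficient on the $Q$-differences becomes $\gamma\bigl(1+\tfrac{24\eta}{1-\gamma}\bigr)$, and the constraint $\eta\le(1-\gamma)^2/48$ is precisely what forces this to be at most $\tfrac{1+\gamma}{2}$. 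So the slack $\tfrac{1-\gamma}{2}$ is \emph{not} discarded---it is consumed by these $Q$-difference residuals from the policy-change part.

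Second, the three-point identity applied to $\bar\mu^{(j-1)}\to\bar\mu^{(j)}$ outputs both $\KLs{\bar\mu^{(j-1)}}{\bar\mu^{(j)}}$ and $\KLs{\bar\mu^{(j)}}{\bar\mu^{(j-1)}}$ (cf.\ \eqref{eq:f_mu_diff_tmp}). Collapsing these to a single direction, as needed for the stated bound, uses the local equivalence $\KLs{\bar\mu^{(j-1)}}{\bar\mu^{(j)}}\le 2\,\KLs{\bar\mu^{(j)}}{\bar\mu^{(j-1)}}$ (Lemma~\ref{lemma:KL_local_equiv}), which holds when $\normbig{\eta w^{(j-1)}}_\infty\le 1/30$; since $\normbig{w^{(j-1)}}_\infty\le 6/(1-\gamma)$, this is where $\eta\le(1-\gamma)/180$ enters. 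Your proposal does not account for this argument-order swap.
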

\begin{proof}
	See Appendix \ref{sec:pf_lemma_Laertes}.
\end{proof}

\begin{lemma}
\label{lemma:Ophelia}
	Suppose that $0 < \eta \le (1-\gamma)^2/16$. It holds for all $t \ge 1$ that
\begin{align}
	\normbig{\Q{t+1} - \best{Q}}_{\Gamma(\rho)}  
	&\le \frac{1+\gamma}{2}\cdot{\sum_{l=0}^t\alpha_{l,t}}\Big(\normbig{\Q{l} - \best{Q}}_{\Gamma(\rho)} +  \frac{2\eta}{1-\gamma}\normbig{\Q{l} - \Q{l-1}}_{\Gamma(\rho)}\Big) +  2\alpha_{0,t}.
	\label{eq:Ophelia}
\end{align}
When $t = 0$, we have $\normbig{\Q{1} - \best{Q}}_{\Gamma(\rho)} \le \frac{2\gamma}{1-\gamma}.$
\end{lemma}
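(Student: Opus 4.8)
The plan is to establish Lemma~\ref{lemma:Ophelia} by relating the one-step Bellman update of $\Q{t+1}$ to its target $\best{Q}$, and then propagating the policy error through the state-visitation distributions in $\Gamma(\rho)$. First I would expand $\Q{t+1}(s)$ using the value update rule \eqref{eq:value_update_kaoya}. Since $\Q{t+1}(s,a,b) = r(s,a,b) + \gamma\,\mathbb{E}_{s'\sim P(\cdot|s,a,b)}[\V{t}(s')]$, and $\best{Q}$ satisfies the analogous fixed-point relation with $\best{V}$, the difference $\Q{t+1}(s) - \best{Q}(s)$ is exactly $\gamma$ times a transition-weighted average of $\V{t}(s') - \best{V}(s')$. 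The factor $\gamma$ is what will eventually produce the contraction, but because we measure errors in the $\Gamma(\rho)$-norm (a supremum over all reachable visitation distributions), pushing the expectation over $P(\cdot|s,a,b)$ through the supremum keeps us inside $\Gamma(\rho)$, yielding $\normbig{\Q{t+1} - \best{Q}}_{\Gamma(\rho)} \le \gamma\,\normbig{\V{t} - \best{V}}_{\Gamma(\rho)}$ after suitable definitions.

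The crux is then to bound $\V{t}(s) - \best{V}(s)$. Here I would unroll the convex-combination update for $\V{t}$: since $\V{t}(s) = (1-\alpha_t)\V{t-1}(s) + \alpha_t[\,\bmut(s)^\top \Q{t}(s)\bnut(s) + \tau\ent{\bmut(s)} - \tau\ent{\bnut(s)}\,]$, iterating gives $\V{t}(s) = \sum_{l=0}^t \alpha_{l,t}\,g_l(s)$, where $g_l(s)$ is the regularized game value of the matrix $\Q{l}(s)$ evaluated at the policy pair $\bzt$ at step $l$ (using $\sum_l \alpha_{l,t} = 1$ and $\alpha_0 = 1$ for the initial term). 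Comparing each $g_l(s)$ to $\best{V}(s) = \best{\mu}(s)^\top\best{Q}(s)\best{\nu}(s) + \tau\ent{\best{\mu}(s)} - \tau\ent{\best{\nu}(s)}$, I would split the discrepancy into two pieces: a \emph{matrix-error} part coming from $\Q{l}(s) - \best{Q}(s)$, which is controlled entrywise and contributes the $\normbig{\Q{l} - \best{Q}}_{\Gamma(\rho)}$ term; and a \emph{policy-suboptimality} part measuring how far $\bzt$ is from the saddle point of the game defined by $\Q{l}(s)$. The latter is where the entropy regularization pays off: strong convexity/concavity of the regularized objective lets me bound the suboptimality by the movement of the policies, and the OMWU structure relates this movement to $\normbig{\Q{l} - \Q{l-1}}_{\Gamma(\rho)}$, producing the $\frac{2\eta}{1-\gamma}\normbig{\Q{l} - \Q{l-1}}_{\Gamma(\rho)}$ term.

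Combining the two displays and absorbing $\gamma \le \frac{1+\gamma}{2}$ (valid since $\gamma < 1$) gives the stated coefficient $\frac{1+\gamma}{2}$ in front of the weighted sum, with the residual $2\alpha_{0,t}$ accounting for the crude initial bound $\normbig{\Q{0} - \best{Q}}_{\Gamma(\rho)}$ together with the initialization $\V{0} = \tau(\log|\cA| - \log|\cB|)$. The step-size restriction $\eta \le (1-\gamma)^2/16$ should be exactly what is needed to ensure the policy-suboptimality bound in terms of $\normbig{\Q{l}-\Q{l-1}}_{\Gamma(\rho)}$ holds with the clean constant $\frac{2\eta}{1-\gamma}$. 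The base case $t=0$ follows directly: $\V{0}-\best{V}$ is bounded by $\frac{1}{1-\gamma}$ in magnitude (rewards lie in $[0,1]$ and the entropy terms cancel appropriately at initialization), so $\normbig{\Q{1}-\best{Q}}_{\Gamma(\rho)} \le \gamma\cdot\frac{1}{1-\gamma} = \frac{\gamma}{1-\gamma}$, though a slightly looser accounting gives the stated $\frac{2\gamma}{1-\gamma}$. I expect the main obstacle to be the policy-suboptimality bound: carefully quantifying how the saddle-point suboptimality of $\bzt$ for the game $\Q{l}(s)$ translates into the KL/consecutive-$Q$ movement, while keeping every constant tight enough to land on $\frac{2\eta}{1-\gamma}$, and ensuring all the state-wise arguments lift cleanly to the $\Gamma(\rho)$-norm via the concentrability structure.
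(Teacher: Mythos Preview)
Your high-level decomposition is right, but the step you flag as ``the main obstacle'' is in fact a genuine gap. The policy-suboptimality piece $f_s(\Q{l},\bar\mu^{(l)},\bar\nu^{(l)}) - f_s(\best{Q},\best{\mu},\best{\nu})$ cannot be bounded step-by-step by $\frac{2\eta}{1-\gamma}\normbig{\Q{l}-\Q{l-1}}_\infty$ alone: if $\Q{l}=\Q{l-1}$ the right-hand side vanishes, yet the suboptimality can still be large because $\bar\zeta^{(l)}$ need not be near the saddle of $\best{Q}$. Strong concavity/convexity of the regularized objective does not convert suboptimality into consecutive-$Q$ movement; it converts it into KL divergences to the \emph{optimal} policy. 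What the paper actually does is invoke a one-step regret lemma (Lemma~\ref{lemma:one_step_regret}) that bounds $f_s(\Q{l},\bar\mu^{(l)},\bar\nu^{(l)})-f_s(\Q{l},\bar\mu^{(l)},\best{\nu})$ by a term $\frac{2\eta}{1-\gamma}\normbig{\Q{l}-\Q{l-1}}_\infty$ \emph{plus} a difference $\frac{1-\eta\tau}{\eta}\KLs{\best{\nu}}{\nu^{(l-1)}}-\frac{1}{\eta}\KLs{\best{\nu}}{\nu^{(l)}}$ and some cross-player KL terms. The cross terms do not telescope by themselves; to kill them the paper forms the asymmetric combination \eqref{eq:opt_gap_pos}~$+\,\tfrac{1-\gamma}{4}\cdot$\eqref{eq:opt_gap_neg}, introduces a potential $G^{(l)}(s)$ collecting all KL-to-optimal terms, and obtains the recursion $(1-\tfrac{1-\gamma}{4})(f_s^{(l)}-f_s^\star)\le (1+\tfrac{1-\gamma}{4})[\cdots]+(1-\eta\tau)G^{(l-1)}(s)-G^{(l)}(s)$. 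The step-size restriction $\eta\le(1-\gamma)^2/16$ is precisely what makes the cross-term coefficients nonpositive after this weighting.

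Two consequences of this that contradict your sketch: (i) the factor $\frac{1+\gamma}{2}$ does \emph{not} arise from the crude bound $\gamma\le\frac{1+\gamma}{2}$, but from $\gamma\cdot\frac{1+(1-\gamma)/4}{1-(1-\gamma)/4}\le\frac{1+\gamma}{2}$, i.e.\ the price of the asymmetric mixing; and (ii) the residual $2\alpha_{0,t}$ does not come from the initial matrix error $\normbig{\Q{0}-\best{Q}}$ (that term is absorbed into the $l=0$ summand), but from the boundary of the telescoping sum: $\sum_{l=1}^t\alpha_{l,t}[(1-\eta\tau)G^{(l-1)}-G^{(l)}]\le \alpha_{1,t}G^{(0)}\le 2\alpha_{0,t}$, using $(1-\eta\tau)\alpha_{l+1,t}\le\alpha_{l,t}$ and $\eta\tau G^{(0)}(s)\le 1$. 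Without this potential-function telescoping, you have no mechanism to make the accumulated policy suboptimality disappear from the final bound.
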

\begin{proof}
	See Appendix \ref{sec:pf_lemma_Ophelia}.
\end{proof}
The following lemma further demystifies the complicated recursive bounds showed in Lemmas \ref{lemma:Laertes}-\ref{lemma:Ophelia}. 
\begin{lemma}
	\label{lemma:Claudius}
	Under the assumption of Lemma \ref{lemma:Laertes} and \ref{lemma:Ophelia}, it holds for all $t \ge 0$ that
	\begin{align*}
	&\sum_{l=0}^{t}\lambda_{l+1, t+1} \Big[\eta\normbig{\best{Q} - \Q{l+1}}_{\Gamma(\rho)} + \frac{12\eta^2}{(1-\gamma)^2}\normbig{\Q{l+1} - \Q{l}}_{\Gamma(\rho)}\Big]\\
	&\le \frac{6250\eta\cC_\rho}{(1-\gamma)^3}\sum_{l=0}^{t-1}\lambda_{l+1,t+1}\KLrho{\bar{\zeta}^{(l+1)}}{\bar{\zeta}^{(l)}}  + \frac{550\eta}{(1-\gamma)^2}\lambda_{0, t+1}
	\end{align*}
\end{lemma}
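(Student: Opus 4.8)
The plan is to collapse the two coupled recursions of Lemmas~\ref{lemma:Laertes} and~\ref{lemma:Ophelia} into a single scalar recursion for the combined error
\[
\Phi_m := \eta\normbig{\best{Q}-\Q{m}}_{\Gamma(\rho)} + \frac{12\eta^2}{(1-\gamma)^2}\normbig{\Q{m}-\Q{m-1}}_{\Gamma(\rho)},
\]
so that the quantity to be bounded is exactly $\sum_{m=1}^{t+1}\lambda_{m,t+1}\Phi_m$. Multiplying Lemma~\ref{lemma:Ophelia} by $\eta$ and Lemma~\ref{lemma:Laertes} by $\frac{12\eta^2}{(1-\gamma)^2}$ and adding, the term $\frac{2\eta}{1-\gamma}\normbig{\Q{l}-\Q{l-1}}_{\Gamma(\rho)}$ inside Ophelia's bound combines with Laertes' difference term, so the coefficient of $\normbig{\Q{l}-\Q{l-1}}_{\Gamma(\rho)}$ becomes $\frac{12\eta^2}{(1-\gamma)^2}\big(1+\tfrac{1-\gamma}{6}\big)$; since $1\le 1+\tfrac{1-\gamma}{6}$, the whole bracket is at most $\big(1+\tfrac{1-\gamma}{6}\big)\Phi_l$. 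This yields, for every $m\ge 2$, the master recursion
\[
\Phi_m \le \beta\sum_{l=0}^{m-1}\alpha_{l,m-1}\Phi_l + Z_m, \qquad \beta := \frac{1+\gamma}{2}\Big(1+\tfrac{1-\gamma}{6}\Big)=\frac{(1+\gamma)(7-\gamma)}{12},
\]
with driving term $Z_m := 2\eta\,\alpha_{0,m-1}+\frac{48\eta\cC_\rho}{(1-\gamma)^2}\sum_{l=1}^{m-1}\alpha_{l,m-1}\KLrho{\bar{\zeta}^{(l)}}{\bar{\zeta}^{(l-1)}}$; the low-order cases $m=0,1$ are covered by the explicit base bounds $\normbig{\Q{1}-\Q{0}}_{\Gamma(\rho)}\le 2$, $\normbig{\Q{1}-\best{Q}}_{\Gamma(\rho)}\le \frac{2\gamma}{1-\gamma}$ and $\normbig{\best{Q}}_{\Gamma(\rho)}=\cO(\frac{1}{1-\gamma})$.

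The heart of the argument is a \emph{reweighting identity} that trades the $\alpha$-weights inside $\sum_l\alpha_{l,m-1}\Phi_l$ for $\lambda$-weights. By writing out the two products and summing the resulting geometric series I will show that, for every $0\le l\le t$,
\[
\sum_{m=l+1}^{t+1}\lambda_{m,t+1}\,\alpha_{l,m-1}\ \le\ \frac{4}{3+\gamma}\,\lambda_{l+1,t+1};
\]
the constant $\frac{4}{3+\gamma}$ appears because the per-step ratio of the two decay factors, $\frac{1-\alpha_i}{1-\frac{1-\gamma}{4}\alpha_i}$, has complementary factor $\frac{3+\gamma}{4}\alpha_i$, so in the constant-step case $\alpha_i\equiv\eta\tau$ the geometric sum telescopes to the claimed multiple of $\lambda_{l+1,t+1}$, and the general non-increasing case follows by a termwise comparison. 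Applying the master recursion to each term of $\sum_{m=2}^{t+1}\lambda_{m,t+1}\Phi_m$, swapping the order of summation, and invoking the reweighting identity then gives
\[
\sum_{m=1}^{t+1}\lambda_{m,t+1}\Phi_m \le \frac{4\beta}{3+\gamma}\sum_{l=0}^{t}\lambda_{l+1,t+1}\Phi_l \;+\; \sum_{m=1}^{t+1}\lambda_{m,t+1}Z_m \;+\; (\text{base-case terms}),
\]
where the separated $m=1$ piece $\lambda_{1,t+1}\Phi_1$ is collected among the base-case terms.

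To close this self-referential inequality I will re-align the shifted index via $\lambda_{l+1,t+1}\le(1-\frac{1-\gamma}{4}\eta\tau)^{-1}\lambda_{l,t+1}$, which bounds $\sum_{l=0}^{t}\lambda_{l+1,t+1}\Phi_l$ by $(1-\frac{1-\gamma}{4}\eta\tau)^{-1}\sum_{m=1}^{t+1}\lambda_{m,t+1}\Phi_m$ plus the single separated term $\lambda_{1,t+1}\Phi_0$. Moving the $\Phi$-sum to the left-hand side leaves the multiplier $1-\frac{4\beta}{(3+\gamma)(1-\frac{1-\gamma}{4}\eta\tau)}$; the algebraic identity $\frac{4\beta}{3+\gamma}=1-\frac{(1-\gamma)(2-\gamma)}{3(3+\gamma)}$ shows this multiplier is at least $\frac{1-\gamma}{24}$ once $\eta\tau$ is small enough — guaranteed since $\eta\le\frac{(1-\gamma)^3}{32000\cC_\rho}$, $\cC_\rho\ge1$ and $\tau\le 1$. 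Dividing through by this $\Theta(1-\gamma)$ factor, applying the reweighting identity once more to turn the inner $\alpha$-weighted KL sum in $\sum_m\lambda_{m,t+1}Z_m$ into $\sum_{m=1}^{t}\lambda_{m,t+1}\KLrho{\bar{\zeta}^{(m)}}{\bar{\zeta}^{(m-1)}}$, and absorbing the $\alpha_{0,m-1}$ term together with the $m=0,1$ base-case contributions into $\lambda_{0,t+1}$ (using $\lambda_{1,t+1}\le 2\eta\tau\,\lambda_{0,t+1}$ and $\Phi_0=\cO(\frac{\eta}{1-\gamma})$) produces the two advertised terms; the stated constants $6250$ and $550$ leave ample slack over the $\approx1536$ and $\cO(1)$ the bookkeeping actually requires.

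The main obstacle is the reweighting identity together with the verification that the recursion closes: one must confirm that the product $\frac{4\beta}{3+\gamma}$ of the combined contraction factor $\beta$ — itself the discount-induced $\frac{1+\gamma}{2}$ inflated by merging the two lemmas — and the reweighting constant $\frac{4}{3+\gamma}$ stays strictly below $1$ uniformly in $\gamma\in[0,1)$. The available margin $\frac{(1-\gamma)(2-\gamma)}{3(3+\gamma)}$ degrades like $1-\gamma$ as $\gamma\to 1$, which is exactly what dictates the $\frac{1}{(1-\gamma)^3}$ and $\frac{1}{(1-\gamma)^2}$ dependence of the final constants. The remaining delicate points are tracking the index shift between $\lambda_{l+1,t+1}$ and $\lambda_{l,t+1}$ and confirming that the reweighting bound survives for general non-increasing step sizes rather than only for $\alpha_i\equiv\eta\tau$.
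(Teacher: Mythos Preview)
Your approach is correct and genuinely different from the paper's, though one stated ingredient needs a small correction.

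\textbf{Comparison with the paper.} The paper defines the same scalar $u_m=\Phi_m$ and the same master recursion (it rounds your $\beta=\frac{(1+\gamma)(7-\gamma)}{12}$ up to $1-\frac{1-\gamma}{3}$). Instead of going directly from $\alpha$-weights to $\lambda$-weights, the paper inserts an \emph{intermediate} weight sequence $\beta_{l,t}=\alpha_l\prod_{i>l}(1-\tfrac{1-\gamma}{3}\alpha_i)$, matched to the contraction rate of the recursion. It first observes that $A_t:=\sum_{l}\alpha_{l,t}u_l$ satisfies $A_{t+1}\le(1-\tfrac{1-\gamma}{3}\alpha_{t+1})A_t+\alpha_{t+1}D_t$, unrolls this to bound $A_t$ (and hence each $u_{t+1}$) by $\beta$-weighted sums of the KL terms, and only then integrates against $\lambda$ using the generic Lemma~\ref{lemma:cabbage}. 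That lemma is applied twice: once for $\alpha\to\beta$ (gap $c_2-c_1=\tfrac{2+\gamma}{3}$, constant $\le 4$, giving the $200$ and $250$), and once for $\beta\to\lambda$ (gap $c_2-c_1=\tfrac{1-\gamma}{12}$, constant $\le \tfrac{25}{1-\gamma}$, giving the $6250$ and $550$). Your route skips the $\beta$-layer entirely: the single tight reweighting plus the closing step does the same work. The paper's two-stage route is more modular (Lemma~\ref{lemma:cabbage} is reusable and it never has to ``close'' a self-referential inequality), while yours is shorter and yields visibly sharper constants---the paper's $6250$ arises precisely because its second application of Lemma~\ref{lemma:cabbage} has gap $O(1-\gamma)$, whereas your direct $\alpha\to\lambda$ conversion has gap $\tfrac{3+\gamma}{4}$ and the $1/(1-\gamma)$ loss appears only once, in the closing step.

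\textbf{A correction to your reweighting identity.} As stated, $\sum_{m=l+1}^{t+1}\lambda_{m,t+1}\alpha_{l,m-1}\le\frac{4}{3+\gamma}\lambda_{l+1,t+1}$ is fine for constant $\alpha_i$ but need not hold for general non-increasing $\alpha_i$ (since $\lambda_{l,t+1}/\lambda_{l+1,t+1}=\frac{\alpha_l}{\alpha_{l+1}}(1-\tfrac{1-\gamma}{4}\alpha_{l+1})$ can exceed $1$). The identity that does hold in general---and is in fact an exact telescoping bound---is
\[
\sum_{m=l+1}^{t+1}\lambda_{m,t+1}\,\alpha_{l,m-1}
=\frac{4}{3+\gamma}\,\lambda_{l,t+1}\Bigl(1-\prod_{i=l+1}^{t+1}\tfrac{1-\alpha_i}{1-\frac{1-\gamma}{4}\alpha_i}\Bigr)
\le \frac{4}{3+\gamma}\,\lambda_{l,t+1},
\]
obtained by writing $\frac{\alpha_m}{1-\frac{1-\gamma}{4}\alpha_m}=\frac{4}{3+\gamma}(1-r_m)$ with $r_m=\frac{1-\alpha_m}{1-\frac{1-\gamma}{4}\alpha_m}$. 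This version actually \emph{removes} the index-shift step you planned: after swapping sums you get $\frac{4\beta}{3+\gamma}\sum_{l\ge0}\lambda_{l,t+1}\Phi_l$, which is already bounded by $\frac{4\beta}{3+\gamma}(S+\lambda_{0,t+1}\Phi_0)$, so the closing inequality goes through with $1-\frac{4\beta}{3+\gamma}=\frac{(1-\gamma)(2-\gamma)}{3(3+\gamma)}\ge\frac{1-\gamma}{12}$ directly, without the $(1-\tfrac{1-\gamma}{4}\eta\tau)^{-1}$ fudge. With this fix your proposal is complete.
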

\begin{proof}
	See Appendix \ref{sec:pf_lemma_Claudius}.
\end{proof}

\paragraph{Proof of Theorem~\ref{thm:Prospero}.} We are now ready to prove our main results. Starting with Lemma~\ref{lemma:core}, averaging \eqref{eq:core} with the weights $\lambda_{l,t}$ gives
\begin{align*}		
	&\sum_{l=0}^{t}\lambda_{l+1,t+1}\bigg[\KLrho{\best{\zeta}}{\zeta^{(l+1)}} - (1-\eta\tau)\KLrho{\best{\zeta}}{\zeta^{(l)}}\\
	&\qquad +\Big(1-\frac{2\eta}{1-\gamma}\Big)\KLrho{\zeta^{(l+1)}}{\bar{\zeta}^{(l+1)}} + 3\eta \exlim{s\sim \rho}{\normbig{\Q{l+1}(s) - \best{Q}(s)}_\infty}\\
	&\qquad +\Big(1-\eta\tau - \frac{4\eta}{1-\gamma}\Big)\KLrho{\bar{\zeta}^{(l+1)}}{\bar{\zeta}^{(l)}} - \frac{2\eta}{1-\gamma}\KLrho{\bar{\zeta}^{(l)}}{\bar{\zeta}^{(l-1)}}\bigg]\\
	&\le \sum_{l=0}^{t} \lambda_{l+1,t+1} \exlim{s\sim\rho}{ 5\eta \normbig{\Q{l+1}(s) - \best{Q}(s)}_\infty + \frac{4\eta^2}{1-\gamma} \normbig{\Q{l+1}(s) - \Q{l}(s)}_\infty+ \frac{13\eta^2}{1-\gamma}\normbig{\Q{l-1}(s) - \Q{l}(s)}_\infty}\\
	&\le 5\sum_{l=0}^{t} \lambda_{l+1,t+1}\Big[\eta\normbig{\best{Q} - \Q{l+1}}_{\Gamma(\rho)} + \frac{12\eta^2}{(1-\gamma)^2}\normbig{\Q{l+1} - \Q{l}}_{\Gamma(\rho)}\Big] \\
	&\le \frac{31250\eta \cC_\rho}{(1-\gamma)^3}\sum_{l=0}^{t-1}\lambda_{l+1,t+1}\KLrho{\bar{\zeta}^{(l+1)}}{\bar{\zeta}^{(l)}}  + \frac{2750\eta}{(1-\gamma)^2}\lambda_{0, t+1}
\end{align*}
for all $t \ge 0$, where the last line follows from Lemma~\ref{lemma:Claudius}. Rearranging terms, we have
\begin{align*}
	& \alpha_{t+1} \Big[\KLrho{\best{\zeta}}{\ztp} +\Big(1-\frac{2\eta}{1-\gamma}\Big)\KLrho{\ztp}{\bztp} + 3\eta\exlim{s\sim \rho}{\normbig{\Q{t+1}(s) - \best{Q}(s)}_\infty}\Big]\\
	&\qquad + \sum_{l=1}^{t} (\lambda_{l,t+1} - (1-\eta\tau)\lambda_{l+1,t+1})\KLrho{\best{\zeta}}{\zeta^{(l)}}\\
	&\qquad + \sum_{l=0}^{t-1}\bigg[\lambda_{l+1,t+1}\Big(1-\eta\tau - \frac{4\eta}{1-\gamma} - \frac{31250\eta\cC_\rho}{(1-\gamma)^3}\Big)-\lambda_{l+2,t+1}\frac{2\eta}{1-\gamma}\bigg]\KL{\bar{\zeta}^{(l+1)}}{\bar{\zeta}^{(l)}}\\
	&\le \frac{2750\eta}{(1-\gamma)^2}\lambda_{0, t+1} + (1-\eta\tau)\lambda_{1,t+1}\KLrho{\best{\zeta}}{\zeta^{(0)}} \le \Big(\frac{2750\eta}{(1-\gamma)^2} + \eta\Big)\lambda_{0, t+1}.
\end{align*}
Here, the last step results from 
\begin{align*}
(1-\eta\tau)\lambda_{1,t+1}\KLrho{\best{\zeta}}{\zeta^{(0)}} 
&= \alpha_1\cdot \frac{1-\eta\tau}{1-(1-\gamma)\alpha_1/4}\lambda_{0,t+1}\KLrho{\best{\zeta}}{\zeta^{(0)}}\\
&\le \eta \tau \lambda_{0,t+1}\KLrho{\best{\zeta}}{\zeta^{(0)}} \le \eta\tau (\log|\cA| + \log|\cB|)\lambda_{0,t+1} \le \eta \lambda_{0,t+1}.
\end{align*}
where we use the fact that $\alpha_1 = \eta\tau$ and the assumption on $\tau$ \eqref{eq:pomelo}.
With $0 < \eta \le \frac{(1-\gamma)^3}{32000\cC_\rho}$, and $\alpha_i = \eta\tau$, we have $\lambda_{l,t+1} - (1-\eta\tau)\lambda_{l+1,t+1} \ge 0$ (cf. \eqref{eq:we_love_entropy}), and 
\begin{align*}
	 &\lambda_{l+1,t+1}\Big(1-\eta\tau - \frac{4\eta}{1-\gamma} - \frac{31250\eta\cC_\rho}{(1-\gamma)^3}\Big)-\lambda_{l+2,t+1}\frac{2\eta}{1-\gamma}\\
	 &= \eta\tau \prod_{j=l+3}^{t+1} \Big(1-\frac{1-\gamma}{4}\alpha_j\Big)\Big[(1-\frac{1-\gamma}{4}\eta\tau)\Big(1-\eta\tau - \frac{4\eta}{1-\gamma} - \frac{31250\eta\cC_\rho}{(1-\gamma)^3}\Big) - \frac{2\eta}{1-\gamma}\Big] \ge 0.
\end{align*}
It follows that
\begin{align} 
	&\KLrho{\best{\zeta}}{\ztp} +\Big(1-\frac{2\eta}{1-\gamma}\Big)\KLrho{\ztp}{\bztp} + 3\eta \exlim{s\sim\rho}{\normbig{\Q{t+1}(s) - \best{Q}(s)}_\infty} \nonumber \\
	&\le \Big(\frac{2750}{(1-\gamma)^2\tau} + \frac{1}{\tau}\Big)\Big(1-\frac{(1-\gamma)\eta\tau}{4}\Big)^{t+1} < \frac{3000}{(1-\gamma)^2\tau}\Big(1-\frac{(1-\gamma)\eta\tau}{4}\Big)^{t+1}.
\label{eq:zeta_conv}
\end{align}
This proves the bound of $\KLrho{\best{\zeta}}{\ztp}$ and $3\eta \exlim{s\sim\rho}{\normbig{\Q{t+1}(s) - \best{Q}(s)}_\infty}$ in Theorem \ref{thm:Prospero}. Note that the bound holds trivially for $\KLrho{\best{\zeta}}{\zeta^{(0)}}$ and $3\eta \exlim{s\sim\rho}{\normbig{\Q{0}(s) - \best{Q}(s)}_\infty}$.  It remains to bound $\KL{\best{\zeta}}{\bztp}$, which we make use of the following lemma. 
\begin{lemma}
\label{lemma:bar_zeta_2_zeta}
With $0 < \eta \le {(1-\gamma)}/{8}$, we have
\begin{align*}
	&\frac{1}{2}\KLs{\best{\zeta}}{\bztp} + \eta\tau \KLs{\bztp}{\best{\zeta}}\\
	&\le (1-\eta\tau)\KLs{\best{\zeta}}{\zt} + \frac{2\eta}{1-\gamma}\KLs{\zt}{\bzt} + 2\eta \normbig{\Q{t}(s) - \best{Q}(s)}_\infty.
\end{align*}
\end{lemma}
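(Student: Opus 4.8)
The plan is to prove the bound one player at a time and then add the two inequalities, using the zero-sum structure to cancel the equilibrium-coupling terms. First I would rewrite the midpoint update \eqref{eq:update_bar} as an entropy-regularized proximal (mirror-descent) step: for the max player,
\[
\bmutp(s) = \arg\max_{p \in \simplexA}\Big\{\eta\big\langle \Q{t}(s)\bnut(s),\,p\big\rangle + \eta\tau\,\ent{p} - (1-\eta\tau)\KL{p}{\mut(s)}\Big\},
\]
and symmetrically for the min player (with payoff $-\Q{t}(s)^\top\bmut(s)$). Since the objective is $1$-strongly concave relative to the negative entropy, the associated three-point inequality with comparator $\best{\mu}(s)$, after cancelling the entropy terms via the QRE fixed-point optimality $\best{\mu}(s)=\arg\max_p\{\langle\best{Q}(s)\best{\nu}(s),p\rangle+\tau\ent{p}\}$, yields
\[
\KLs{\best{\mu}}{\bmutp} + \eta\tau\,\KLs{\bmutp}{\best{\mu}} + (1-\eta\tau)\KLs{\bmutp}{\mut} \le (1-\eta\tau)\KLs{\best{\mu}}{\mut} + \eta\big\langle \Q{t}(s)\bnut(s)-\best{Q}(s)\best{\nu}(s),\,\bmutp(s)-\best{\mu}(s)\big\rangle .
\]
The bonus term $\eta\tau\,\KLs{\bmutp}{\best{\mu}}$ is exactly the strong-convexity gain matching the second summand on the left of the lemma, and the retained negative term $-(1-\eta\tau)\KLs{\bmutp}{\mut}$ will be used as slack later.

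Next I would decompose the gradient mismatch driving the inner product into three pieces:
\[
\Q{t}(s)\bnut(s) - \best{Q}(s)\best{\nu}(s) = \underbrace{\big(\Q{t}(s)-\best{Q}(s)\big)\bnut(s)}_{\text{(i)}} + \underbrace{\best{Q}(s)\big(\bnut(s)-\nut(s)\big)}_{\text{(ii)}} + \underbrace{\best{Q}(s)\big(\nut(s)-\best{\nu}(s)\big)}_{\text{(iii)}} .
\]
Piece (i) is bounded by $\normbig{\Q{t}(s)-\best{Q}(s)}_\infty\|\bmutp(s)-\best{\mu}(s)\|_1$, producing the $2\eta\,\normbig{\Q{t}(s)-\best{Q}(s)}_\infty$ term. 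For piece (ii), using $\normbig{\best{Q}(s)}_\infty \le 2/(1-\gamma)$ and Pinsker's inequality to convert $\|\bnut(s)-\nut(s)\|_1$ into $\KLs{\nut}{\bnut}$, I would obtain the $\tfrac{2\eta}{1-\gamma}\KLs{\zt}{\bzt}$ term.

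The crux is piece (iii), the equilibrium-coupling term. After summing the max- and min-player inequalities and splitting $\bmutp-\best{\mu}=(\bmutp-\mut)+(\mut-\best{\mu})$ (and symmetrically $\bnutp-\best{\nu}=(\bnutp-\nut)+(\nut-\best{\nu})$), the two players' coupling terms reorganize so that the symmetric bilinear form $\langle\best{Q}(s)(\nut-\best{\nu}),\,\mut-\best{\mu}\rangle$ cancels exactly between the players via the identity $\langle\best{Q}^\top a,b\rangle=\langle\best{Q}b,a\rangle$. The leftover residuals pair the prediction gap $\bmutp-\mut$ (resp.\ $\bnutp-\nut$) against the iterate error, and these I would absorb using the retained $-(1-\eta\tau)\KLs{\bmutp}{\mut}$ terms together with Pinsker's and Young's inequalities; the remaining quadratic in $\bmutp-\best{\mu}$ is absorbed into one half of $\KLs{\best{\mu}}{\bmutp}$, which is precisely why only $\tfrac12\KLs{\best{\zeta}}{\bztp}$ survives on the left. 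The step-size ceiling $\eta\le(1-\gamma)/8$ is what makes all these absorptions valid, since it forces the coupling coefficients $\eta\,\normbig{\best{Q}(s)}_\infty \lesssim \eta/(1-\gamma)$ below the thresholds dictated by Pinsker. I expect this cancellation-and-absorption of the equilibrium-coupling term—keeping the contraction factor sharp at $(1-\eta\tau)$ while routing every leftover into the optimism term, the value-error term, or the discarded half of the target KL—to be the main obstacle; the remaining steps are routine mirror-descent computations.
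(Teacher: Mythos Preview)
Your three-point/mirror-descent starting point coincides with the paper's, and pieces (i) and (ii) go through essentially as you describe. The gap is in piece (iii). After you cancel the symmetric bilinear form $\langle\best{Q}(s)(\nut(s)-\best{\nu}(s)),\,\mut(s)-\best{\mu}(s)\rangle$ between the two players, the leftover residual is
\[
\eta\big\langle\bmutp(s)-\mut(s),\,\best{Q}(s)\big(\nut(s)-\best{\nu}(s)\big)\big\rangle \;-\; \eta\big\langle\bnutp(s)-\nut(s),\,\best{Q}(s)^\top\big(\mut(s)-\best{\mu}(s)\big)\big\rangle.
\]
The prediction-gap factor $\bmutp-\mut$ can indeed be absorbed into the slack $(1-\eta\tau)\KLs{\bmutp}{\mut}$ via Young and Pinsker, but doing so leaves a term of order $\tfrac{\eta^2}{(1-\gamma)^2}\|\nut(s)-\best{\nu}(s)\|_1^2\le \tfrac{2\eta^2}{(1-\gamma)^2}\KLs{\best{\nu}}{\nut}$. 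The only destination for this is the $\KLs{\best{\zeta}}{\zt}$ term on the right, whose coefficient then becomes $(1-\eta\tau)+O\big(\eta^2/(1-\gamma)^2\big)$. Under the sole hypothesis $\eta\le(1-\gamma)/8$ this extra piece is $O(\eta)$, not $O(\eta\tau)$, so for small $\tau$ it swamps the $-\eta\tau$ and destroys the contraction; the lemma as stated cannot be reached along this route.

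The paper avoids this by routing the cancellation through the \emph{new} midpoint $\bztp$ rather than the current iterate $\zt$. After isolating the $Q$-error it keeps the coupling matrix as $\Q{t}$ and splits $\bnut-\best{\nu}=(\bnut-\bnutp)+(\bnutp-\best{\nu})$. The $(\bnutp-\best{\nu})$ piece then cancels \emph{exactly} against the $\nu$-player's symmetric term, because both inner products are tested against $\bztp-\best{\zeta}$. The residual becomes $\Q{t}(s)(\bnut(s)-\bnutp(s))$ paired against $\bmutp(s)-\best{\mu}(s)$. Now the ``dangerous'' factor is the difference of consecutive midpoints $\bnut-\bnutp$, which via $(\bnut-\nut)+(\nut-\bnutp)$ and Pinsker lands in $\KLs{\zt}{\bzt}$ (the target right-hand term) and $\KLs{\bztp}{\zt}$ (available slack from the three-point identity). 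The quadratic in $\bmutp-\best{\mu}$ is what consumes half of $\KLs{\best{\zeta}}{\bztp}$, yielding the $\tfrac12$ on the left. In short, the cancellation point should be $\bztp$---already the test vector in the three-point identity---not $\zt$; your splitting leaves the iterate error $\zt-\best{\zeta}$ as a factor with no available sink.
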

\begin{proof}
	See Appendix \ref{sec:pf_lemma_bar_zeta_2_zeta}.
\end{proof}
Combining Lemma~\ref{lemma:bar_zeta_2_zeta} with \eqref{eq:zeta_conv} gives
\begin{align}
    &\frac{1}{2}\KLrho{\best{\zeta}}{\bztp} + \eta\tau \KLrho{\bztp}{\best{\zeta}} \nonumber \\
    &\le (1-\eta\tau)\Big(\KLrho{\best{\zeta}}{\zt} + \Big(1-\frac{2\eta}{1-\gamma}\Big)\KLrho{\zt}{\bzt} + 3\eta \exlim{s\sim\rho}{\normbig{\Q{t}(s) - \best{Q}(s)}_\infty}\Big) \nonumber\\
    &\le \frac{3000}{(1-\gamma)^2\tau}\Big(1-\frac{(1-\gamma)\eta\tau}{4}\Big)^{t+1}, \label{eq:zeta_zzz}
\end{align}
which concludes the proof of Theorem~\ref{thm:Prospero}. 

\paragraph{Proof of Theorem~\ref{thm:Miranda}.} We are now ready to bound the duality gap in Theorem~\ref{thm:Miranda}. Before proceeding, we introduce the following two lemmas. 

\begin{lemma}
\label{lemma:dualgap_markov_2_matrix}
It holds for any policy pair $(\mu, \nu)$ that
\begin{equation}
	\max_{\mu', \nu'}\Big(V_\tau^{\mu',\nu}(\rho) - V_\tau^{\mu,\nu'}(\rho)\Big) \le \frac{2\cC_{\rho,\tau}^\dagger}{1-\gamma} \exlim{s\sim\rho}{\max_{\mu', \nu'}\Big(f_{s}(\best{Q}, \mu', \nu) - f_{s}(\best{Q}, \mu, \nu')\Big)}
	\label{eq:dualgap_markov_2_matrix_rho}
\end{equation}
and
\begin{equation}
	\max_{s\in\cS, \mu', \nu'}\Big(V_\tau^{\mu',\nu}(s) - V_\tau^{\mu,\nu'}(s)\Big) \le \frac{2\|1/\rho\|_\infty}{1-\gamma}\exlim{s\sim\rho}{\max_{\mu', \nu'}\Big(f_{s}(\best{Q}, \mu', \nu) - f_{s}(\best{Q}, \mu, \nu')\Big)}.
	\label{eq:dualgap_markov_2_matrix_max}
\end{equation}
Here, $f_s(Q, \mu, \nu)$ is the one-step entropy-regularized game value at state $s$, i.e., 
\begin{equation} \label{eq:def_fs}
	f_s(Q, \mu, \nu) = \mu(s)^\top Q(s) \nu(s) + \tau \cH(\mu(s)) - \tau \cH(\nu(s)).
\end{equation}
\end{lemma}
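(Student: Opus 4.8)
The plan is to split the Markov-game duality gap into a max-player part and a min-player part, bound each by a discounted sum of the corresponding \emph{one-step} gaps measured against the \emph{fixed} QRE $Q$-function $\best{Q}$ via a performance-difference recursion, and then change the measure from the visitation distribution to $\rho$. To set up, recall that $(\best{\mu}, \best{\nu})$ is the per-state saddle point of $f_s(\best{Q}, \cdot, \cdot)$ with $V_\tau^\star(s) = f_s(\best{Q}, \best{\mu}, \best{\nu})$ and $\best{Q}(s,a,b) = r(s,a,b) + \gamma\,\mathbb{E}_{s'\sim P(\cdot|s,a,b)}[V_\tau^\star(s')]$. The saddle inequalities $f_s(\best{Q}, \best{\mu}, \nu') \ge V_\tau^\star(s) \ge f_s(\best{Q}, \mu', \best{\nu})$ then imply that $e^{\max}(s) := \max_{\mu'} f_s(\best{Q}, \mu', \nu) - V_\tau^\star(s)$ and $e^{\min}(s) := V_\tau^\star(s) - \min_{\nu'} f_s(\best{Q}, \mu, \nu')$ are both nonnegative, and that their sum is exactly the one-step gap $\max_{\mu',\nu'}\big(f_s(\best{Q}, \mu', \nu) - f_s(\best{Q}, \mu, \nu')\big)$ on the right-hand side.

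Next I would establish the core recursion for the max-player half; the min half is symmetric. Let $\hat\mu = \mu_\tau^\dagger(\nu)$ attain $\max_{\mu'} V_\tau^{\mu',\nu}(\rho)$, and set $\Delta(s) = V_\tau^{\hat\mu,\nu}(s) - V_\tau^\star(s)$. Writing $V_\tau^{\hat\mu,\nu}(s) = f_s(Q_\tau^{\hat\mu,\nu}, \hat\mu, \nu)$ and using the identity $Q_\tau^{\hat\mu,\nu}(s,a,b) - \best{Q}(s,a,b) = \gamma\,\mathbb{E}_{s'\sim P(\cdot|s,a,b)}[\Delta(s')]$, one gets $\Delta(s) = \big(f_s(\best{Q}, \hat\mu, \nu) - V_\tau^\star(s)\big) + \gamma\,\mathbb{E}_{s'\sim P^{\hat\mu,\nu}(\cdot|s)}[\Delta(s')]$, where $P^{\hat\mu,\nu}$ is the induced transition operator. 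Since $f_s(\best{Q}, \hat\mu, \nu) \le \max_{\mu'} f_s(\best{Q}, \mu', \nu)$, this gives the pointwise contraction
\[
\Delta(s) \;\le\; e^{\max}(s) + \gamma\,\mathbb{E}_{s'\sim P^{\hat\mu,\nu}(\cdot|s)}\big[\Delta(s')\big].
\]
Unrolling (legitimate because $\gamma<1$ and $\Delta$ is bounded) and taking $\mathbb{E}_{s\sim\rho}$ yields, by the definition of $d_\rho^{\hat\mu,\nu}$, the bound $\max_{\mu'} V_\tau^{\mu',\nu}(\rho) - V_\tau^\star(\rho) \le \frac{1}{1-\gamma}\,\mathbb{E}_{s\sim d_\rho^{\hat\mu,\nu}}[e^{\max}(s)]$. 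The mirror argument with $\hat\nu = \nu_\tau^\dagger(\mu)$ gives $V_\tau^\star(\rho) - \min_{\nu'} V_\tau^{\mu,\nu'}(\rho) \le \frac{1}{1-\gamma}\,\mathbb{E}_{s\sim d_\rho^{\mu,\hat\nu}}[e^{\min}(s)]$.

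Finally I would change measure. For \eqref{eq:dualgap_markov_2_matrix_rho}, nonnegativity of $e^{\max}, e^{\min}$ lets me write $\mathbb{E}_{s\sim d_\rho^{\hat\mu,\nu}}[e^{\max}(s)] \le \|d_\rho^{\hat\mu,\nu}/\rho\|_\infty\,\mathbb{E}_{s\sim\rho}[e^{\max}(s)]$ and similarly for $e^{\min}$; since $\hat\mu = \mu_\tau^\dagger(\nu)$ and $\hat\nu = \nu_\tau^\dagger(\mu)$, both mismatch ratios are at most $\cC_{\rho,\tau}^\dagger$ by definition. Adding the two halves and bounding each nonnegative $\mathbb{E}_\rho[e^{\max}]$, $\mathbb{E}_\rho[e^{\min}]$ by $\mathbb{E}_\rho$ of the full one-step gap $e^{\max}+e^{\min}$ produces the factor $2$ and proves \eqref{eq:dualgap_markov_2_matrix_rho}. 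For the per-state bound \eqref{eq:dualgap_markov_2_matrix_max}, I rerun the recursion from an arbitrary fixed $s_0$ (i.e.\ $\rho = \delta_{s_0}$) and use that $\mathbb{E}_{s\sim d_{s_0}^{\,\cdot}}[g(s)] \le \|1/\rho\|_\infty\,\mathbb{E}_{s\sim\rho}[g(s)]$ for any nonnegative $g$, which holds because $d_{s_0}^{\,\cdot}(s)\le 1 \le \|1/\rho\|_\infty\,\rho(s)$; maximizing over $s_0$ gives the claim. The main obstacle is the very first step of the recursion, namely converting the policy-dependent $Q_\tau^{\hat\mu,\nu}$ into the fixed $\best{Q}$ through the identity for $Q_\tau^{\hat\mu,\nu}-\best{Q}$, which is precisely what allows the right-hand side to be expressed in terms of $\best{Q}$ rather than a moving target; everything afterward is bookkeeping with discounted visitation measures.
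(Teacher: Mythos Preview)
Your proposal is correct and follows essentially the same route as the paper: derive the performance-difference identity $V_\tau^{\tilde\mu,\nu}(s)-V_\tau^\star(s)=\big(f_s(\best{Q},\tilde\mu,\nu)-f_s(\best{Q},\best{\mu},\best{\nu})\big)+\gamma\,\mathbb{E}_{s'}[\,\cdot\,]$, unroll it against the visitation measure $d_\rho^{\tilde\mu,\nu}$, use the saddle property of $(\best{\mu},\best{\nu})$ to pass to the one-step duality gap, and then change measure via $\cC_{\rho,\tau}^\dagger$ or $\|1/\rho\|_\infty$. The only cosmetic difference is that the paper keeps the performance-difference recursion as an equality and applies the saddle bound after unrolling, whereas you bound first and then unroll; both are valid here since the remainder term vanishes.
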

\begin{proof}
	Note that \eqref{eq:dualgap_markov_2_matrix_max} is a slight generalization of \cite[Lemma 32]{wei2021last}. The proof can be found in Appendix \ref{sec:pf_lemma_dualgap_markov_2_matrix}.
\end{proof}
\begin{lemma}[\mbox{\cite[Lemma 4]{cen2021fast}}]
\label{lemma:dualgap_matrix}
It holds for all $s\in\cS$ and policy pair $\mu, \nu$ that
\begin{equation*}
	\max_{\mu',\nu'} \big( f_s(\best{Q}, \mu', \nu) - f_s(\best{Q}, \mu, \nu')\big) \le \frac{4}{(1-\gamma)^2\tau}\KLs{\best{\zeta}}{\zeta} + \tau \KLs{\zeta}{\best{\zeta}}.
\end{equation*}
\end{lemma}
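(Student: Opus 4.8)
The plan is to control the one-step duality gap by a four-term telescoping decomposition that inserts the QRE policies between the best responses and the given pair, reducing everything to KL divergences. Throughout fix the state $s$, abbreviate $A = \best{Q}(s)$ and $f(\cdot,\cdot) = f_s(\best{Q},\cdot,\cdot)$, and write the QRE marginals as $\best{\mu},\best{\nu}$ and the given marginals as $\mu,\nu$. I would decompose
\begin{align*}
\max_{\mu',\nu'}\big(f(\mu',\nu) - f(\mu,\nu')\big)
&= \big(\max_{\mu'} f(\mu',\nu) - f(\best{\mu},\nu)\big) + \big(f(\best{\mu},\nu) - f(\best{\mu},\best{\nu})\big) \\
&\quad + \big(f(\best{\mu},\best{\nu}) - f(\mu,\best{\nu})\big) + \big(f(\mu,\best{\nu}) - \min_{\nu'} f(\mu,\nu')\big),
\end{align*}
and label the four summands $G_1,G_2,G_3,G_4$ from left to right.

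First I would evaluate the two middle terms exactly. Because $(\best{\mu},\best{\nu})$ is the QRE of the one-step game $f$, the softmax fixed-point relations $\best{\mu}\propto\exp([A\best{\nu}]/\tau)$ and $\best{\nu}\propto\exp(-[A^\top\best{\mu}]/\tau)$ hold. Substituting $[A\best{\nu}]_a = \tau(\log\best{\mu}(a) + \mathrm{const})$ into $f(\mu,\best{\nu})$ reveals that $\mu\mapsto f(\mu,\best{\nu})$ equals $-\tau\KL{\mu}{\best{\mu}}$ plus a $\mu$-independent constant, so $G_3 = \tau\KL{\mu}{\best{\mu}}$; symmetrically $G_2 = \tau\KL{\nu}{\best{\nu}}$. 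Hence $G_2 + G_3 = \tau\KLs{\zeta}{\best{\zeta}}$, which is exactly the second term of the claimed bound.

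The crux is bounding $G_1$ and, by the same argument, $G_4$. The identical softmax computation identifies $G_1 = \max_{\mu'} f(\mu',\nu) - f(\best{\mu},\nu)$ as the Bregman divergence $D_\phi(\nu,\best{\nu})$ of the convex log-sum-exp potential $\phi(\nu) = \tau\log\sum_a \exp([A\nu]_a/\tau)$, using that $\nabla\phi(\best{\nu}) = A^\top\best{\mu}$ by the best-response relation. I would bound this divergence through the smoothness of $\phi$: along any direction $d$ its Hessian acts as $d^\top\nabla^2\phi(\xi)\,d = \tfrac{1}{\tau}\mathrm{Var}_{p}(Ad) \le \tfrac{1}{\tau}\norm{A}_\infty^2\norm{d}_1^2$ with $p$ a softmax distribution, so a second-order Taylor expansion gives $G_1 \le \tfrac{\norm{A}_\infty^2}{2\tau}\norm{\nu-\best{\nu}}_1^2$. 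Pinsker's inequality $\norm{\nu-\best{\nu}}_1^2 \le 2\KL{\best{\nu}}{\nu}$ then yields $G_1 \le \tfrac{\norm{A}_\infty^2}{\tau}\KL{\best{\nu}}{\nu}$, and symmetrically $G_4 \le \tfrac{\norm{A}_\infty^2}{\tau}\KL{\best{\mu}}{\mu}$. Since rewards lie in $[0,1]$ and the entropy bonus is controlled by the assumption \eqref{eq:pomelo} on $\tau$, the regularized value satisfies $\abs{\best{V}(s)}\le\tfrac{2}{1-\gamma}$, hence $\norm{A}_\infty = \norm{\best{Q}(s)}_\infty \le \tfrac{1+\gamma}{1-\gamma}$ and $\norm{A}_\infty^2 \le \tfrac{4}{(1-\gamma)^2}$. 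Adding $G_1 + G_4 \le \tfrac{4}{(1-\gamma)^2\tau}\KLs{\best{\zeta}}{\zeta}$ to $G_2 + G_3$ proves the lemma.

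The main obstacle is the $G_1$/$G_4$ step: one must recognize the best-response exploitability as a log-sum-exp Bregman divergence and bound it by a KL term in the correct \emph{reverse} direction, $\KL{\best{\nu}}{\nu}$ rather than $\KL{\nu}{\best{\nu}}$. This is precisely what the smoothness-plus-Pinsker route delivers, and it is what forces the two opposite KL directions to appear in the statement. The remaining pieces---the exact softmax/entropy identities for $G_2,G_3$ and the crude $\ell_\infty$ bound on $\best{Q}$---are routine.
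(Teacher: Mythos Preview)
Your argument is correct. The paper does not supply its own proof of this lemma; it simply invokes \cite[Lemma~4]{cen2021fast}. Your decomposition---splitting off the exact entropy identities for $G_2,G_3$ and handling the exploitability terms $G_1,G_4$ via the log-sum-exp Bregman divergence, its $\tfrac{1}{\tau}\|A\|_\infty^2$-smoothness in $\ell_1$, and Pinsker---is precisely the natural route and matches the spirit of the cited result, with the $\|A\|_\infty\le\tfrac{1+\gamma}{1-\gamma}$ bound supplying the constant $4/(1-\gamma)^2$.
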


Putting all pieces together, we arrive at
\begin{align*}
	\max_{\mu, \nu}\big(V_\tau^{\mu,\bnut}(\rho) - V_\tau^{\bmut,\nu}(\rho)\big)
	&\le \frac{2\cC_{\rho,\tau}^\dagger}{1-\gamma}\Big(\frac{4}{(1-\gamma)^2\tau}\KLrho{\best{\zeta}}{\bztp} + \tau \KLrho{\bztp}{\best{\zeta}} \Big)\\
	&\le \frac{2\cC_{\rho,\tau}^\dagger}{1-\gamma} \max\Big\{\frac{8}{(1-\gamma)^2\tau}, \frac{1}{\eta}\Big\}\Big(\frac{1}{2}\KLrho{\best{\zeta}}{\bztp} + \eta\tau \KLrho{\bztp}{\best{\zeta}} \Big)\\
    &\le \frac{6000\cC_{\rho,\tau}^\dagger}{(1-\gamma)^3\tau}\max\Big\{\frac{8}{(1-\gamma)^2\tau}, \frac{1}{\eta}\Big\}\Big(1-\frac{(1-\gamma)\eta\tau}{4}\Big)^{t},
\end{align*}
where the last line follows from \eqref{eq:zeta_zzz}. We omit the proof for $\max_{s\in\cS, \mu, \nu}\Big(V_\tau^{\mu,\bar{\nu}^{(t)}}(s) - V_\tau^{\bmut,\nu}(s)\Big)$ for brevity as it follows essentially from the same argument.

\section{Analysis for the finite-horizon setting}

Throughout the analysis, we restrict our choice of the step size for value update to $\alpha_t = \eta\tau$. We start with the following lemma which parallels Lemma \ref{lemma:one_step_policy_bound} in the infinite-horizon Markov game setting; for brevity we omit the proof.
\begin{lemma}
    \label{lemma:one_step_policy_bound_epi}
    With $0 < \eta \le 1/\tau$, it holds for all $s \in \cS$, $h \in [H]$ and $t \ge 0$ that
    \begin{equation}
            \max\big\{\normbig{\bmutp_h(s) - \mutp_h(s)}_1, \normbig{\bnutp_h(s) - \nutp_h(s)}_1\big\} \le 2\eta H.
    \end{equation}
    In addition, we have
\begin{equation}
\max\{\|\log \zt_h(s)\|_\infty, \|\log \bzt_h(s)\|_\infty, \|\log \besth{\zeta}(s)\|_\infty\} \le \frac{2H}{\tau}.
\label{eq:log_bound_epi}
\end{equation}

\end{lemma}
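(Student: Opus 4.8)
The plan is to follow the infinite-horizon template (Lemma~\ref{lemma:one_step_policy_bound}), with the effective horizon $1/(1-\gamma)$ replaced by $H$. Everything hinges on a preliminary uniform magnitude bound on the running critic: I would first show $\normbig{\Q{t}_h(s)}_\infty \le 2H$ (i.e.\ $O(H)$) for all $t,h,s$, which in turn controls the marginalized gradients, since $\big|[\Q{t}_h(s)\bnut_h(s)]_a\big|\le \normbig{\Q{t}_h(s)}_\infty$ because $\bnut_h(s)$ is a distribution. This bound comes from a backward induction over $h$: starting from $\V{}_{H+1}\equiv 0$, the update $\Q{t+1}_h = r_h + \gamma\,\Exs_{s'}[\V{t}_{h+1}(s')]$ together with the convex-combination critic update and the entropy bounds $\tau\cH(\cdot)\in[0,\tau\log|\cA|]$ (resp.\ $\le \tau\log|\cB|$) gives $\normbig{\V{t}_h}_\infty \le (H-h+1)\big(1+\tau(\log|\cA|+\log|\cB|)\big)$ uniformly in $t$, and the standing assumption \eqref{eq:pomelo} that $\tau(\log|\cA|+\log|\cB|)\le 1$ collapses this to the claimed $O(H)$ bound.

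Given the critic bound, the $\ell_1$ drift bound follows by comparing $\mutp_h(s)$ and $\bmutp_h(s)$, which are two instances of the \emph{same} multiplicative-weights map applied to the common reference $\mut_h(s)^{1-\eta\tau}$: the $(t{+}1)$-indexed instance of \eqref{eq:update_epi} and \eqref{eq:update_bar_epi} produce softmax distributions over logits that share the term $(1-\eta\tau)\log\mut_h(a|s)$ and differ only in the gradient, $\eta\,\Q{t+1}_h(s)\bnutp_h(s)$ versus $\eta\,\Q{t}_h(s)\bnut_h(s)$. I would invoke the elementary Lipschitz property that the softmax map is contractive from the oscillation of its logits to the $\ell_1$ distance, so that $\normbig{\mutp_h(s)-\bmutp_h(s)}_1$ is at most $\eta$ times the range of the gradient difference, which is bounded by $\normbig{\Q{t}_h(s)}_\infty \le 2H$; the step-size restriction $\eta\le 1/(8H)$ keeps all exponents bounded and yields $2\eta H$. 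The same argument applies verbatim to the min player's policies.

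For the log-magnitude bound \eqref{eq:log_bound_epi}, since every $\log\mut_h(a|s)\le 0$ it suffices to upper bound $-\log\mut_h(a|s)$, i.e.\ to lower bound the smallest action probability. Taking logarithms in \eqref{eq:update_epi}, using that the normalizer obeys $\log Z \le \eta\normbig{\Q{t}_h(s)}_\infty + \eta\tau\log|\cA|$ (via $\sum_a \mutm_h(a|s)^{1-\eta\tau}\le |\cA|^{\eta\tau}$) and that the gradient term contributes nonpositively after accounting for its sign, I obtain the contraction
\[
    B_t \;\le\; (1-\eta\tau)\,B_{t-1} + \eta\,\normbig{\Q{t}_h(s)}_\infty + \eta\tau\log|\cA|, \qquad B_t := \normbig{\log \mut_h(s)}_\infty .
\]
Because the initial policy is uniform ($B_0=\log|\cA|$) and the recursion contracts with factor $1-\eta\tau$, $B_t$ never exceeds its fixed point $\normbig{\Q{t}_h(s)}_\infty/\tau + \log|\cA|$, which \eqref{eq:pomelo} (giving $\log|\cA|\le 1/\tau$) bounds by $2H/\tau$. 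The identical recursion governs $\bzt_h(s)$ thanks to the same $(1-\eta\tau)$ shrinkage in \eqref{eq:update_bar_epi}, and the QRE bound follows from the Gibbs/softmax characterization $\besth{\mu}(a|s)\propto \exp\!\big(\tfrac1\tau[\besth{Q}(s)\besth{\nu}(s)]_a\big)$, whose log-range is at most $\normbig{\besth{Q}(s)}_\infty/\tau \le 2H/\tau$.

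The main obstacle is bookkeeping rather than conceptual. None of the recursions pose a convergence difficulty, as each inherits the geometric contraction $1-\eta\tau$ built into the OMWU shrinkage; the delicate point is keeping the critic magnitude at $O(H)$ rather than letting the per-step entropy bonuses accumulate, which requires carefully invoking \eqref{eq:pomelo}, and then converting the logit/gradient gaps into the $\ell_1$ and $\ell_\infty$ policy bounds with exactly the constants $2\eta H$ and $2H/\tau$ under the prescribed ranges $\eta\le 1/(8H)$ and $0<\eta\le 1/\tau$.
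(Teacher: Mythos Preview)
Your plan is essentially the paper's own: adapt the infinite-horizon Lemma~\ref{lemma:one_step_policy_bound} with $1/(1-\gamma)\to H$ (the paper in fact omits the episodic proof, citing exactly this parallel). Two minor repairs are needed. First, the restriction $\eta\le 1/(8H)$ is extraneous here---the softmax $\ell_1$-Lipschitz bound already gives $\normbig{\mutp_h(s)-\bmutp_h(s)}_1\le \eta\,\normbig{\Q{t+1}_h(s)\bnutp_h(s)-\Q{t}_h(s)\bnut_h(s)}_\infty$, and since both marginalized payoffs lie in the common interval $[\min Q_h,\max Q_h]$ uniformly in $t$, the right side is at most $\eta$ times the range of $Q_h$; no smallness of $\eta$ enters. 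Second, your $B_t$ recursion has a sign slip: the regularized $\Q{t}_h$ can be negative (through the $-\tau\cH(\nu)$ contribution to $V$), so ``the gradient term contributes nonpositively'' is unjustified and the driver should be $\eta\cdot\mathrm{osc}\big(\Q{t}_h(s)\bnut_h(s)\big)\le\eta\cdot\mathrm{range}(Q_h)$ rather than a single $\eta\normbig{\Q{t}_h}_\infty$. The range obeys the same $1+2(H-h)$ bound, so the fixed point is still $2(H-h+1)/\tau\le 2H/\tau$ and the conclusion survives. The infinite-horizon proof avoids this sign bookkeeping entirely by writing $\mut_h(a|s)\propto\exp(w^{(t)}(a)/\tau)$ with $w^{(t)}=(1-\eta\tau)w^{(t-1)}+\eta\tau\,\Q{t}_h(s)\bnut_h(s)$, so that $w^{(t)}$ is a convex combination of marginalized payoffs (starting from $w^{(0)}=0$) and directly inherits the $Q$-interval bounds.
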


\begin{lemma}
\label{lemma:Hamlet}
With $0 < \eta \le \frac{1}{8H}$, it holds for all $0 \le t_1 \le t_2$, $h\in[H]$ and $s\in \cS$ that
\begin{align*}
	&\KLs{\besth{\zeta}}{\zeta_h^{(t_2)}} + (1-4\eta H)\KLs{\zeta_h^{(t_2)}}{\bar{\zeta}_h^{(t_2)}}\\ 
	&\le (1-\eta\tau)^{t_2-t_1}\Big(\KLs{\besth{\zeta}}{\zeta_h^{(t_1)}} + (1-4\eta H)\KLs{\zeta_h^{(t_1)}}{\bar{\zeta}_h^{(t_1)}}\Big)  +4\eta\sum_{l=t_1}^{t_2} (1-\eta\tau)^{t_2-l}  \normbig{\Q{l}_h(s) - \best{Q}(s)}_\infty.
\end{align*}
\end{lemma}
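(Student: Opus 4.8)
The plan is to reduce the claim to a single \emph{one-step} contraction of the Lyapunov quantity
\[
\Phi_h^{(t)} := \KLs{\besth{\zeta}}{\zeta_h^{(t)}} + (1-4\eta H)\,\KLs{\zeta_h^{(t)}}{\bar{\zeta}_h^{(t)}},
\]
followed by an unrolling. Concretely, I would first establish that for every fixed $h\in[H]$ and $s\in\cS$,
\[
\Phi_h^{(t+1)} \le (1-\eta\tau)\,\Phi_h^{(t)} + 4\eta\,\normbig{\Q{t+1}_h(s) - \best{Q}(s)}_\infty .
\]
Iterating this inequality from $t_1$ to $t_2$ produces the homogeneous factor $(1-\eta\tau)^{t_2-t_1}$ in front of $\Phi_h^{(t_1)}$ and accumulates the residual into the geometric sum $4\eta\sum_{l}(1-\eta\tau)^{t_2-l}\normbig{\Q{l}_h(s)-\best{Q}(s)}_\infty$, where the index $l$ ranges over $t_1+1,\dots,t_2$; since every summand is nonnegative, the range may be freely enlarged to begin at $l=t_1$, matching the stated bound. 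This unrolling is purely mechanical, so the entire difficulty is concentrated in the one-step estimate.

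To prove the one-step bound I would observe that both $\zeta_h^{(t+1)}$ and $\bar{\zeta}_h^{(t+1)}$ are obtained from the common anchor $\zeta_h^{(t)}$ by a single regularized multiplicative (mirror-descent) step: the predictor $\bar{\zeta}_h^{(t+1)}$ uses the stale gradient built from $\Q{t}_h(s)$, while the corrector $\zeta_h^{(t+1)}$ uses the fresh gradient built from $\Q{t+1}_h(s)$---precisely the optimistic structure of the matrix-game OMWU in \citet{cen2021fast}, now with a state- and time-varying payoff. I would then apply the three-point identity for the KL (Bregman) divergence to these two steps, and use the $1$-strong convexity of the negative entropy (Pinsker) to convert the regularization weight $\tau$ into the multiplicative factor $(1-\eta\tau)$, while generating a nonnegative surplus $\eta\tau\,\KLs{\bar{\zeta}_h^{(t+1)}}{\besth{\zeta}}$ that can be discarded. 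Summing the resulting max-player and min-player inequalities, the bilinear terms recombine into a quantity governed by the first-order (saddle-point) optimality characterization of the QRE $\besth{\zeta}$, exactly as in the matrix-game analysis, leaving only divergence terms and the optimism gap between consecutive stages.

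The external error arises because the corrector $\zeta_h^{(t+1)}$ is driven by $\Q{t+1}_h(s)$, whereas the target $\besth{\zeta}$ is the equilibrium of the payoff $\best{Q}(s)$; this mismatch surfaces as inner products of the form $\eta\langle(\Q{t+1}_h(s)-\best{Q}(s))\,\nu,\,\mu'-\mu\rangle$, each bounded via H\"older by $\eta\normbig{\Q{t+1}_h(s)-\best{Q}(s)}_\infty$ times the $\ell_1$-diameter of the simplex, and the contributions of the two players assemble into the factor $4\eta$. The main obstacle I anticipate is the coefficient bookkeeping: closing the recursion with exactly $(1-4\eta H)$ on $\KLs{\zeta_h^{(t)}}{\bar{\zeta}_h^{(t)}}$ while preserving a clean $(1-\eta\tau)$ contraction. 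This hinges on telescoping the optimism gap $\Q{t+1}_h(s)\bar{\nu}_h^{(t+1)}(s) - \Q{t}_h(s)\bar{\nu}_h^{(t)}(s)$ against the extrapolation divergence $\KLs{\zeta_h^{(t+1)}}{\bar{\zeta}_h^{(t+1)}}$, and on invoking $\eta\le\frac{1}{8H}$ together with $\normbig{\Q{t}_h(s)}_\infty\lesssim H$ (so that $4\eta H\le\frac12$) to dominate all $O(\eta^2)$ remainders; Lemma~\ref{lemma:one_step_policy_bound_epi} supplies the auxiliary magnitude bounds throughout. The comparatively generous step-size range---in contrast to the constraint $\eta\lesssim(1-\gamma)^3/|\cS|$ in the discounted Lemma~\ref{lemma:core}---reflects the stage-wise decoupling afforded by the time-inhomogeneous finite-horizon structure, which lets these constants be chosen far more loosely.
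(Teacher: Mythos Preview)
Your plan is correct and matches the paper's approach: establish a one-step contraction of the Lyapunov quantity $\Phi_h^{(t)}=\KLs{\besth{\zeta}}{\zeta_h^{(t)}}+(1-4\eta H)\KLs{\zeta_h^{(t)}}{\bar\zeta_h^{(t)}}$ via the three-point identity and the QRE optimality condition, drop the surplus $\eta\tau\KLs{\bar\zeta_h^{(t+1)}}{\besth{\zeta}}$, and unroll.

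Two places where the paper's execution differs slightly from your sketch are worth noting, since they are exactly where the ``coefficient bookkeeping'' you flag gets resolved. First, the paper does \emph{not} route the optimism gap through $\bar\zeta_h^{(t)}-\bar\zeta_h^{(t-1)}$ (as in the discounted Lemma~\ref{lemma:core}); instead, when bounding the cross term $\langle\log\bar\zeta_h^{(t+1)}-\log\zeta_h^{(t+1)},\bar\zeta_h^{(t+1)}-\zeta_h^{(t+1)}\rangle$, it splits $\|\bar\nu_h^{(t+1)}-\bar\nu_h^{(t)}\|_1\le\|\bar\nu_h^{(t+1)}-\nu_h^{(t)}\|_1+\|\nu_h^{(t)}-\bar\nu_h^{(t)}\|_1$, producing $\KLs{\bar\zeta_h^{(t+1)}}{\zeta_h^{(t)}}$ and $\KLs{\zeta_h^{(t)}}{\bar\zeta_h^{(t)}}$, both of which are already present (the former appears with coefficient $(1-\eta\tau)$ from the LHS rewriting, the latter is the $(1-4\eta H)$ piece of $\Phi_h^{(t)}$). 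This is why the recursion closes with exactly $(1-4\eta H)$ and no lagged $\bar\zeta^{(t-1)}$ term, without the heavier weighted-averaging machinery of the discounted case. Second, the residual $\|\Q{t}_h(s)-\Q{t+1}_h(s)\|_\infty$ is not left as a consecutive-$Q$ difference but is immediately triangle-inequalitied through $\besth{Q}$; the resulting one-step bound is $\Phi_h^{(t+1)}\le(1-\eta\tau)\Phi_h^{(t)}+\tfrac{5\eta}{2}\normbig{\Q{t+1}_h(s)-\besth{Q}(s)}_\infty+\tfrac{\eta}{2}\normbig{\Q{t}_h(s)-\besth{Q}(s)}_\infty$, and the $l=t_1$ term in the final sum arises \emph{genuinely} from the $\Q{t}$ contribution after unrolling, not merely by enlarging the range.
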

\begin{proof}
	See Appendix \ref{sec:pf_lemma_Hamlet}.
\end{proof}

\begin{lemma}
\label{lemma:Horatio}
With $0 < \eta \le \frac{1}{8H}$, it holds for all $0 < t_1 \le t_2$, $2\le h \le H$ and $s\in \cS$ that
\begin{align*}
	&\big|\Q{t_2}_{h-1}(s,a,b) - Q_{h-1,\tau}^\star(s,a,b)\big|\\
	 &\le 2(1-\eta\tau)^{t_2-t_1}H + 10\eta\tau\exlim{s'\sim P_{h-1}(\cdot|s,a,b)}{\sum_{l=t_1-1}^{t_2-1}(1-\eta\tau)^{t_2-1-l} \normbig{\Q{l}_h(s) - \besth{Q}(s)}_\infty }\\
	 &\qquad + \tau (1-\eta\tau)^{t_2-t_1}\exlim{s'\sim P_{h-1}(\cdot|s,a,b)}{\KLs{\besth{\zeta}}{\zeta^{(t_1-1)}_h}  + (1-4\eta H)\KLs{\zeta^{(t_1-1)}_h}{\bar{\zeta}^{(t_1-1)}_h} }.
\end{align*}
\end{lemma}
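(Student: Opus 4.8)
The plan is to push the error from step $h$ down to step $h-1$ through the critic update, since the two $Q$-functions share the same reward. Because $r_{h-1}(s,a,b)$ enters $\Q{t_2}_{h-1}(s,a,b)$ and $Q_{h-1,\tau}^\star(s,a,b)$ identically, it cancels and leaves $\Q{t_2}_{h-1}(s,a,b) - Q_{h-1,\tau}^\star(s,a,b) = \ex{s'\sim P_{h-1}(\cdot|s,a,b)}{\V{t_2-1}_h(s') - V_{h,\tau}^\star(s')}$. It therefore suffices to control the scalar value error $\big|\V{t_2-1}_h(s') - V_{h,\tau}^\star(s')\big|$ at each next state $s'$, and take the expectation over $P_{h-1}(\cdot|s,a,b)$ only at the very end (this is also what puts all three right-hand-side terms under $\ex{s'}{\cdot}$).

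Next I would unroll the critic recursion. Writing $g^{(l)}(s') = \bmut_h(s')^\top \Q{l}_h(s')\bnut_h(s') + \tau\ent{\bmut_h(s')} - \tau\ent{\bnut_h(s')}$ for the one-step regularized game value and using the constant step $\alpha_l=\eta\tau$, the update $\V{l}_h = (1-\eta\tau)\V{l-1}_h + \eta\tau\, g^{(l)}$ iterated from $t_1-1$ up to $t_2-1$ gives $\V{t_2-1}_h(s') - V_{h,\tau}^\star(s') = (1-\eta\tau)^{t_2-t_1}\big(\V{t_1-1}_h(s') - V_{h,\tau}^\star(s')\big) + \eta\tau\sum_{l=t_1}^{t_2-1}(1-\eta\tau)^{t_2-1-l}\big(g^{(l)}(s') - V_{h,\tau}^\star(s')\big)$. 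The leading term is handled by the crude bound $\big|\V{t_1-1}_h(s') - V_{h,\tau}^\star(s')\big|\le 2H$, valid because both value functions, including the entropy contribution under assumption \eqref{eq:pomelo}, lie in an interval of width $O(H)$; this yields the $2(1-\eta\tau)^{t_2-t_1}H$ term.

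I would then split each one-step error as $g^{(l)}(s') - V_{h,\tau}^\star(s') = \bmut_h(s')^\top\big(\Q{l}_h(s')-\besth{Q}(s')\big)\bnut_h(s') + \big(f_{s'}(\besth{Q},\bmut_h,\bnut_h) - V_{h,\tau}^\star(s')\big)$. The first piece has absolute value at most $\normbig{\Q{l}_h(s')-\besth{Q}(s')}_\infty$ since $\bmut_h,\bnut_h$ are distributions, and weighting by $\eta\tau(1-\eta\tau)^{t_2-1-l}$ already generates part of the Q-error sum. The second piece is the one-step duality/game-value gap of the midpoint policy against the matrix-game QRE with payoff $\besth{Q}(s')$; invoking the saddle property of $(\best{\mu},\best{\nu})$ together with the $\tau$-strong convexity/concavity induced by the entropy (the finite-horizon analogue of Lemma~\ref{lemma:dualgap_matrix}, with $1/(1-\gamma)$ replaced by $H$), I would bound it two-sidedly by the policy KL divergences between $\bzt_h$ and $\besth{\zeta}$ at $s'$.

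The main obstacle is converting the \emph{accumulated} game-value gaps into exactly the clean target form. Bounding each gap by the current policy KL and then expanding it through Lemma~\ref{lemma:Hamlet} reproduces weighted sums of the shape $\eta\tau\sum_l (1-\eta\tau)^{t_2-1-l}(1-\eta\tau)^{l-(t_1-1)}$, whose summands all collapse to $(1-\eta\tau)^{t_2-t_1}$ and threaten a spurious factor of $(t_2-t_1)$. That factor is precisely the mechanism behind the $t^{H-h}$ growth that surfaces in Theorem~\ref{thm:Antonio} when this lemma is chained over $h$, so it must be kept out of the single-step estimate here. The delicate work is thus the bookkeeping: using the log-boundedness of all policies from Lemma~\ref{lemma:one_step_policy_bound_epi} (which makes forward and reverse KL, and the $\bzt$-versus-$\zt$ divergences, interchangeable up to the $H/\tau$ scale and lets me pass Pinsker through the bilinear cross term), invoking Lemma~\ref{lemma:Hamlet} to absorb the per-step policy errors into (i) extra $\normbig{\Q{l}_h-\besth{Q}}_\infty$ terms that extend the summation down to $l=t_1-1$ and inflate its coefficient to $10\eta\tau$, and (ii) a single decayed initial term $\tau(1-\eta\tau)^{t_2-t_1}\big[\KLs{\besth{\zeta}}{\zeta^{(t_1-1)}_h} + (1-4\eta H)\KLs{\zeta^{(t_1-1)}_h}{\bar{\zeta}^{(t_1-1)}_h}\big]$ at $s'$, all while tracking the numerical constants so that the constraint $\eta\le \tfrac{1}{8H}$ (hence $\eta H\le\tfrac18$) makes every coefficient close. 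A final expectation over $s'\sim P_{h-1}(\cdot|s,a,b)$ then assembles the three contributions.
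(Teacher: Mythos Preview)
Your outline is correct through the unrolling of the critic recursion, the crude $2H$ bound on the initial value error, and the split of $g^{(l)}-V_{h,\tau}^\star$ into a $\|\Q{l}_h-\besth{Q}\|_\infty$ piece and a one-step game-value gap. The genuine gap is precisely the obstacle you yourself flag and then wave away as ``bookkeeping'': routing the game-value gap through the episodic analogue of Lemma~\ref{lemma:dualgap_matrix} and then Lemma~\ref{lemma:Hamlet} cannot avoid the spurious $(t_2-t_1)$ factor. Lemma~\ref{lemma:dualgap_matrix} bounds the gap at step $l$ by a \emph{non-telescoping} quantity $\frac{4H^2}{\tau}\KLs{\besth{\zeta}}{\bzt_h}+\tau\KLs{\bzt_h}{\besth{\zeta}}$; feeding this through Lemma~\ref{lemma:Hamlet} produces $(1-\eta\tau)^{l-(t_1-1)}$ times the initial potential, and after summation against $\eta\tau(1-\eta\tau)^{t_2-1-l}$ you are stuck with $\eta\tau(t_2-t_1)(1-\eta\tau)^{t_2-t_1}$ times that potential, with a prefactor $H^2/\tau$ that is far too large. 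No amount of swapping $\bzt$ for $\zt$ or forward for reverse KL via Lemma~\ref{lemma:one_step_policy_bound_epi} removes this; the issue is structural, not constants.

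The paper's fix is to use a different tool for the one-step game-value gap: the episodic version of Lemma~\ref{lemma:one_step_regret}. That lemma bounds $f_s(\Q{t}_h,\bmut_h,\bnut_h)-f_s(\Q{t}_h,\bmut_h,\besth{\nu})$ by a Q-error term plus a \emph{potential difference} of the form $\frac{1-\eta\tau}{\eta}G_h^{(t-1)}-\frac{1}{\eta}G_h^{(t)}$ (after a $\tfrac{2}{3}$-weighted combination of the two one-sided inequalities \eqref{eq:opt_gap_pos_epi} and \eqref{eq:opt_gap_neg_epi} to make the cross KL terms close under $\eta\le\tfrac{1}{8H}$). This potential difference telescopes \emph{exactly} against the weights $\eta\tau(1-\eta\tau)^{t_2-1-l}$, leaving only the boundary term $\tau(1-\eta\tau)^{t_2-t_1}G_h^{(t_1-1)}$ and no $(t_2-t_1)$ factor. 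That telescoping structure, not Lemma~\ref{lemma:dualgap_matrix} plus Lemma~\ref{lemma:Hamlet}, is the missing idea.
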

\begin{proof}
	See Appendix \ref{sec:pf_lemma_Horatio}.
\end{proof}
 
 \paragraph{Proof of Theorem~\ref{thm:Antonio}.} We prove Theorem \ref{thm:Antonio} by induction. By definition, we have
\[
	\normbig{Q_{H, \tau}^\star - \Q{0}_{H}}_\infty = \normbig{Q_{H, \tau}^\star}_\infty \le 1,
\]
and $\normbig{Q_{H, \tau}^\star - \Q{t}_{H}}_\infty = \normbig{r_H - r_H}_\infty = 0$ for $t > 0$. So \eqref{eq:Q_conv_epi} holds trivially for $h = H$.
When the statement holds for some $h$, we can invoke Lemma \ref{lemma:Horatio} with $t_1 = T_h + 1$ and $t_2 = t \ge T_{h-1}$, which yields
\begin{align*}
	 \normbig{\Q{t}_{h-1} - Q_{h-1,\tau}^\star}
	&\le 2(1-\eta\tau)^{t-T_h - 1}H + 10\eta\tau\exlim{s'\sim P(\cdot|s,a,b)}{\sum_{l=T_h}^{t-1}(1-\eta\tau)^{t-1-l} \normbig{\Q{l}_h(s) - \besth{Q}(s)}_\infty }\\
	&\qquad + \tau (1-\eta\tau)^{t-T_h - 1}\exlim{s'\sim P(\cdot|s,a,b)}{\KLs{\besth{\zeta}}{\zeta^{(T_h)}_h}  + (1-4\eta H)\KLs{\zeta^{(T_h)}_h}{\bar{\zeta}^{(T_h)}_h}}\\
		 &\le 2(1-\eta\tau)^{t-T_h - 1}H + 10\eta\tau\exlim{s'\sim P(\cdot|s,a,b)}{\sum_{l=T_h}^{t-1}(1-\eta\tau)^{t-T_h-1} l^{H-h} }\\
	&\qquad + \tau (1-\eta\tau)^{t-T_h - 1}\exlim{s'\sim P(\cdot|s,a,b)}{\KLs{\besth{\zeta}}{\zeta^{(T_h)}_h}  + (1-4\eta H)\KLs{\zeta^{(T_h)}_h}{\bar{\zeta}^{(T_h)}_h} }\\
	&\le (1-\eta\tau)^{t-T_{h-1}} (1-\eta\tau)^{T_{\mathsf{start}} -1}\Big[10H + 10\eta\tau t^{H-h+1} \Big],
\end{align*}
where the last step results from 
\begin{align*}
	&\tau\Big(\KLs{\besth{\zeta}}{\zeta^{(T_h)}_h}  + (1-4\eta H)\KLs{\zeta^{(T_h)}_h}{\bar{\zeta}^{(T_h)}_h}\Big)\\
	&\le \tau\Big(\normbig{\log \besth{\mu}(s) - \log\mu^{(T_h)}_h(s)}_\infty + \normbig{\log \besth{\nu}(s) - \log\nu^{(T_h)}_h(s)}_\infty\\
	&\qquad + \normbig{\log\mu^{(T_h)}_h(s) - \log\bar\mu^{(T_h)}_h(s)}_\infty + \normbig{\log\nu^{(T_h)}_h(s) - \log\bar\nu^{(T_h)}_h(s)}_\infty\Big)\\
	&\le \tau\Big(\max\big\{\normbig{\log \besth{\mu}(s)}_\infty \normbig{\log\mu^{(T_h)}_h(s)}_\infty\big\} + \max\big\{\normbig{\log \besth{\nu}(s)}_\infty, \normbig{\log\nu^{(T_h)}_h(s)}_\infty\big\}\\
	&\qquad + \max\big\{\normbig{\log\mu^{(T_h)}_h(s)}_\infty, \normbig{\log\bar\mu^{(T_h)}_h(s)}_\infty\big\} + \max\big\{\normbig{\log\nu^{(T_h)}_h(s)}_\infty, \normbig{\log\bar\nu^{(T_h)}_h(s)}_\infty\big\}\Big)\\
	&\le 8H,
\end{align*}
where the last step results from Lemma~\ref{lemma:one_step_policy_bound_epi} (cf. \eqref{eq:log_bound_epi}).
	Therefore, with $T_{\mathsf{start}} = \lceil\frac{1}{\eta\tau}\log H\rceil$ we can guarantee that 
		\begin{align*}
	\normbig{\Q{t}_{h-1} - Q_{h-1,\tau}^\star} &\le 10(1-\eta\tau)^{t-T_{h-1}} (1-\eta\tau)^{T_{\mathsf{start}} -1}\Big[H + \eta\tau t^{H-h+1} \Big]\\
	&\le (1-\eta\tau)^{t-T_{h-1}} t^{H-h+1}.
	\end{align*}
	This completes the proof for \eqref{eq:Q_conv_epi}. Regarding \eqref{eq:gap_conv_epi}, 
	we start by the following lemmas, which are simply Lemma \ref{lemma:bar_zeta_2_zeta} and Lemma \ref{lemma:dualgap_matrix} applied to the episodic setting.
\begingroup
\setcounter{tmp}{\value{lemma}}
\setcounterref{lemma}{lemma:bar_zeta_2_zeta} 
\addtocounter{lemma}{-1}
\renewcommand\thelemma{\arabic{lemma}A}

\begin{lemma}
\label{lemma:bar_zeta_2_zeta_epi}
With $0 < \eta \le \frac{1}{8H}$, we have
\begin{align*}
	&\frac{1}{2}\KLs{\besth{\zeta}}{\bztp_h} + \eta\tau \KLs{\bztp_h}{\besth{\zeta}}\\
	&\le (1-\eta\tau)\KLs{\besth{\zeta}}{\zt_h} + 2\eta H\KLs{\zt_h}{\bzt_h} + 2\eta \normbig{\Q{t}_h(s) - \besth{Q}(s)}_\infty.
\end{align*}
\end{lemma}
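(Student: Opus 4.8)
The plan is to mirror the proof of Lemma~\ref{lemma:bar_zeta_2_zeta} essentially verbatim, carried out at a fixed horizon index $h\in[H]$ and state $s\in\cS$. The key observation making this transfer possible is that the policy updates \eqref{eq:update_epi}--\eqref{eq:update_bar_epi} are \emph{exactly} an entropy-regularized OMWU step on the matrix game whose payoff is $\Q{t}_h(s)$, while $\besth{\zeta}(s)$ is the QRE of the matrix game with payoff $\besth{Q}(s)$. The only structural difference from the discounted case is that the entries of the running payoff are of order $H$ (rewards lie in $[0,1]$ and at most $H-h+1$ steps remain, with $\tau\cH(\cdot)\le\tau\log|\cA|\le 1$ contributing negligibly), as opposed to $\normbig{\Q{t}(s)}_\infty\lesssim 1/(1-\gamma)$ in the infinite-horizon setting. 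Consequently every appearance of the factor $1/(1-\gamma)$ in the discounted argument should be replaced by $H$, which is already reflected in the matching substitution of the step-size threshold $\eta\le (1-\gamma)/8 \mapsto \eta\le 1/(8H)$.

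Concretely, I would first view the map $\zt_h \mapsto \bztp_h$ as a single mirror-descent step with the negative-entropy mirror map and the optimistically extrapolated linear objective $\eta\,\Q{t}_h(s)\bnut_h(s)$ for the max-player block (and its negated transpose for the min-player block), together with the $(1-\eta\tau)$ shrinkage toward uniform induced by the regularizer. Then I would invoke the three-point (generalized Pythagorean) identity for the KL divergence to expand $\KLs{\besth{\zeta}}{\bztp_h}$ in terms of $\KLs{\besth{\zeta}}{\zt_h}$, an auxiliary term $\KLs{\bztp_h}{\zt_h}$, and an inner-product term against the update direction. Strong convexity of the negative entropy supplies the contraction factor $1-\eta\tau$ multiplying $\KLs{\besth{\zeta}}{\zt_h}$ on the right, together with the extra $\eta\tau\,\KLs{\bztp_h}{\besth{\zeta}}$ that remains on the left, exactly as in the matrix-game analysis of \citet{cen2021fast}.

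The two error terms then arise from replacing idealized quantities by their iterates. Comparing the QRE optimality condition for $\besth{\zeta}(s)$ under the true payoff $\besth{Q}(s)$ against the algorithm's use of $\Q{t}_h(s)$ produces the mismatch $2\eta\normbig{\Q{t}_h(s) - \besth{Q}(s)}_\infty$, via H\"older's inequality $|\langle x,\,\Q{t}_h(s)-\besth{Q}(s)\rangle|\le\normbig{x}_1\normbig{\Q{t}_h(s)-\besth{Q}(s)}_\infty$ together with $\normbig{x}_1\le 2$ for a difference of two distributions (so the coefficient $2\eta$ is unchanged from the discounted case). The cross term $2\eta H\,\KLs{\zt_h}{\bzt_h}$ comes from the optimistic correction: bounding the difference between the gradient direction evaluated at the midpoint $\bzt_h$ and at $\zt_h$, where the payoff magnitude $\normbig{\Q{t}_h(s)}_\infty\lesssim H$ enters as the Lipschitz constant and Pinsker's inequality converts the resulting $\ell_1$ policy deviation back into $\KLs{\zt_h}{\bzt_h}$.

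The main obstacle is bookkeeping rather than any genuinely new idea: I must verify that, under $\eta\le 1/(8H)$, the coefficients multiplying the nonnegative auxiliary KL terms (notably $\KLs{\bztp_h}{\zt_h}$) stay nonnegative so that they can be dropped, precisely as $\eta\le (1-\gamma)/8$ guarantees in the discounted analysis. In short, the crux is confirming that the single substitution $1/(1-\gamma)\mapsto H$ is applied consistently across both the step-size threshold and every coefficient, so that the discounted inequality of Lemma~\ref{lemma:bar_zeta_2_zeta} transfers to the episodic setting without any loss.
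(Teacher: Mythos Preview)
Your proposal is correct and matches the paper's own treatment: the paper does not give a separate proof of Lemma~\ref{lemma:bar_zeta_2_zeta_epi} but simply states that it is Lemma~\ref{lemma:bar_zeta_2_zeta} ``applied to the episodic setting,'' i.e., exactly the substitution $1/(1-\gamma)\mapsto H$ you describe. Your identification of where each term originates---the three-point identity producing the $(1-\eta\tau)$ contraction and the $\eta\tau\,\KLs{\bztp_h}{\besth{\zeta}}$ term, H\"older on the payoff mismatch yielding $2\eta\normbig{\Q{t}_h(s)-\besth{Q}(s)}_\infty$, and the Lipschitz-plus-Pinsker argument with constant $\normbig{\Q{t}_h(s)}_\infty\lesssim H$ yielding $2\eta H\,\KLs{\zt_h}{\bzt_h}$---and the final coefficient check that $\eta\le 1/(8H)$ forces $1-4\eta H\ge 1/2$ and $1-\eta\tau-2\eta H\ge 0$ so the auxiliary $\KLs{\bztp_h}{\zt_h}$ term may be dropped, are all in line with the discounted proof in Appendix~\ref{sec:pf_lemma_bar_zeta_2_zeta}.
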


\setcounterref{lemma}{lemma:dualgap_matrix} 
\addtocounter{lemma}{-1}

\begin{lemma}
\label{lemma:dualgap_matrix_epi}
It holds for all $h\in[H]$, $s\in\cS$ and policy pair $\mu, \nu$ that
\begin{equation*}
	\max_{\mu',\nu'} \big( f_s(\besth{Q}, \mu'_h, \nu_h) - f_s(\best{Q}, \mu_h, \nu'_h)\big) \le \frac{4H^2}{\tau}\KLs{\besth{\zeta}}{\zeta_h} + \tau \KLs{\zeta_h}{\besth{\zeta}}.
\end{equation*}
\end{lemma}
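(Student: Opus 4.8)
This is a purely one-step (matrix-game) statement, so the plan is to reduce it to the discounted counterpart, Lemma~\ref{lemma:dualgap_matrix} (which is \cite[Lemma 4]{cen2021fast}), and then supply the episodic payoff magnitude bound. The key observation is that, by the Bellman characterization of the regularized QRE in \eqref{eq:min_max_Q} adapted to step $h$, the pair $\besth{\zeta}(s) = (\besth{\mu}(s), \besth{\nu}(s))$ is exactly the quantal response equilibrium of the entropy-regularized matrix game with payoff matrix $A := \besth{Q}(s)$, i.e.\ $\besth{\mu}(s)\propto\exp([A\besth{\nu}(s)]/\tau)$ and $\besth{\nu}(s)\propto\exp(-[A^\top\besth{\mu}(s)]/\tau)$. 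Hence the left-hand side is precisely the duality gap of the policy pair $\zeta_h(s)$ in this matrix game, and it suffices to establish the matrix-game bound $\text{gap} \le \frac{\norm{A}_\infty^2}{\tau}\KLs{\besth{\zeta}}{\zeta_h} + \tau\KLs{\zeta_h}{\besth{\zeta}}$ together with the estimate $\norm{A}_\infty \le 2H$.

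For the matrix-game bound I would decompose the duality gap around the saddle value $V^\star = f_s(\besth{Q},\besth{\mu},\besth{\nu})$ into a max-player term $\delta_\mu = \max_{\mu'}f_s(\besth{Q},\mu',\nu_h) - V^\star$ and a min-player term $\delta_\nu = V^\star - \min_{\nu'}f_s(\besth{Q},\mu_h,\nu')$. For $\delta_\mu$, writing $g(\nu) = \max_{\mu'}f_s(\besth{Q},\mu',\nu) = \tau\log\sum_a\exp([A\nu]_a/\tau) - \tau\cH(\nu)$, I would use that the log-sum-exp part is $\frac{\norm{A}_\infty^2}{\tau}$-smooth with respect to $\norm{\cdot}_1$ to obtain a quadratic upper expansion of $g(\nu_h(s))$ about $\besth{\nu}(s)$, rewrite the entropy difference as the Bregman divergence $\tau\KLs{\besth{\nu}}{\nu_h}$ plus a linear term, and then invoke the first-order optimality of $\besth{\nu}(s)$ as the simplex-constrained minimizer of $g$ to cancel every linear term at once. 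Pinsker's inequality $\norm{\besth{\nu}(s)-\nu_h(s)}_1^2 \le 2\KLs{\besth{\nu}}{\nu_h}$ then converts the surviving quadratic residual into KL, yielding $\delta_\mu \le \frac{\norm{A}_\infty^2}{\tau}\KLs{\besth{\nu}}{\nu_h} + \tau\KLs{\nu_h}{\besth{\nu}}$. A symmetric argument for the min player bounds $\delta_\nu$ by the corresponding $\mu$-divergences, and summing $\delta_\mu + \delta_\nu$ gives the matrix-game bound.

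Finally, the only episodic-specific ingredient is the payoff magnitude. Expanding $\besth{Q}(s,a,b) = r_h(s,a,b) + \Exs_{s'\sim P_h(\cdot|s,a,b)}[\,\cdot\,]$, the cumulative reward over the at most $H$ remaining stages is bounded by $H$, and the cumulative regularization is bounded by $H$ as well since each stage contributes a term of absolute value at most $\tau(\log|\cA|+\log|\cB|)\le 1$ under the standing assumption \eqref{eq:pomelo}; hence $\norm{\besth{Q}(s)}_\infty \le 2H$. Substituting $\norm{A}_\infty \le 2H$ turns $\frac{\norm{A}_\infty^2}{\tau}$ into $\frac{4H^2}{\tau}$, matching the claim. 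I expect the only genuinely delicate point to be securing the \emph{quadratic} (i.e.\ KL-level rather than $\sqrt{\mathsf{KL}}$-level) dependence in $\delta_\mu$ and $\delta_\nu$: a naive estimate that simply plugs the best response into $f_s$ discards the saddle structure and yields only a $\norm{\cdot}_1$ bound, hence $\sqrt{\mathsf{KL}}$. The smoothness-plus-first-order-optimality cancellation is exactly what recovers the sharp KL dependence, and since this argument is inherited verbatim from the proof of Lemma~\ref{lemma:dualgap_matrix}, the episodic adaptation amounts to the routine payoff bookkeeping above.
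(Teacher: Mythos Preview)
Your proposal is correct and follows exactly the paper's approach: the paper states explicitly that Lemma~\ref{lemma:dualgap_matrix_epi} is ``simply Lemma~\ref{lemma:dualgap_matrix} applied to the episodic setting,'' and your reduction does precisely this---invoke the matrix-game duality-gap bound with payoff $A=\besth{Q}(s)$ and replace the discounted estimate $\|A\|_\infty\le 2/(1-\gamma)$ by the episodic one $\|A\|_\infty\le 2H$. Your additional self-contained derivation of the underlying matrix-game bound via smoothness of the soft-max plus first-order optimality at $\besth{\nu}(s)$ is also sound (with one harmless slip: the Bregman term coming from the entropy difference is $\tau\KLs{\nu_h}{\besth{\nu}}$, not $\tau\KLs{\besth{\nu}}{\nu_h}$; your stated final bound has it right).
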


\endgroup
\setcounter{lemma}{\thetmp}
We conclude that for $0\le t_1 \le t_2-1$,
\begin{align*}
	&\max_{\mu,\nu} \big( f_s(\besth{Q}, \mu_h, \bar{\nu}^{(t_2)}_h) - f_s(\best{Q}, \bar{\mu}^{(t_2)}_h, \nu_h)\big) \\
	&\overset{\mathrm{(i)}}{\le} \frac{4H^2}{\tau}\KLs{\besth{\zeta}}{\bar{\zeta}^{(t_2)}_h} + \tau \KLs{\bar{\zeta}^{(t_2)}_h}{\besth{\zeta}}\\
	&\le \max\Big\{\frac{8H^2}{\tau}, \frac{1}{\eta}\Big\} \Big(\frac{1}{2}\KLs{\besth{\zeta}}{\bar{\zeta}^{(t_2)}_h} + \eta\tau \KLs{\bar{\zeta}^{(t_2)}_h}{\besth{\zeta}}\Big)\\
	&\overset{\mathrm{(ii)}}{\le} \max\Big\{\frac{8H^2}{\tau}, \frac{1}{\eta}\Big\}\Big((1-\eta\tau)\KLs{\besth{\zeta}}{\zeta_h^{(t_2-1)}} + 2\eta H\KLs{\zeta_h^{(t_2-1)}}{\bar{\zeta}_h^{(t_2-1)}} + 2\eta \normbig{\Q{t_2-1}_h(s) - \besth{Q}(s)}_\infty\Big)\\
	&\overset{\mathrm{(iii)}}{\le} \max\Big\{\frac{8H^2}{\tau}, \frac{1}{\eta}\Big\} \Big((1-\eta\tau)^{t_2-t_1}\Big(\KLs{\besth{\zeta}}{\zeta_h^{(t_1)}} + (1-4\eta H)\KLs{\zeta_h^{(t_1)}}{\bar{\zeta}_h^{(t_1)}}\Big) \\
	&\qquad + 6\eta\sum_{l=t_1}^{t_2} (1-\eta\tau)^{t_2-l}  \normbig{\Q{l}_h(s) - \besth{Q}(s)}_\infty\Big),
\end{align*}
where (i) invokes Lemma \ref{lemma:dualgap_matrix_epi}, (ii) invokes Lemma \ref{lemma:bar_zeta_2_zeta_epi} and (iii) results from Lemma \ref{lemma:Hamlet}. 
It is straightforward to verify that the above inequality holds for $0 \le t_1 \le t_2$, by omitting the third step. Substitution of \eqref{eq:Q_conv_epi} into the above inequality yields
\begin{align}
	&\max_{\mu,\nu} \big( f_s(\besth{Q}, \mu_h, \bar{\nu}^{(t)}_h) - f_s(\best{Q}, \bar{\mu}^{(t)}_h, \nu_h)\big) \nonumber\\
	&\le \max\Big\{\frac{8H^2}{\tau}, \frac{1}{\eta}\Big\} \Big((1-\eta\tau)^{t-T_h}\Big(\KLs{\besth{\zeta}}{\zeta_h^{(T_h)}} + (1-4\eta H)\KLs{\zeta_h^{(T_h)}}{\bar{\zeta}_h^{(T_h)}}\Big) \nonumber\\
	&\qquad + 6\eta\sum_{l=T_h}^{t} (1-\eta\tau)^{t-l} (1-\eta\tau)^{l-T_h} l^{H-h}\Big)\nonumber\\
	&\le (1-\eta\tau)^{t-T_h}\max\Big\{\frac{8H^2}{\tau}, \frac{1}{\eta}\Big\} \Big(\frac{8H}{\tau} + 6\eta t^{H-h+1}\Big).
	\label{eq:dualgap_epi_matrix_step}
\end{align}
We prove the following results instead, where \eqref{eq:gap_conv_epi} is a direct consequence of \eqref{eq:gap_conv_epi_split} by summing up the two inequalities,
\begin{equation}
	\begin{cases}
		\max\limits_{s\in\cS,\mu}\Big(V_{h,\tau}^{\mu, \bnut}(s) - \besth{V}(s)\Big) \le 2(1-\eta\tau)^{t-T_h}\max\Big\{\frac{8H^2}{\tau}, \frac{1}{\eta}\Big\} \Big(\frac{8H}{\tau} + 6\eta t^{H-h+1}\Big)\\
		\max\limits_{s\in\cS,\mu}\Big(\besth{V}(s) - V_{h,\tau}^{\bmut, \nu}(s)\Big) \le 2(1-\eta\tau)^{t-T_h}\max\Big\{\frac{8H^2}{\tau}, \frac{1}{\eta}\Big\} \Big(\frac{8H}{\tau} + 6\eta t^{H-h+1}\Big)\\
	\end{cases} .
	\label{eq:gap_conv_epi_split}
\end{equation}
We prove by induction. Note that when $h = H$, we have $V_{H,\tau}^{\mu, \nu}(s) = f_s(r_H, \mu_H, \nu_H) = f_s(Q_{H,\tau}^\star, \mu_H, \nu_H)$ and the claim holds by invoking \eqref{eq:dualgap_epi_matrix_step}. When the claim holds for some $2\le h \le H$, we have
	\begin{align*}
		&V_{h-1,\tau}^{\mu, \bnut}(s) - V_{h-1,\tau}^\star(s) \\
		&= \mu_{h-1}(s)^\top Q_{h-1,\tau}^{\mu, \bnut}(s)\bnut_{h-1}(s) + \tau \cH\big(\mu_{h-1}(s)\big) - \tau \cH\big(\bnut_{h-1}(s)\big)\\
		&\qquad - \mu_{h-1,\tau}^\star(s)^\top Q_{h-1,\tau}^\star(s)\nu_{h-1,\tau}^\star(s) + \tau \cH\big(\mu_{h-1,\tau}^\star(s)\big) - \tau \cH\big(\nu_{h-1,\tau}^\star(s)\big)\\
		&= f_s(Q_{h-1,\tau}^\star, \mu_{h-1}, \bnut_{h-1}) - f_s(Q_{h-1,\tau}^\star, \mu_{h-1,\tau}^\star, \nu_{h-1,\tau}^\star) + \mu_{h-1}(s)^\top \big(Q_{h-1,\tau}^{\mu, \bnut}(s) - Q_{h-1,\tau}^\star(s)\big)\bnut_{h-1}(s)\\
		&\le f_s(Q_{h-1,\tau}^\star, \mu_{h-1}, \bnut_{h-1}) - f_s(Q_{h-1,\tau}^\star, \bmut_{h-1}, \nu_{h-1,\tau}^\star) + \max_{s'\in\cS}\Big[V_{h,\tau}^{\mu,\bnut}(s') - V_{h,\tau}^\star(s')\Big]\\
		&\le \max\limits_{\mu_{h-1}', \nu_{h-1}'}\Big(f_s(Q_{h-1,\tau}^\star, \mu_{h-1}', \bnut_{h-1}) - f_s(Q_{h-1,\tau}^\star, \bmut_{h-1}, \nu_{h-1}')\Big) + \max_{s'\in\cS}\Big[V_{h,\tau}^{\mu,\bnut}(s') - V_{h,\tau}^\star(s')\Big]\\
		&\le(1-\eta\tau)^{t-T_{h-1}}\max\Big\{\frac{8H^2}{\tau}, \frac{1}{\eta}\Big\} \Big(\frac{8H}{\tau} + 6\eta t^{H-h+2}\Big)\\
		&\qquad + 2(1-\eta\tau)^{t-T_{h}}\max\Big\{\frac{8H^2}{\tau}, \frac{1}{\eta}\Big\} \Big(\frac{8H}{\tau} + 6\eta t^{H-h+1}\Big)\\
		&\le2(1-\eta\tau)^{t-T_{h-1}}\max\Big\{\frac{8H^2}{\tau}, \frac{1}{\eta}\Big\} \Big(\frac{8H}{\tau} + 6\eta t^{H-h+2}\Big).
	\end{align*}
	Taking maximum over $\mu$ verifies the claim for $h-1$, thereby finishing the proof. The bound for $\max\limits_{s\in\cS,\mu}\Big(\besth{V}(s) - V_{h,\tau}^{\bmut, \nu}(s)\Big)$ can be established by following a similar argument and is therefore omitted.

\section{Proof of key lemmas for the infinite-horizon setting}

\subsection{Proof of Lemma \ref{lemma:core}}
\label{sec:pf_lemma_core}

Before proceeding, we shall introduce the following useful lemma that quantifies the distance between two consecutive updates, whose proof can be found in Appendix \ref{sec:pf_lemma_one_step_policy_bound}.
\begin{lemma}
    \label{lemma:one_step_policy_bound}
    For $0 < \eta \le 1/\tau$, it holds for all $s \in \cS$ and $t \ge 0$ that
    \begin{subequations}
    \begin{align}
            \max\big\{\normbig{\bmutp(s) - \mutp(s)}_1,\, \normbig{\bnutp(s) - \nu^{(t+1)}(s)}_1\big\} & \le \dfrac{2\eta}{1-\gamma} , \label{eq:diff_bar_unbar}\\
        \max\big\{\normbig{\bmutp(s) - \bmut(s)}_1,\, \normbig{\bnutp(s) - \bnut(s)}_1\big\} & \le \frac{6\eta}{1-\gamma}, \label{eq:diff_bar}
    \end{align}
    \end{subequations}
    and that
    \begin{equation}
	\max\big\{\normbig{\log \zt(s)}_\infty, \normbig{\log \bzt(s)}_\infty, \normbig{\log \best{\zeta}(s)}_\infty \big\}\le \frac{2}{(1-\gamma)\tau}.
	\label{eq:log_bound}
	\end{equation}
\end{lemma}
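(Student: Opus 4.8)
The plan is to handle the three displayed bounds in sequence, since \eqref{eq:log_bound} is the foundation on which the $L_1$ estimates \eqref{eq:diff_bar_unbar}--\eqref{eq:diff_bar} rest. A preliminary and self-contained ingredient I would establish first is a uniform bound $\normbig{\Q{t}(s)}_\infty \lesssim \frac{1}{1-\gamma}$, proved by induction on the value recursion \eqref{eq:value_update_kaoya}: since $r\in[0,1]$, since $\V{t+1}(s)$ is a convex combination of $\V{t}(s)$ and the one-step regularized game value $\bmutp(s)^\top \Q{t+1}(s)\bnutp(s) + \tau\ent{\bmutp(s)} - \tau\ent{\bnutp(s)}$, and since the entropy terms are controlled by $\tau(\log|\cA|+\log|\cB|)\le 1$ under \eqref{eq:pomelo}, the value stays of order $\frac{1}{1-\gamma}$ and hence so does $Q$. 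The common engine for all three claims is the log-linear form of \eqref{eq:update}--\eqref{eq:update_bar}: writing $g^{(t)}(s) := \Q{t}(s)\bnut(s)$, the recursions read $\log\mut(\cdot|s) = (1-\eta\tau)\log\mutm(\cdot|s) + \eta\, g^{(t)}(s) - (\text{const})\one$ and $\log\bmutp(\cdot|s) = (1-\eta\tau)\log\mut(\cdot|s) + \eta\, g^{(t)}(s) - (\text{const})\one$, with symmetric statements for the min player, the key point being that each marginalized entry $[\Q{t}(s)\bnut(s)]_a$ lies in $[0,\tfrac{1}{1-\gamma}]$.

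For \eqref{eq:log_bound} I would run a contraction argument on the span seminorm $\mathrm{sp}(x) := \max_a x_a - \min_a x_a$ of the log-policies. Subtracting the recursion at two actions cancels the normalizing constant, giving $\mathrm{sp}(\log\mut(s)) \le (1-\eta\tau)\,\mathrm{sp}(\log\mutm(s)) + \eta\,\mathrm{sp}(g^{(t)}(s))$, and since $\mathrm{sp}(g^{(t)}(s))\le \frac{1}{1-\gamma}$ this unrolls into a geometric series bounded by $\frac{1}{(1-\gamma)\tau}$; combined with $\max_a\log\mut(a|s)\ge -\log|\cA|$ (as $\mut(\cdot|s)$ is a distribution) and the regularization assumption \eqref{eq:pomelo}, this yields the stated $\frac{2}{(1-\gamma)\tau}$ bound for $\log\mut$ and for $\log\bmut$. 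The bound for $\log\best{\zeta}$ is not iterative: the QRE obeys the fixed-point relation $\best{\mu}(\cdot|s)\propto\exp([\best{Q}(s)\best{\nu}(s)]/\tau)$, so its log-magnitude is directly controlled by $\frac{\normbig{\best{Q}(s)}_\infty}{\tau}\le\frac{1}{(1-\gamma)\tau}$.

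With \eqref{eq:log_bound} in hand, the $L_1$ bounds follow from a ``closeness of tilted distributions'' estimate: if $p\propto w\,e^{u}$ and $q\propto w\,e^{v}$ share a base $w$, then $\norm{p-q}_1$ is controlled by $\mathrm{sp}(u-v)$. For \eqref{eq:diff_bar_unbar}, the crucial observation is that $\bmutp$ and $\mutp$ share the \emph{same} base $\mut(\cdot|s)^{1-\eta\tau}$ and differ only in the tilt, $\eta\,\Q{t}(s)\bnut(s)$ versus $\eta\,\Q{t+1}(s)\bnutp(s)$; since each marginalized $Q$-vector has entries in $[0,\frac{1}{1-\gamma}]$, the span of the tilt difference is at most $\frac{2\eta}{1-\gamma}$, giving the claim without even invoking \eqref{eq:log_bound}. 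For \eqref{eq:diff_bar}, I would compare the two consecutive predictions $\bmut$ and $\bmutp$: now the bases ($\mutm$ versus $\mut$) also differ, so I first bound $\mathrm{sp}(\log\mut(s)-\log\mutm(s))$ using the single-step recursion together with the log bound \eqref{eq:log_bound}, and then add the tilt difference $\eta\,\mathrm{sp}(g^{(t)}(s)-g^{(t-1)}(s))$; careful bookkeeping of the $(1-\eta\tau)$ factors yields the constant $6$.

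The main obstacle is the log bound \eqref{eq:log_bound}: everything downstream is a comparatively routine perturbation estimate, but \eqref{eq:log_bound} must stand on its own and cannot borrow the $L_1$ bounds. The delicate points are (i) passing from a span-seminorm contraction to an honest $\ell_\infty$ bound on $\log\mut(s)$, absorbing the $\log|\cA|$ slack into the stated constant via \eqref{eq:pomelo}, and (ii) securing the uniform $\frac{1}{1-\gamma}$-type bound on $\normbig{\Q{t}(s)}_\infty$, which underlies both the gradient-span estimate and the fixed-point bound for $\best{\zeta}$, despite the entropy terms in \eqref{eq:value_update_kaoya} being of indefinite sign.
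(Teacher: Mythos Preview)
Your proposal is correct and follows essentially the same route as the paper: both proofs establish a two-sided bound on $\Q{t}$ by induction on \eqref{eq:value_update_kaoya}, derive \eqref{eq:log_bound} from the contraction/convex-combination structure of the log-policy recursion (your span argument is precisely the paper's observation that $\mut(a|s)\propto\exp(w^{(t)}(a)/\tau)$ with $w^{(t)}$ a convex combination of the $Q\bar\nu$ vectors), and then obtain \eqref{eq:diff_bar_unbar}--\eqref{eq:diff_bar} from the softmax perturbation estimate $\norm{\pi-\pi'}_1\le\norm{\theta-\theta'}_\infty$ applied to the tilt differences, with \eqref{eq:diff_bar} requiring \eqref{eq:log_bound} to control the $\tau\log\bmut$ contribution. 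One minor correction: the entries of $\Q{t}(s)\bnut(s)$ are not in $[0,\tfrac{1}{1-\gamma}]$ but in $\big[-\tfrac{\gamma\tau\log|\cB|}{1-\gamma},\,\tfrac{1+\gamma\tau\log|\cA|}{1-\gamma}\big]$ because of the entropy terms, though under \eqref{eq:pomelo} the range is still at most $\tfrac{2}{1-\gamma}$ and your bookkeeping goes through unchanged.
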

For notational simplicity, we use $x\overset{\mathbf{1}}{=}y$ to denote equivalence up to a global shift for two vectors $x, y$, i.e.
\begin{equation}\label{eq:equiv_one}
	x = y + c\cdot \one
\end{equation}
for some constant $c \in \mathbb{R}$.
Taking logarithm on the both sides of the update rule \eqref{eq:update}, we get
\begin{align}
	\begin{cases}
		\log \mutp(s) - (1-\eta\tau)\log \mut(s) &\overset{\mathbf{1}}{=} \eta \Q{t+1}(s)\bnutp(s)\\
		\log \nu^{(t+1)}(s) - (1-\eta\tau)\log \nut(s) &\overset{\mathbf{1}}{=} -\eta \Q{t+1}(s)^\top\bmutp(s)
	\end{cases}.
	\label{eq:log_update}
\end{align}
On the other hand, it holds for the QRE $(\best{\mu}, \best{\nu})$ that
\begin{align}
	\begin{cases}
		\eta\tau\log \best{\mu}(s) &\overset{\mathbf{1}}{=} \eta \best{Q}(s)\best{\nu}(s)\\
		\eta\tau\log \best{\nu}(s) &\overset{\mathbf{1}}{=} -\eta \best{Q}(s)^\top \best{\mu}(s)
	\end{cases}.
	\label{eq:log_opt}
\end{align}

Subtracting \eqref{eq:log_opt} from \eqref{eq:log_update} and taking inner product with $\bztp(s) - \best{\zeta}(s)$ gives
\begin{align} 
		&\innprod{\log \ztp(s) - (1-\eta\tau)\log \zt(s) - \eta\tau \log \best{\zeta}(s), \bztp(s) - \best{\zeta}(s)} \nonumber \\
		&= \eta\innprod{\bmutp(s) - \best{\mu}(s), \Q{t+1}(s)\bnutp(s) - \best{Q}(s)\best{\nu}(s)} \nonumber \\
		&\qquad - \eta \innprod{\bnutp(s) - \best{\nu}(s), \Q{t+1}(s)^\top\bmutp(s) - \best{Q}(s)^\top \best{\mu}(s)} \nonumber\\
		&= \eta \innprod{\bmutp(s) - \best{\mu}(s), (\Q{t+1}(s) - \best{Q}(s))\bnutp(s)} \nonumber\\
		&\qquad - \eta \innprod{\bnutp(s) - \best{\nu}(s), (\Q{t+1}(s) - \best{Q}(s))^\top\bmutp(s)} \nonumber\\
		&= -\eta\innprod{\best{\mu}(s), (\Q{t+1}(s) - \best{Q}(s))\bnutp(s)} + \eta \innprod{\best{\nu}(s), (\Q{t+1}(s) - \best{Q}(s))^\top\bmutp(s)} \nonumber\\
		&\le 2\eta \normbig{\Q{t+1}(s) - \best{Q}(s)}_\infty.
	\label{eq:core_step_1}
\end{align}

We continue to rewrite the LHS of \eqref{eq:core_step_1} as
\begin{align*}
	&\innprod{\log \ztp(s) - (1-\eta\tau)\log \zt(s) - \eta\tau \log \best{\zeta}(s), \bztp(s) - \best{\zeta}(s)}\\
	&=- \innprod{\log \ztp(s) - (1-\eta\tau)\log \zt(s) - \eta\tau \log \best{\zeta}(s), \best{\zeta}(s)}\\
	&\qquad + \innprod{\log \bztp(s) - (1-\eta\tau)\log \bzt(s) - \eta\tau \log \best{\zeta}(s), \bztp(s)}\\
	&\qquad + \innprod{\log \ztp(s) - \log \bztp(s), \bztp(s)}\\
	&\qquad - (1-\eta\tau)\innprod{\log \zt(s) - \log \bzt(s), \bztp(s)} \\
	&= \KLs{\best{\zeta}}{\ztp}- (1-\eta\tau)\KLs{\best{\zeta}}{\zt} \\
	&\qquad + (1-\eta\tau)\KLs{\bztp}{\bzt} + \eta\tau \KLs{\bztp}{\best{\zeta}}\\
	&\qquad + \KLs{\ztp}{\bztp} - \innprod{\log \bztp(s)-\log {\zeta}^{(t+1)}(s),  \bztp(s)- {\zeta}^{(t+1)}(s)}\\
	&\qquad + (1-\eta\tau)\KLs{\bzt}{\zt} - (1-\eta\tau)\innprod{\log \zt(s) - \log \bzt(s), \bztp(s) - \bzt(s)}.
\end{align*}
Rearranging terms, we have
\begin{align*}
	&\KLs{\best{\zeta}}{\ztp} - (1-\eta\tau)\KLs{\best{\zeta}}{\zt} +(1-\eta\tau)\KLs{\bztp}{\bzt} \\
	&\qquad + \eta\tau \KLs{\bztp}{\best{\zeta}} + \KLs{\ztp}{\bztp}+ (1-\eta\tau)\KLs{\bzt}{\zt}\\
	&\qquad - \innprod{\log \bztp(s)-\log {\zeta}^{(t+1)}(s),  \bztp(s)- {\zeta}^{(t+1)}(s)}\\
	&\qquad 	- (1-\eta\tau)\innprod{\log \zt(s) - \log \bzt(s), \bztp(s) - \bzt(s)}\\
	&\le 2\eta \normbig{\Q{t+1}(s) - \best{Q}(s)}_\infty.
\end{align*}
It remains to upper bound 
$$\innprod{ \log \bztp(s)-\log {\zeta}^{(t+1)}(s),  \bztp(s)- {\zeta}^{(t+1)}(s) }\quad\mbox{and}\quad \innprod{ \log \zt(s) - \log \bzt(s), \bztp(s) - \bzt(s) }. $$ For the first term, note that
\begin{align} 
		&\innprod{\log \bmutp(s)-\log \mutp(s),  \bmutp(s)- \mutp(s)} \nonumber \\
		&= \eta \innprod{\Q{t}(s)\bnut(s) - \Q{t+1}(s)\bnutp(s), \bmutp(s) - \mutp(s)} \nonumber\\
		&\le \eta \normbig{\Q{t}(s)\bnut(s) - \Q{t+1}(s)\bnutp(s)}_1\normbig{\bmutp(s) - \mutp(s)}_1.
	\label{eq:core_step_2}
\end{align}
Here, $\normbig{\Q{t}(s)\bnut(s) - \Q{t+1}(s)\bnutp(s)}_1$ can be bounded as
\begin{align*}
	&\normbig{\Q{t}(s)\bnut(s) - \Q{t+1}(s)\bnutp(s)}_1 \\
	&\le \normbig{\Q{t+1}(s)\big(\bnut(s) - \bnutp(s)\big)}_1 + \normbig{\big(\Q{t}(s)- \Q{t+1}(s)\big)\bnut(s)}_1\\
	&\le \frac{2}{1-\gamma}\normbig{\bnut(s) - \bnutp(s)}_1 + \normbig{\Q{t}(s)- \Q{t+1}(s)}_\infty.
\end{align*}
Plugging the above inequality into \eqref{eq:core_step_2} and invoking Young's inequality yields
\begin{align}
	&\innprod{\log \bmutp(s)-\log \mutp(s),  \bmutp(s)- \mutp(s)}\nonumber\\
	&\le \frac{\eta}{1-\gamma}\Big(\normbig{\bnutp(s) - \bar{\nu}^{(t)}(s)}_1^2 + \normbig{\bmutp(s) - \mutp(s)}_1^2\Big)\nonumber\\
	&\qquad + \eta \normbig{\Q{t}(s)- \Q{t+1}(s)}_\infty\normbig{\bmutp(s) - \mutp(s)}_1\nonumber\\
	&\le \frac{2\eta}{1-\gamma}\KLs{\bnutp}{\bar{\nu}^{(t)}} + \frac{2\eta}{1-\gamma}\KLs{\mutp}{\bmutp} + \frac{2\eta^2}{1-\gamma} \normbig{\Q{t}(s) - \Q{t+1}(s)}_\infty,
	\label{eq:pinsker_trick_example}
\end{align}
where the last step results from Pinsker's inequality and Lemma \ref{lemma:one_step_policy_bound}. Similarly, we have
\begin{equation*}
\begin{aligned}
	&\innprod{\log \bnutp(s)-\log {\nu}^{(t+1)}(s),  \bnutp(s)- {\nu}^{(t+1)}(s)}\\
	&\le \frac{2\eta}{1-\gamma}\KLs{\bmutp}{\bar{\mu}^{(t)}}  + \frac{2\eta}{1-\gamma}\KLs{\nu^{(t+1)}}{\bnutp} + \frac{2\eta^2}{1-\gamma} \normbig{\Q{t}(s) - \Q{t+1}(s)}_\infty.
\end{aligned}
\end{equation*}
Combining the above two inequalities gives
\begin{align*}
	&\innprod{\log \bztp(s)-\log {\zeta}^{(t+1)}(s),  \bztp(s)- {\zeta}^{(t+1)}(s)}\\
		&\le \frac{2\eta}{1-\gamma}\KLs{\bztp}{\bzt} + \frac{2\eta}{1-\gamma}\KLs{\ztp}{\bztp} + \frac{4\eta^2}{1-\gamma} \normbig{\Q{t}(s) - \Q{t+1}(s)}_\infty.
\end{align*}
By a similar argument, when $t \ge 1$:
\begin{align*}
	&\innprod{\log \zt(s) - \log \bzt(s), \bztp(s) - \bzt(s)}\\
	&=\eta \innprod{\Q{t}(s)\bnut(s) - \Q{t-1}(s)\bar{\nu}^{(t-1)}(s), \bmutp(s) - \bmut(s)}\\
	&\qquad - \eta \innprod{\Q{t}(s)^\top\bmut(s) - \Q{t-1}(s)^\top\bar{\mu}^{(t-1)}(s), \bnutp(s) - \bnut(s)}\\
	&\le \frac{2\eta}{1-\gamma}\KLs{\bzt}{\bar{\zeta}^{(t-1)}} + \frac{2\eta}{1-\gamma}\KLs{\bztp}{\bzt} \\
	&\qquad + \eta \big(\normbig{\bmutp(s) - \bmut(s)}_1 + \normbig{\bnutp(s) - \bnut(s)}_1\big) \normbig{\Q{t}(s) - \Q{t-1}(s)}_\infty\\
	&\le \frac{2\eta}{1-\gamma}\KLs{\bzt}{\bar{\zeta}^{(t-1)}} + \frac{2\eta}{1-\gamma}\KLs{\bztp}{\bzt}   + \frac{12\eta^2}{1-\gamma} \normbig{\Q{t}(s) - \Q{t-1}(s)}_\infty.
\end{align*}
Note that the above inequality trivially holds for $t = 0$, since $\log \zeta^{(0)}(s) = \log \bar{\zeta}^{(0)}(s)$.

Putting pieces together, we conclude that
\begin{align*}
	&\KLs{\best{\zeta}}{\ztp} - (1-\eta\tau)\KLs{\best{\zeta}}{\zt} +\Big(1-\eta\tau - \frac{4\eta}{1-\gamma}\Big)\KLs{\bztp}{\bzt} \\
	&\qquad + \eta\tau \KLs{\bztp}{\best{\zeta}} + \Big(1-\frac{2\eta}{1-\gamma}\Big)\KLs{\ztp}{\bztp} + (1-\eta\tau)\KLs{\bzt}{\zt}\\
	&\qquad - \frac{2\eta}{1-\gamma}\KLs{\bzt}{\bar{\zeta}^{(t-1)}}\\
	&\le 2\eta \normbig{\Q{t+1}(s) - \best{Q}(s)}_\infty + \frac{4\eta^2}{1-\gamma} \normbig{\Q{t}(s) - \Q{t+1}(s)}_\infty + \frac{12\eta^2}{1-\gamma}\normbig{\Q{t-1}(s) - \Q{t}(s)}_\infty.
\end{align*}
Averaging state $s$ over the initial state distribution $\rho$ completes the proof.

\subsection{Proof of Lemma \ref{lemma:Laertes}}
\label{sec:pf_lemma_Laertes}

By definition of $Q$, it holds for $t \ge 1$ that
\begin{equation}
	\begin{aligned}
		\big|\Q{t+1}(s,a,b) - \Q{t}(s,a,b)\big| \le \gamma \ex{s'\sim P(\cdot|s,a,b)}{\big|\V{t}(s') - \V{t-1}(s')\big|}.
	\end{aligned}
	\label{eq:Q_diff_2_V_diff}
\end{equation}
Recall the definition of $f_s(Q, \mu, \nu)$ in \eqref{eq:def_fs} as the one-step entropy-regularized game value at state $s$, i.e., 
\begin{equation*}
	f_s(Q, \mu, \nu) = \mu(s)^\top Q(s) \nu(s) + \tau \cH(\mu(s)) - \tau \cH(\nu(s)),
\end{equation*}
which we further simplify the notation by introducing
\begin{equation*}
f_s^{(t)} = 	f_s(\Q{t}, \bmut, \bnut) .
\end{equation*}
By recursively applying the update rule
$\V{t}(s) = (1-\alpha_{t})\V{t-1}(s) + \alpha_{t}f_s^{(t)}$,
we get
\begin{align*}
	\V{t}(s) &= \alpha_{0,t} \V{0} + \sum_{l=1}^t \alpha_{l,t}f_s(\Q{l}, \bar{\mu}^{(l)}, \bar{\nu}^{(l)})= \sum_{l=0}^t \alpha_{l,t}f_s^{(l)}.
\end{align*}
Therefore,
\begin{align} 
		\big|\V{t}(s) - \V{t-1}(s)\big|&= \alpha_t\big| f_s^{(t)}  - \V{t-1}(s) \big| \nonumber\\
		&= \alpha_t \sum_{l=0}^{t-1} \alpha_{l, t-1} \big|f_s^{(t)} - f_s^{(l)}\big| \nonumber \\
		&\le \alpha_t \sum_{l=0}^{t-1} \alpha_{l, t-1} \sum_{j=l}^{t-1}\big|f_s^{(j+1)} - f_s^{(j)}\big|.
	\label{eq:V_diff}
\end{align}
The next lemma enables us to upper bound $\big|f_s^{(t+1)} - f_s^{(t)}\big|$ with $\normbig{\Q{t+1}(s) - \Q{t}(s)}_\infty$ and $\KLs{\bztp}{\bzt}$ as well as their counterparts in the $(t-1)$-th iteration. The proof is postponed to Appendix \ref{sec:pf_lemma_f_diff}.
\begin{lemma}
	\label{lemma:f_diff}
	For any $t \ge 0$, $\eta \le (1-\gamma)/180$, we have
	\begin{equation*}
		\begin{aligned}
			\big|f_s^{(t+1)} - f_s^{(t)}\big|&\le \Big\|\Q{t+1}(s) - \Q{t}(s)\Big\|_\infty + \Big(\frac{3}{\eta} + \frac{4}{1-\gamma}\Big)\KLs{\bztp}{\bzt}  \\
			&\qquad + \frac{12\eta}{1-\gamma}\normbig{\Q{t}(s) - \Q{t-1}(s)}_\infty + \frac{2}{1-\gamma}\KLs{\bzt}{\bar{\zeta}^{(t-1)}}.
		\end{aligned}
	\end{equation*}
\end{lemma}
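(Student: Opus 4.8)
The plan is to bound $|f_s^{(t+1)} - f_s^{(t)}|$ by separating the change in the $Q$-function from the change in the policy pair. Writing $g(\mu,\nu) = \mu(s)^\top \Q{t}(s)\nu(s) + \tau\cH(\mu(s)) - \tau\cH(\nu(s))$ for the regularized game value at the \emph{frozen} iterate $\Q{t}$, I would use the decomposition
\begin{equation*}
f_s^{(t+1)} - f_s^{(t)} = \big[f_s(\Q{t+1}, \bmutp, \bnutp) - g(\bmutp,\bnutp)\big] + \big[g(\bmutp,\bnutp) - g(\bmut,\bnut)\big].
\end{equation*}
The first bracket equals $\bmutp(s)^\top(\Q{t+1}(s) - \Q{t}(s))\bnutp(s)$, which is at most $\normbig{\Q{t+1}(s) - \Q{t}(s)}_\infty$ since $\bmutp(s),\bnutp(s)$ are probability vectors; this produces the first term of the bound. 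It remains to control the policy-change term $g(\bmutp,\bnutp) - g(\bmut,\bnut)$.

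For the policy-change term I would insert the intermediate point $(\bmut,\bnutp)$ and treat each player separately, exploiting that $g$ is concave in $\mu$ and convex in $\nu$ (the bilinear part being linear, and $\pm\tau\cH$ supplying the curvature). Since the Bregman divergence of the negative-entropy mirror map is exactly the KL divergence, the max-player sweep admits the exact identity
\begin{equation*}
g(\bmutp,\bnutp) - g(\bmut,\bnutp) = \innprod{\bmutp(s)-\bmut(s),\, \Q{t}(s)\bnutp(s) - \tau\log\bmut(s)} - \tau\KLs{\bmutp}{\bmut},
\end{equation*}
with an analogous $+\tau\KLs{\bnutp}{\bnut}$ remainder for the $\nu$-sweep. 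I would then replace the gradient using the OMWU log-update relation $\eta\Q{t}(s)\bnut(s) \overset{\mathbf{1}}{=} \log\bmutp(s) - (1-\eta\tau)\log\mut(s)$, which turns the leading contribution into $\tfrac1\eta\innprod{\bmutp(s)-\bmut(s),\log\bmutp(s)-\log\bmut(s)} = \tfrac1\eta(\KLs{\bmutp}{\bmut}+\KLs{\bmut}{\bmutp})$; collecting both players and the Bregman remainders yields the dominant $\big(\tfrac3\eta + \tfrac4{1-\gamma}\big)\KLs{\bztp}{\bzt}$ term. The discrepancy between $\Q{t}(s)\bnutp(s)$ and the update gradient $\Q{t}(s)\bnut(s)$ is a product of two one-step policy displacements, absorbed into the same KL term by Young's inequality and Pinsker's inequality.

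The lag terms are where the previous iterate enters, and they are the crux. The substitution also brings in $\log\mut(s)$ (the base point of the $\bmutp$-update) rather than $\log\bmut(s)$; using that $\bmut$ was produced at the previous step from $\mutm$ with the gradient $\Q{t-1}(s)\bar{\nu}^{(t-1)}(s)$, the two update relations subtract to give $\log\mut(s) - \log\bmut(s) \overset{\mathbf{1}}{=} \eta\big(\Q{t}(s)\bnut(s) - \Q{t-1}(s)\bar{\nu}^{(t-1)}(s)\big)$. The resulting inner product against $\bmutp(s)-\bmut(s)$ therefore contributes $\innprod{\bmutp(s)-\bmut(s),\,(\Q{t}(s)-\Q{t-1}(s))\bnut(s) + \Q{t-1}(s)(\bnut(s)-\bar{\nu}^{(t-1)}(s))}$; bounding the first piece by $\normbig{\bmutp(s)-\bmut(s)}_1\normbig{\Q{t}(s)-\Q{t-1}(s)}_\infty$ and the second by Young/Pinsker, and invoking the one-step estimates of Lemma~\ref{lemma:one_step_policy_bound} (displacements $\le \tfrac{6\eta}{1-\gamma}$ and $\normbig{\log\bzt(s)}_\infty \le \tfrac{2}{(1-\gamma)\tau}$), produces precisely the $\tfrac{12\eta}{1-\gamma}\normbig{\Q{t}(s)-\Q{t-1}(s)}_\infty$ and $\tfrac{2}{1-\gamma}\KLs{\bzt}{\bar{\zeta}^{(t-1)}}$ terms after summing the two players.

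The main obstacle, as in Lemma~\ref{lemma:core}, is the bookkeeping: correctly converting the entropy differences and the gradient substitution into symmetrized KL divergences with the stated constants, while cleanly separating the $t\!\to\!t+1$ contribution ($\KLs{\bztp}{\bzt}$) from the $(t-1)\!\to\!t$ lag ($\KLs{\bzt}{\bar{\zeta}^{(t-1)}}$). The step-size restriction $\eta\le(1-\gamma)/180$ is what guarantees Lemma~\ref{lemma:one_step_policy_bound} applies and lets the higher-order cross terms be folded into the displayed coefficients.
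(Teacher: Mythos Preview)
Your overall strategy matches the paper's: split off the $Q$-change, then control the policy-change via the update relations, with the lag inner product producing the $(t-1)$-terms. The paper does this by first rewriting $\bmutp$ as a one-step regularized mirror-descent update \emph{from $\bmut$} (cf.~\eqref{eq:mu_update_single_ver}) and then invoking the three-point identity (Lemma~\ref{lemma:strange_three_pt}); your route via the direct update from $\mut$ and the identity $\log\mut-\log\bmut \overset{\mathbf{1}}{=} \eta\big(\Q{t}\bnut-\Q{t-1}\bar{\nu}^{(t-1)}\big)$ is essentially equivalent.

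There is, however, a real gap. Your computation produces the \emph{symmetrized} quantity $\tfrac{1}{\eta}\big(\KLs{\bmutp}{\bmut}+\KLs{\bmut}{\bmutp}\big)$, and you then assert that this ``yields the dominant $\big(\tfrac{3}{\eta}+\tfrac{4}{1-\gamma}\big)\KLs{\bztp}{\bzt}$ term.'' But the stated bound is in $\KLs{\bztp}{\bzt}$ only, whereas you have the \emph{reversed} divergence $\KLs{\bzt}{\bztp}$ as well; nothing in your outline explains how to flip it. The paper handles this with a separate local-equivalence lemma (Lemma~\ref{lemma:KL_local_equiv}): if $\pi'(a)\propto\pi(a)\exp(w(a))$ with $\|w\|_\infty\le 1/30$, then $\KL{\pi}{\pi'}\le 2\,\KL{\pi'}{\pi}$. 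Applying this to $\bmut,\bmutp$ gives $\KLs{\bmut}{\bmutp}\le 2\,\KLs{\bmutp}{\bmut}$, and $1+2=3$ is exactly where the $\tfrac{3}{\eta}$ coefficient comes from. This lemma is also the true source of the step-size restriction: one needs $\|\eta w^{(t)}\|_\infty\le 6\eta/(1-\gamma)\le 1/30$, i.e.\ $\eta\le(1-\gamma)/180$. Your attribution of the constant $180$ to Lemma~\ref{lemma:one_step_policy_bound} is mistaken---that lemma only requires $\eta\le 1/\tau$.
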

Plugging the above lemma into \eqref{eq:V_diff}, 
\begin{align*}
		&\big|\V{t}(s) - \V{t-1}(s)\big|\\
		&\le \alpha_t \sum_{l=0}^{t-1} \alpha_{l, t-1} \sum_{j=l}^{t-1}\bigg[\normbig{\Q{j+1}(s) - \Q{j}(s)}_\infty + \Big(\frac{3}{\eta} + \frac{4}{1-\gamma}\Big)\KLs{\bar{\zeta}^{(j+1)}}{\bar{\zeta}^{(j)}}\bigg]\\
		&\qquad +  \alpha_t \sum_{l=0}^{t-1} \alpha_{l, t-1} \sum_{j=l}^{t-1}\bigg[\frac{12\eta}{1-\gamma}\normbig{\Q{j}(s) - \Q{j-1}(s)}_\infty + \frac{2}{1-\gamma}\KLs{\bar{\zeta}^{(j)}}{\bar{\zeta}^{(j-1)}} \bigg]\\
		&\le \alpha_t \sum_{l=0}^{t-1} \alpha_{l, t-1} \sum_{j=l}^{t-1}\bigg[\Big(1+\frac{12\eta}{1-\gamma}\Big)\normbig{\Q{j+1}(s) - \Q{j}(s)}_\infty + \Big(\frac{3}{\eta} + \frac{6}{1-\gamma}\Big)\KLs{\bar{\zeta}^{(j+1)}}{\bar{\zeta}^{(j)}}\bigg]\\
		&\qquad +  \alpha_t \sum_{l=0}^{t-1} \alpha_{l, t-1} \bigg[\frac{12\eta}{1-\gamma}\normbig{\Q{l}(s) - \Q{l-1}(s)}_\infty + \frac{2}{1-\gamma}\KLs{\bar{\zeta}^{(l)}}{\bar{\zeta}^{(l-1)}} \bigg]\\
		&\le \sum_{j=0}^{t-1}\alpha_{j+1}\sum_{l=0}^{j} \alpha_{l, t-1}\bigg[\Big(1+\frac{12\eta}{1-\gamma}\Big)\normbig{\Q{j+1}(s) - \Q{j}(s)}_\infty + \Big(\frac{3}{\eta} + \frac{6}{1-\gamma}\Big)\KLs{\bar{\zeta}^{(j+1)}}{\bar{\zeta}^{(j)}}\bigg]\\
		&\qquad +  \alpha_t \sum_{l=0}^{t-2} \alpha_{l+1, t-1} \bigg[\frac{12\eta}{1-\gamma}\normbig{\Q{l+1}(s) - \Q{l}(s)}_\infty + \frac{2}{1-\gamma}\KLs{\bar{\zeta}^{(l+1)}}{\bar{\zeta}^{(l)}} \bigg],
\end{align*}
where the last step is due to $\alpha_t \le \alpha_j$ for all $j \le t$. 
To continue, by definition of $\alpha_t$ we have $\alpha_t\alpha_{l+1, t-1} \le \alpha_{l+1,t-1}(1-\alpha_t) = \alpha_{l+1,t}$ for $0\le l < t$, and that
\begin{align*}
	\alpha_{j+1}\sum_{l=0}^{j} \alpha_{l, t-1} &= \alpha_{j+1}\sum_{l=0}^{j} \Big(\prod_{i=l+1}^{t-1}(1-\alpha_i) - \prod_{i=l}^{t-1}(1-\alpha_i)\Big)\\
	&= \alpha_{j+1} \prod_{i=j+1}^{t-1}(1-\alpha_i) \\
	&\le \alpha_{j+1} \prod_{i=j+2}^{t}(1-\alpha_i) = \alpha_{j+1,t}.
\end{align*}
Plugging the inequality above into the previous relation gives
\begin{align*}
		&\big|\V{t}(s) - \V{t-1}(s)\big|\\
		&\le \sum_{j=0}^{t-1} \alpha_{j+1, t}\bigg[\Big(1+\frac{12\eta}{1-\gamma}\Big)\normbig{\Q{j+1}(s) - \Q{j}(s)}_\infty + \Big(\frac{3}{\eta} + \frac{6}{1-\gamma}\Big)\KLs{\bar{\zeta}^{(j+1)}}{\bar{\zeta}^{(j)}}\bigg]\\
		&\qquad + \sum_{l=0}^{t-2} \alpha_{l+1, t} \bigg[\frac{12\eta}{1-\gamma}\normbig{\Q{l+1}(s) - \Q{l}(s)}_\infty + \frac{2}{1-\gamma}\KLs{\bar{\zeta}^{(l+1)}}{\bar{\zeta}^{(l)}} \bigg]\\
		&\le \sum_{l=0}^{t-1} \alpha_{l+1, t}\bigg[\Big(1+\frac{24\eta}{1-\gamma}\Big)\normbig{\Q{l+1}(s) - \Q{l}(s)}_\infty + \frac{4}{\eta}\KLs{\bar{\zeta}^{(l+1)}}{\bar{\zeta}^{(l)}}\bigg].
		\label{eq:V_diff_final}
\end{align*}
Plugging the above inequality into \eqref{eq:Q_diff_2_V_diff} leads to
\begin{align*}
	&\big|\Q{t+1}(s,a,b) - \Q{t}(s,a,b)\big| \\
	&\le \gamma \mathop\mathbb{E}\limits_{s'\sim P(\cdot|s,a,b))}\Bigg\{\sum_{l=0}^{t-1} \alpha_{l+1, t}\bigg[\Big(1+\frac{24\eta}{1-\gamma}\Big)\normbig{\Q{l+1}(s') - \Q{l}(s')}_\infty + \frac{4}{\eta}\mathsf{KL}_{s'}(\bar{\zeta}^{(l+1)}\,\|\,\bar{\zeta}^{(l)})\bigg]\Bigg\}.
\end{align*}
When $\eta \le \frac{(1-\gamma)^2}{48\gamma}$, we have $\gamma (1 + \frac{24\eta}{1-\gamma}) \le \frac{1+\gamma}{2}$ and hence that
\begin{align*}
	&\big|\Q{t+1}(s,a,b) - \Q{t}(s,a,b)\big| \\
	&\le \mathop\mathbb{E}\limits_{s'\sim P(\cdot|s,a,b))}\bigg\{\frac{1+\gamma}{2}\sum_{l=0}^{t-1} \alpha_{l+1, t}\Big[\normbig{\Q{l+1}(s') - \Q{l}(s')}_\infty + \frac{4}{\eta}\mathsf{KL}_{s'}(\bar{\zeta}^{(l+1)}\,\|\,\bar{\zeta}^{(l)})\Big]\bigg\}.
\end{align*}
Let $x^{(t+1)}\in \cA^\cS$ and $y^{(t+1)}\in \cB^\cS$ be defined as for any $s\in\cS$:
\[
(x^{(t+1)}(s), y^{(t+1)}(s)) = \arg\min_{(a,b)\in\cA\times\cB}\big|\Q{t+1}(s,a,b) - \Q{t}(s,a,b)\big|.
\]
It follows that $\forall \chi \in \Gamma(\rho)$, we have $\chi P_{x^{(t+1)},y^{(t+1)}} \in \Gamma(\rho)$ and hence
\begin{align}
	&\exlim{s\sim \chi}{\normbig{\Q{t+1}(s) - \Q{t}(s)}_\infty}\nonumber\\
	&= \exlim{\substack{s\sim \chi,\\ a = x^{(t+1)}(s),b = y^{(t+1)}(s)}}{\big|\Q{t+1}(s,a,b) - \Q{t}(s,a,b)\big|}\nonumber\\
	&\le \exlim{s'\sim \chi P_{x^{(t+1)},y^{(t+1)}}}{\frac{1+\gamma}{2}\sum_{l=0}^{t-1} \alpha_{l+1, t}\Big[\normbig{\Q{l+1}(s') - \Q{l}(s')}_\infty + \frac{4}{\eta}\mathsf{KL}_{s'}(\bar{\zeta}^{(l+1)}\,\|\,\bar{\zeta}^{(l)})\Big]}\nonumber\\
	&\le \frac{1+\gamma}{2}\sum_{l=0}^{t-1} \alpha_{l+1, t}\bigg[\normbig{\Q{l+1} - \Q{l}}_{\Gamma(\rho)} + \frac{4}{\eta}\cdot \Big\|\frac{\chi P_{x^{(t+1)},y^{(t+1)}}}{\rho}\Big\|_\infty \KLrho{\bar{\zeta}^{(l+1)}}{\bar{\zeta}^{(l)}}\bigg]\nonumber\\
	&\le \frac{1+\gamma}{2}\sum_{l=0}^{t-1} \alpha_{l+1, t}\bigg[\normbig{\Q{l+1} - \Q{l}}_{\Gamma(\rho)} + \frac{4\mathcal{C}_\rho}{\eta}\KLrho{\bar{\zeta}^{(l+1)}}{\bar{\zeta}^{(l)}}\bigg].
	\label{eq:Gammalizer}
\end{align}
Taking the supremum over $\chi \in \Gamma(\rho)$ completes the proof for $t\geq 1$. To complete the proof, note that when $t = 0$, we have $\normbig{\Q{0} - \Q{1}}_{\Gamma(\rho)} = \normbig{\Q{1}}_{\Gamma(\rho)} \le 2$.

\subsection{Proof of Lemma \ref{lemma:Ophelia}}
\label{sec:pf_lemma_Ophelia}
Note that it suffices to show for $t \ge 0$, $s\in \cS$, $(a,b)\in\cA\times\cB$:
\begin{align}
	&\big|\Q{t+1}(s,a,b) - \best{Q}(s,a,b)\big|\nonumber\\
	&\le \frac{1+\gamma}{2}\cdot\exlim{s'\sim P(s,a,b)}{{\sum_{l=0}^t\alpha_{l,t}}\Big[\normbig{\Q{l}(s') - \best{Q}(s')}_\infty +  \frac{2\eta}{1-\gamma}\normbig{\Q{l}(s') - \Q{l-1}(s')}_\infty\Big]} +  2\alpha_{0,t}.
	\label{eq:Ophelia_lite}
\end{align}
The remaining step follows a similar argument as \eqref{eq:Gammalizer} and is therefore omitted. 

To establish \eqref{eq:Ophelia_lite}, notice that we have for $t \ge 0$, 
\begin{align}
	\Q{t+1}(s,a,b) - \best{Q}(s,a,b) &= \gamma \ex{s'\sim P(\cdot|s,a,b)}{\V{t}(s') - \best{V}(s')}\nonumber\\
	 &= \gamma \ex{s'\sim P(\cdot|s,a,b)}{\sum_{l=0}^{t} \alpha_{l,t}(f_{s'}^{(l)} - f_{s'}^\star)}.
	 \label{eq:Q_gap}
\end{align}
To continue, we start by decomposing $f_s^{(t)} - f_{s}^\star$ as
\begin{align*}
	f_s^{(t)} - f_{s}^\star&= f_s(\Q{t}, \bmut, \bnut) - f_s(\best{Q}, \best{\mu}, \best{\nu})\\
	&= \prn{f_s(\Q{t}, \bmut, \bnut) - f_s(\Q{t}, \bmut, \best{\nu})} + f_s(\Q{t}, \bmut, \best{\nu}) - f_s(\best{Q}, \best{\mu}, \best{\nu})\\
	&\le \prn{f_s(\Q{t}, \bmut, \bnut) - f_s(\Q{t}, \bmut, \best{\nu})} + f_s(\best{Q}, \bmut, \best{\nu})- f_s(\best{Q}, \best{\mu}, \best{\nu})\\
	&\qquad + \normbig{\Q{t}(s) - \best{Q}(s)}_\infty\\
	&\le f_s(\Q{t}, \bmut, \bnut) - f_s(\Q{t}, \bmut, \best{\nu}) + \normbig{\Q{t}(s) - \best{Q}(s)}_\infty.
\end{align*}
We bound the first two terms with the following lemma, whose proof can be found in Appendix \ref{sec:pf_lemma_one_step_regret}.
\begin{lemma}
	\label{lemma:one_step_regret}
	It holds for all $t\ge0$, $s\in \cS$ and $\nu(s) \in \Delta(\cB)$ that
	\begin{align*}
		&f_s(\Q{t}, \bmut, \bnut) - f_s(\Q{t}, \bmut, \nu)\\
		&\le \frac{2\eta}{1-\gamma}\normbig{\Q{t}(s) - \Q{t-1}(s)}_\infty + \frac{2}{1-\gamma}\Big(\KLs{\bmut}{\mutm} + \KLs{\mutm}{\bar{\mu}^{(t-1)}}\Big)\\
		&\qquad   - \frac{1}{\eta} \big(1-\frac{4\eta}{1-\gamma}\big) \KLs{\nut}{\bnut}- \frac{1-\eta\tau}{\eta}\KLs{\bnut}{\nutm} \\
		&\qquad + \frac{1-\eta\tau}{\eta}\KLs{\nu}{\nutm} - \frac{1}{\eta} \KLs{\nu}{\nut}.
\end{align*}
\end{lemma}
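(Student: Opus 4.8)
The plan is to read this as a pure one-step optimistic-regret bound for the min player, with the max player's iterate $\bmut$ frozen. Writing $g = \Q{t}(s)^\top\bmut(s)$ for the marginalized loss vector the min player faces, I would first record the exact algebraic identity
\[
f_s(\Q{t},\bmut,\bnut) - f_s(\Q{t},\bmut,\nu) = \innprod{g,\, \bnut(s) - \nu(s)} + \tau\cH(\nu(s)) - \tau\cH(\bnut(s)) = \innprod{g + \tau\log\bnut(s),\, \bnut(s) - \nu(s)} - \tau\KLs{\nu}{\bnut},
\]
so that the quantity to be bounded is driven entirely by the OMWU updates producing $\nut$ and $\bnut$.

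Next I would take logarithms of \eqref{eq:update} and of \eqref{eq:update_bar} (the latter read at index $t-1$) to obtain, up to a global shift, the corrector identity $\eta g \overset{\mathbf{1}}{=} (1-\eta\tau)\log\nutm(s) - \log\nut(s)$ and the predictor identity $\eta\,\Q{t-1}(s)^\top\bar{\mu}^{(t-1)}(s) \overset{\mathbf{1}}{=} (1-\eta\tau)\log\nutm(s) - \log\bnut(s)$, whose difference yields the optimism identity $\log\bnut(s) - \log\nut(s) \overset{\mathbf{1}}{=} \eta\big(g - \Q{t-1}(s)^\top\bar{\mu}^{(t-1)}(s)\big)$. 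Substituting the corrector identity into $\eta$ times the displayed regret and repeatedly invoking the three-point (generalized law of cosines) identity for the KL Bregman divergence, I expect the inner products to collapse into the telescoping combination
\[
(1-\eta\tau)\KLs{\nu}{\nutm} - (1-\eta\tau)\KLs{\bnut}{\nutm} - \KLs{\nu}{\nut} - \KLs{\nut}{\bnut} + \eta\innprod{g - \Q{t-1}(s)^\top\bar{\mu}^{(t-1)}(s),\, \bnut(s) - \nut(s)},
\]
where the residual is the optimism term, obtained after rewriting $-\KLs{\nu}{\bnut} = -\KLs{\nu}{\nut} - \KLs{\nut}{\bnut} - \innprod{\log\nut(s)-\log\bnut(s),\, \nu(s)-\nut(s)}$ and recombining the two cross terms through the optimism identity.

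It then remains to control $\innprod{g - \Q{t-1}(s)^\top\bar{\mu}^{(t-1)}(s),\, \bnut(s) - \nut(s)}$. I would split the gradient drift as $g - \Q{t-1}(s)^\top\bar{\mu}^{(t-1)}(s) = \Q{t}(s)^\top(\bmut(s)-\bar{\mu}^{(t-1)}(s)) + (\Q{t}(s)-\Q{t-1}(s))^\top\bar{\mu}^{(t-1)}(s)$ and treat the two pieces separately. For the second piece, Hölder's inequality with $\normbig{\bar{\mu}^{(t-1)}(s)}_1 = 1$ and the one-step bound $\normbig{\bnut(s)-\nut(s)}_1 \le \tfrac{2\eta}{1-\gamma}$ from Lemma \ref{lemma:one_step_policy_bound} (cf.\ \eqref{eq:diff_bar_unbar}) directly give the error term $\tfrac{2\eta}{1-\gamma}\normbig{\Q{t}(s)-\Q{t-1}(s)}_\infty$. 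For the first piece, I would use $\normbig{\Q{t}(s)}_\infty \le \tfrac{2}{1-\gamma}$, the triangle inequality $\normbig{\bmut(s)-\bar{\mu}^{(t-1)}(s)}_1 \le \normbig{\bmut(s)-\mutm(s)}_1 + \normbig{\mutm(s)-\bar{\mu}^{(t-1)}(s)}_1$, Pinsker, and Young's inequality with weight $\tfrac{4}{1-\gamma}$ to peel off $\tfrac{4}{1-\gamma}\KLs{\nut}{\bnut}$ — which relaxes the $-\tfrac1\eta\KLs{\nut}{\bnut}$ coefficient to $-\tfrac1\eta(1-\tfrac{4\eta}{1-\gamma})$ — together with $\tfrac{2}{1-\gamma}\big(\KLs{\bmut}{\mutm} + \KLs{\mutm}{\bar{\mu}^{(t-1)}}\big)$. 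Dividing through by $\eta$ recovers the claimed bound.

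The hard part will be the bookkeeping in the telescoping step: ensuring that every $(1-\eta\tau)$ factor and every sign lands so the KL terms assemble into exactly $\tfrac{1-\eta\tau}{\eta}\KLs{\nu}{\nutm} - \tfrac{1-\eta\tau}{\eta}\KLs{\bnut}{\nutm} - \tfrac1\eta\KLs{\nu}{\nut} - \tfrac1\eta\KLs{\nut}{\bnut}$, which hinges on applying the $\KLs{\nu}{\bnut}$ decomposition and recombining the two optimism cross terms into the single inner product $\innprod{\cdot,\, \bnut(s)-\nut(s)}$. A secondary subtlety is calibrating the Young weight against the value bound $\normbig{\Q{t}(s)}_\infty \le \tfrac{2}{1-\gamma}$ so the constants emerge as the advertised $\tfrac{2}{1-\gamma}$ and $\tfrac{4}{1-\gamma}$ rather than merely $\mathcal{O}(\tfrac1{1-\gamma})$; the boundary case $t=0$ is handled separately from the uniform initialization together with the conventions $\Q{-1}=0$ and $\bar{\zeta}^{(-1)}=\bar{\zeta}^{(0)}$.
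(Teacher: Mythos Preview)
Your proposal is correct and follows essentially the same route as the paper: the paper decomposes the regret as $\innprod{\bnut(s)-\nut(s), g-\Q{t-1}(s)^\top\bar{\mu}^{(t-1)}(s)}$ plus two pieces to which it applies the regularized three-point lemma (Lemma~\ref{lemma:strange_three_pt}) for the $\bnut$ and $\nut$ updates respectively, obtaining exactly your telescoping combination $(1-\eta\tau)\KLs{\nu}{\nutm} - (1-\eta\tau)\KLs{\bnut}{\nutm} - \KLs{\nu}{\nut} - \KLs{\nut}{\bnut}$; the optimism residual is then bounded by the same H\"older/triangle/Pinsker/Young argument you outline, with the $\tfrac{4}{1-\gamma}\KLs{\nut}{\bnut}$ peel-off and the $\tfrac{2\eta}{1-\gamma}\|\Q{t}(s)-\Q{t-1}(s)\|_\infty$ piece via Lemma~\ref{lemma:one_step_policy_bound}. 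Your corrector/predictor/optimism-identity phrasing is just the log-form of the update rules that Lemma~\ref{lemma:strange_three_pt} packages, so the two arguments coincide.
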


Applying Lemma \ref{lemma:one_step_regret} with $\nu(s) = \best{\nu}(s)$ gives
\begin{align}
f_s^{(t)} - f_{s}^\star
	&\le \normbig{\Q{t}(s) - \best{Q}(s)}_\infty +  \frac{2\eta}{1-\gamma}\normbig{\Q{t}(s) - \Q{t-1}(s)}_\infty \nonumber \\
	&\qquad+ \frac{1-\eta\tau}{\eta}\KLs{\best{\nu}}{\nutm} - \frac{1}{\eta} \KLs{\best{\nu}}{\nut} \nonumber\\
	&\qquad   - \frac{1}{\eta} \big(1-\frac{4\eta}{1-\gamma}\big) \KLs{\nut}{\bnut}- \frac{1-\eta\tau}{\eta}\KLs{\bnut}{\nutm} \nonumber\\
	&\qquad +\frac{2}{1-\gamma}\Big(\KLs{\bmut}{\mutm} + \KLs{\mutm}{\bar{\mu}^{(t-1)}}\Big). \label{eq:opt_gap_pos} 
\end{align}
By a similar argument, we can derive
\begin{align} 
	f_s^\star - f_s^{(t)}&\le \normbig{\Q{t}(s) - \best{Q}(s)}_\infty +  \frac{2\eta}{1-\gamma}\normbig{\Q{t}(s) - \Q{t-1}(s)}_\infty \nonumber \\
	&\qquad+ \frac{1-\eta\tau}{\eta}\KLs{\best{\mu}}{\mutm} - \frac{1}{\eta} \KLs{\best{\mu}}{\mut} \nonumber\\
	&\qquad   - \frac{1}{\eta} \big(1-\frac{4\eta}{1-\gamma}\big) \KLs{\mut}{\bmut}- \frac{1-\eta\tau}{\eta}\KLs{\bmut}{\mutm} \nonumber \\
	&\qquad +\frac{2}{1-\gamma}\Big(\KLs{\bnut}{\nutm} + \KLs{\nutm}{\bar{\nu}^{(t-1)}}\Big).
\label{eq:opt_gap_neg}
\end{align}

Combining \eqref{eq:opt_gap_pos} $+\frac{1-\gamma}{4} \cdot$ \eqref{eq:opt_gap_neg} gives
\begin{align} 
		&(1-\frac{1-\gamma}{4})(f_s^{(t)} - f_s^\star) \nonumber \\
		&\le (1+\frac{1-\gamma}{4}) \Big[\normbig{\Q{t}(s) - \best{Q}(s)}_\infty +  \frac{2\eta}{1-\gamma}\normbig{\Q{t}(s) - \Q{t-1}(s)}_\infty\Big] \nonumber \\
		&\qquad + \frac{1-\eta\tau}{\eta} \Big[\KLs{\best{\nu}}{\nutm} + \frac{1-\gamma}{4}\KLs{\best{\mu}}{\mutm}\Big] - \frac{1}{\eta} \Big[\KLs{\best{\nu}}{\nut} + \frac{1-\gamma}{4}\KLs{\best{\mu}}{\mut}\Big] \nonumber\\
		&\qquad + \frac{2}{1-\gamma}\Big[\KLs{\mutm}{\bar{\mu}^{(t-1)}} + \frac{1-\gamma}{4}\KLs{\nutm}{\bar{\nu}^{(t-1)}} \Big] \nonumber\\
		&\qquad - \frac{1}{\eta} \big(1-\frac{4\eta}{1-\gamma}\big)\Big[ \frac{1-\gamma}{4} \KLs{\mut}{\bmut} + \KLs{\nut}{\bnut}\Big] \nonumber\\
		&\qquad +(\frac{2}{1-\gamma} - \frac{1-\eta\tau}{\eta}\cdot\frac{1-\gamma}{4})\KLs{\bmut}{\mutm} + (\frac{2}{1-\gamma} \cdot \frac{1-\gamma}{4} - \frac{1-\eta\tau}{\eta})\KLs{\bnut}{\nutm}.
	\label{eq:opt_gap_pos_mix}
\end{align}
With $0 < \eta \le (1-\gamma)^2/16$,  we have 
\begin{align*}
\frac{2}{1-\gamma} - \frac{1-\eta\tau}{\eta}\cdot\frac{1-\gamma}{4}) & \le 0, \qquad
\frac{2}{1-\gamma} \cdot \frac{1-\gamma}{4} - \frac{1-\eta\tau}{\eta}  \le 0, \\
\frac{1}{\eta} \big(1-\frac{4\eta}{1-\gamma}\big) \cdot \frac{1-\gamma}{4} & \ge \frac{2}{1-\gamma} \cdot \frac{1}{1-\eta\tau}.
\end{align*}
To proceed, we introduce a shorthand notation
\begin{align*}
G^{(t)}(s) &= \frac{1}{\eta}\Big[\KLs{\best{\nu}}{\nut} + \frac{1-\gamma}{4}\KLs{\best{\mu}}{\mut} \Big]\\
&\qquad + \frac{2}{(1-\gamma)(1-\eta\tau)}\Big[\KLs{\mut}{\bar{\mu}^{(t)}} + \KLs{\nut}{\bar{\nu}^{(t)}} \Big].
\end{align*}
We can then write \eqref{eq:opt_gap_pos_mix} as
\begin{align}
	(1-\frac{1-\gamma}{4})(f_s^{(t)} - f_s^\star) &\le (1+\frac{1-\gamma}{4}) \Big[\normbig{\Q{t}(s) - \best{Q}(s)}_\infty +  \frac{2\eta}{1-\gamma}\normbig{\Q{t}(s) - \Q{t-1}(s)}_\infty\Big]\nonumber\\
	&\qquad + (1-\eta\tau)G^{(t-1)}(s) - G^{(t)}(s).
	\label{eq:f_gap}
\end{align}
Note that when $t = 0$, we have
\begin{align}
	f_s^{(0)} - f_s^\star&= \tau\log|\cA| - \tau \log |\cB| - \best{\mu}(s)^\top \best{Q}(s)\best{\nu}(s) -\tau \cH(\best{\mu}(s)) + \tau \cH(\best{\nu}(s))\nonumber\\
	&= \max_{\mu(s)}\min_{\nu(s)} f_s(\Q{0},\mu, \nu) - \max_{\mu(s)}\min_{\nu(s)}f_s(\best{Q},\mu, \nu) \nonumber\\
	&\le \normbig{\Q{0}(s) - \best{Q}(s)}_\infty.
	\label{eq:f_gap_0}
\end{align}

Substitution of \eqref{eq:f_gap} and \eqref{eq:f_gap_0} into \eqref{eq:Q_gap} gives
\begin{align*}
	&\Q{t+1}(s,a,b) - \best{Q}(s,a,b) \\
	&= \gamma \ex{s'\sim P(\cdot|s,a,b)}{\sum_{l=0}^t\alpha_{l,t}(f_{s'}^{(l)} - f_{s'}^\star)}\\
	&\le \gamma \ex{s'\sim P(s,a,b)}{\alpha_{0,t}\normbig{\Q{0}(s') - \best{Q}(s')}_\infty }\\
	&\qquad +\gamma \cdot \frac{1+(1-\gamma)/4}{1-(1-\gamma)/4}\exlim{s'\sim P(s,a,b)}{{\sum_{l=1}^t\alpha_{l,t}}\Big[\normbig{\Q{l}(s') - \best{Q}(s')}_\infty +  \frac{2\eta}{1-\gamma}\normbig{\Q{l}(s') - \Q{l-1}(s')}_\infty\Big]}\\
	&\qquad +  \frac{\gamma}{1-(1-\gamma)/4} \exlim{s'\sim P(s,a,b)}{(1-\eta\tau)\sum_{l=1}^t\alpha_{l,t}G^{(l-1)}(s') - \sum_{l=1}^t\alpha_{l,t}G^{(l)}(s')}.
\end{align*}
Note that 
\begin{align*}
	(1-\eta\tau)\sum_{l=1}^t\alpha_{l,t}G^{(l-1)}(s') - \sum_{l=1}^t\alpha_{l,t}G^{(l)}(s')
	&\le \sum_{l=1}^{t-1} ((1-\eta\tau)\alpha_{l+1,t} - \alpha_{l,t}) G^{(l)}(s') + \alpha_{1,t}G^{(0)}(s')\\
	&\le \alpha_{1,t}G^{(0)}(s') \le 2\alpha_{0,t} \eta\tau G^{(0)}(s') \le 2\alpha_{0,t},
\end{align*}
where the second step is due to
\begin{align}
	(1-\eta\tau)\alpha_{l+1, t}-\alpha_{l,t} &= ((1-\eta\tau)\alpha_{l+1} - \alpha_l (1-\alpha_{l+1}))\prod_{j=l+2}^{t}\alpha_j\nonumber\\
	&\le ((1-\eta\tau)\alpha_{l+1} - \alpha_{l+1} + \alpha_l\alpha_{l+1})\prod_{j=l+2}^{t}\alpha_j\nonumber\\
	&= \alpha_{l+1}(\alpha_l -\eta\tau)\prod_{j=l+2}^{t}\alpha_j \le 0.
	\label{eq:we_love_entropy}
\end{align}
We conclude that
\begin{align*}
	&\Q{t+1}(s,a,b) - \best{Q}(s,a,b) \\
	&\le \gamma \cdot \frac{1+(1-\gamma)/4}{1-(1-\gamma)/4}\exlim{s'\sim P(s,a,b)}{{\sum_{l=0}^t\alpha_{l,t}}\Big[\normbig{\Q{l}(s') - \best{Q}(s')}_\infty +  \frac{2\eta}{1-\gamma}\normbig{\Q{l}(s') - \Q{l-1}(s')}_\infty\Big]}\\
		&\qquad + 2\alpha_{0,t}\\
	&\le \frac{1+\gamma}{2}\cdot\exlim{s'\sim P(s,a,b)}{{\sum_{l=0}^t\alpha_{l,t}}\Big[\normbig{\Q{l}(s') - \best{Q}(s')}_\infty +  \frac{2\eta}{1-\gamma}\normbig{\Q{l}(s') - \Q{l-1}(s')}_\infty\Big]}\\
		&\qquad +  2\alpha_{0,t}.
\end{align*}
The other side of \eqref{eq:Ophelia_lite} can be obtained by computing $\frac{1-\gamma}{4} \cdot$ \eqref{eq:opt_gap_pos} + \eqref{eq:opt_gap_neg} and following a similar argument, and is therefore omitted.
To conclude the proof, we note that
for $t = 0$, we have $\big|\Q{1}(s,a,b) - \best{Q}(s,a,b)\big| \le \gamma \max_{s'\in\cS}|f_{s'}^{(0)} - f_{s'}^\star|\le \frac{2\gamma}{1-\gamma}$.

\subsection{Proof of Lemma \ref{lemma:Claudius}}
\label{sec:pf_lemma_Claudius}

 For $t \ge 1$, let 
\begin{equation*}
	u_{t} = \eta\normbig{\best{Q}(s) - \Q{t}(s)}_{\Gamma(\rho)} + \frac{12\eta^2}{(1-\gamma)^2}\normbig{\Q{t}(s) - \Q{t-1}(s)}_{\Gamma(\rho)}.
\end{equation*}
It follows that
\begin{align*}
	u_{1} &\le \frac{2\gamma \eta}{1-\gamma}+\frac{24\eta^2}{(1-\gamma)^3} \le 1.
\end{align*}
When $t \ge 1$, invoking Lemma \ref{lemma:Laertes} and Lemma \ref{lemma:Ophelia} gives
\begin{equation}
\begin{aligned}
	u_{t+1}&\le \Big(1-\frac{1-\gamma}{2}\Big){\sum_{l=1}^t\alpha_{l,t}}\Big[\eta\normbig{\Q{l} - \best{Q}}_{\Gamma(\rho)} + \Big(\frac{2\eta^2}{1-\gamma} + \frac{12\eta^2}{(1-\gamma)^2}\Big)\normbig{\Q{l} - \Q{l-1}}_{\Gamma(\rho)}\Big]\\
	&\qquad + \frac{48\eta \mathcal{C}_\rho}{(1-\gamma)^2}\sum_{l=1}^{t} \alpha_{l, t}\KLrho{\bar{\zeta}^{(l)}}{\bar{\zeta}^{(l-1)}} + 2\alpha_{0,t}\eta + \alpha_{0,t}\eta\normbig{\Q{0} - \best{Q}}_{\Gamma(\rho)}\\
	&\le \Big(1-\frac{1-\gamma}{3}\Big) {\sum_{l=1}^t\alpha_{l,t}}u_{l} + \frac{48\eta\cC_\rho}{(1-\gamma)^2}\sum_{l=1}^{t} \alpha_{l, t}\KLrho{\bar{\zeta}^{(l)}}{\bar{\zeta}^{(l-1)}} + \frac{4\eta}{1-\gamma}\alpha_{0,t}.
\end{aligned}
\label{eq:u_2_alpha_u_bound}
\end{equation}
Let
\[
	\beta_{l,t} = \alpha_l \prod_{i=l+1}^{t}\Big(1-\frac{1-\gamma}{3}\cdot \alpha_i\Big).
\]
It follows that for $t \ge 0$,
\begin{align}
	&\sum_{l=1}^{t+1}\alpha_{l, t+1} u_l \nonumber\\
	& = (1-\alpha_{t+1})\sum_{l=1}^{t}\alpha_{l, t} u_l + \alpha_{t+1}u_{t+1}\nonumber\\
	&\le \Big(1- \frac{1-\gamma}{3}\cdot\alpha_{t+1}\Big) \sum_{l=1}^{t}\alpha_{l, t}u_l + \alpha_{t+1} \frac{48\eta\cC_\rho}{(1-\gamma)^2} \cdot \sum_{l=1}^{t} \alpha_{l, t}\KLrho{\bar{\zeta}^{(l)}}{\bar{\zeta}^{(l-1)}} + \frac{4\eta}{1-\gamma}\alpha_{t+1}\alpha_{0,t}\nonumber\\
	&\le \prod_{l=2}^{t+1}\Big(1- \frac{1-\gamma}{3}\cdot\alpha_{l}\Big)\alpha_{1,1}u_1 + \frac{48\eta\cC_\rho}{(1-\gamma)^2} \sum_{i=1}^{t}\beta_{i+1,t+1} \sum_{l=1}^{i}\alpha_{l,i}\KLrho{\bar{\zeta}^{(l)}}{\bar{\zeta}^{(l-1)}} +  \frac{4\eta}{1-\gamma}\sum_{i=1}^{t}\alpha_{0,i}\beta_{i+1,t+1}\nonumber\\
	&\le \beta_{1,t+1}u_1 + \frac{48\eta\cC_\rho}{(1-\gamma)^2} \sum_{l=1}^{t}\sum_{i=l}^{t}\alpha_{l,i}\beta_{i+1,t+1}\KLrho{\bar{\zeta}^{(l)}}{\bar{\zeta}^{(l-1)}}  + \frac{4\eta}{1-\gamma} \sum_{i=1}^{t}\alpha_{0,i}\beta_{i+1,t+1}\nonumber\\
	&\le \frac{200\eta\cC_\rho}{(1-\gamma)^2}\sum_{l=1}^{t} \beta_{l, t+1}\KLrho{\bar{\zeta}^{(l)}}{\bar{\zeta}^{(l-1)}}  + \frac{18\eta}{1-\gamma}\beta_{0,t+1},
	\label{eq:alpha_u_bound}
\end{align}
where the last step is due to the following lemma. Similar lemma has appeared in prior works (see i.e., \cite[Lemma 36]{wei2021last}). Our version features a simpler proof, which is postponed to Appendix \ref{sec:pf_lemma_cabbage}.
\begin{lemma}
	\label{lemma:cabbage}
	Let two sequences $\{\delta_i\}, \{\xi_i\}$ be defined as
	\begin{equation*}
		\delta_i = 1 - c_1 \alpha_i, \qquad \text{and} \qquad \xi_i = 1 - c_2 \alpha_i,
	\end{equation*}
	where the constants $c_1, c_2$ satisfy $0 < c_1 < c_2 < \frac{1}{2\alpha_i}$. 
	For $l \le t$, let $\delta_{l, t} = \alpha_l \prod_{i=l+1}^{t}\delta_i$ and $\xi_{l, t} = \alpha_l \prod_{i=l+1}^{t}\xi_i$, where we take $\delta_{l, l} =\xi_{l, l} =\alpha_l$. We have
	\begin{equation*}
		\sum_{i=l}^{t}\xi_{l,i}\delta_{i+1,t}\le \Big(1+\frac{2}{c_2-c_1}\Big)\delta_{l,t}.
	\end{equation*}
\end{lemma}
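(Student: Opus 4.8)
The plan is to factor out $\delta_{l,t}$ and reduce the claim to a scalar estimate: writing $R_i := \xi_{l,i}\delta_{i+1,t}/\delta_{l,t}$, it suffices to show $\sum_{i=l}^{t} R_i \le 1 + \frac{2}{c_2-c_1}$ (I actually expect to land on the sharper $\frac{1}{c_2-c_1}$). First I would exploit the nested-product structure. Since $\delta_{l,t} = \alpha_l\prod_{j=l+1}^t \delta_j$ contains $\prod_{j=i+2}^{t}\delta_j$ as a sub-product, the tail factors of $\delta_{i+1,t}$ cancel against the denominator, and for every interior index $l \le i \le t-1$ one is left with
\[
    R_i = \frac{\alpha_{i+1}}{1-c_1\alpha_{i+1}}\,P_i,
    \qquad
    P_i := \prod_{j=l+1}^{i}\frac{\xi_j}{\delta_j}
         = \prod_{j=l+1}^{i}\frac{1-c_2\alpha_j}{1-c_1\alpha_j},
\]
with the convention $P_l = 1$. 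The hypothesis $c_2 < \tfrac{1}{2\alpha_i}$ forces $\delta_i,\xi_i \in (0,1)$ and $\xi_i < \delta_i$, so each $P_i \in (0,1]$ and every ratio above is well defined.

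The key step is a telescoping identity. Computing $P_i - P_{i+1} = P_i\bigl(1 - \xi_{i+1}/\delta_{i+1}\bigr) = P_i\,\frac{(c_2-c_1)\alpha_{i+1}}{1-c_1\alpha_{i+1}}$ shows that, for $l \le i \le t-1$,
\[
    R_i = \frac{P_i - P_{i+1}}{c_2 - c_1},
\]
so summing over $i=l,\dots,t-1$ collapses to $\frac{P_l - P_t}{c_2-c_1} = \frac{1 - P_t}{c_2-c_1}$.

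The hard part, and really the only subtle point, will be the boundary term $i=t$: here $\delta_{t+1,t} = \alpha_{t+1}$ is an empty product, which lacks the extra factor $\delta_{t+1} = 1-c_1\alpha_{t+1}$ that the interior terms carry, so one finds $R_t = \alpha_{t+1}P_t$ instead of $\frac{\alpha_{t+1}}{1-c_1\alpha_{t+1}}P_t$. I would reconcile this by extracting the correction factor: using the same computation of $P_t - P_{t+1}$ gives $R_t = (1-c_1\alpha_{t+1})\,\frac{P_t - P_{t+1}}{c_2-c_1} \le \frac{P_t - P_{t+1}}{c_2-c_1}$, since $1-c_1\alpha_{t+1}\in(0,1)$. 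Appending this to the interior telescope then yields $\sum_{i=l}^{t} R_i \le \frac{P_l - P_{t+1}}{c_2-c_1} = \frac{1-P_{t+1}}{c_2-c_1} \le \frac{1}{c_2-c_1} \le 1 + \frac{2}{c_2-c_1}$, and multiplying back by $\delta_{l,t}$ gives the stated bound. In short, once the boundary mismatch in the product convention is absorbed into the harmless factor $1-c_1\alpha_{t+1}\le 1$, the whole estimate reduces to a one-line telescoping sum, which is exactly the simplification over the more computational argument in \cite[Lemma 36]{wei2021last}.
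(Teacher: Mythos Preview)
Your argument is correct, and it is cleaner than the paper's own proof. Both rest on the same telescoping identity at heart, but they organize it differently. The paper first expands $\delta_{l,t}$ via $(1-c_1\alpha_i)=(1-c_2\alpha_i)+(c_2-c_1)\alpha_i$ to obtain
\[
    \delta_{l,t}=(c_2-c_1)\sum_{i=l+1}^{t}\xi_{l,i}\,\delta_{i,t}+\xi_{l,t},
\]
and then has to convert each $\delta_{i,t}$ into $\delta_{i+1,t}$; this introduces the factor $\tfrac{\alpha_{i+1}}{\alpha_i(1-c_1\alpha_{i+1})}$, which they bound by $2$ using the monotonicity $\alpha_{i+1}\le\alpha_i$ and $c_1\alpha_{i+1}\le\tfrac12$ inherited from the surrounding context. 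Your route avoids this detour: normalizing by $\delta_{l,t}$ and writing $R_i=\tfrac{\alpha_{i+1}}{1-c_1\alpha_{i+1}}P_i$ makes the identity $R_i=(P_i-P_{i+1})/(c_2-c_1)$ exact for all interior $i$, so nothing needs to be bounded except the single boundary term. As a result you get the sharper constant $\tfrac{1}{c_2-c_1}$ instead of $1+\tfrac{2}{c_2-c_1}$, and you do not rely on $\{\alpha_i\}$ being non-increasing. The only extra ingredient you consume is that the hypothesis $c_2<\tfrac{1}{2\alpha_i}$ (hence $0<1-c_1\alpha_{t+1}<1$) also holds at $i=t+1$, which is harmless here since the paper assumes the step-size condition for the whole sequence.
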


Substitution of \eqref{eq:alpha_u_bound} into \eqref{eq:u_2_alpha_u_bound} gives
\begin{align*}
	u_{t+1} & \le \Big(1-\frac{1-\gamma}{3}\Big) {\sum_{l=1}^t\alpha_{l,t}}u_{l} + \frac{48\eta}{(1-\gamma)^2}\sum_{l=1}^{t} \alpha_{l, t}\KLrho{\bar{\zeta}^{(l)}}{\bar{\zeta}^{(l-1)}} + \frac{4\eta}{1-\gamma}\alpha_{0,t}\\
	&\le \frac{200\eta \cC_\rho}{(1-\gamma)^2}\sum_{l=1}^{t} \beta_{l, t}\KLrho{\bar{\zeta}^{(l)}}{\bar{\zeta}^{(l-1)}}  + \frac{18\eta}{1-\gamma}\beta_{0,t}+ \frac{48\eta\cC_\rho}{(1-\gamma)^2}\sum_{l=1}^{t} \alpha_{l, t}\KLrho{\bar{\zeta}^{(l)}}{\bar{\zeta}^{(l-1)}} + \frac{4\eta}{1-\gamma}\alpha_{0,t}\\
	&\le \frac{250\eta\cC_\rho}{(1-\gamma)^2}\sum_{l=1}^{t} \beta_{l, t}\KLrho{\bar{\zeta}^{(l)}}{\bar{\zeta}^{(l-1)}}  + \frac{22\eta}{1-\gamma}\beta_{0,t}.
\end{align*}
for $t \ge 1$. It is straightforward to verify that the above inequality holds for $t = 0$ as well.


So we conclude that
\begin{align*}
	&\sum_{l=0}^{t}\lambda_{l+1, t+1} u_{l+1} = \sum_{i=0}^{t}\lambda_{i+1, t+1} u_{i+1} \\
	&\le \sum_{i=0}^{t}\lambda_{i+1, t+1}\Big[\frac{250\eta\cC_\rho}{(1-\gamma)^2}\sum_{l=1}^{i} \beta_{l, i}\KLrho{\bar{\zeta}^{(l)}}{\bar{\zeta}^{(l-1)}}  + \frac{22\eta}{1-\gamma}\beta_{0,i}\Big] \\
	&= \frac{250\eta\cC_\rho}{(1-\gamma)^2}\sum_{l=1}^{t}\sum_{i=l}^{t} \beta_{l, i}\lambda_{i+1,t+1}\KLrho{\bar{\zeta}^{(l)}}{\bar{\zeta}^{(l-1)}}  + \frac{22\eta}{1-\gamma}\sum_{i=0}^{t}\beta_{0,i}\lambda_{i+1, t+1}\\
	&\le \frac{6250\eta\cC_\rho}{(1-\gamma)^3}\sum_{l=1}^{t}\lambda_{l,t+1}\KLrho{\bar{\zeta}^{(l)}}{\bar{\zeta}^{(l-1)}}  + \frac{550\eta}{(1-\gamma)^2}\lambda_{0, t+1}\\
	&= \frac{6250\eta\cC_\rho}{(1-\gamma)^3}\sum_{l=0}^{t-1}\lambda_{l+1,t+1}\KLrho{\bar{\zeta}^{(l+1)}}{\bar{\zeta}^{(l)}}  + \frac{550\eta}{(1-\gamma)^2}\lambda_{0, t+1},
\end{align*}
where the penultimate step invokes Lemma \ref{lemma:cabbage}.


\subsection{Proof of Lemma \ref{lemma:bar_zeta_2_zeta}}
\label{sec:pf_lemma_bar_zeta_2_zeta}
Taking logarithm on the both sides of the update rule \eqref{eq:update_bar}, we get
\begin{equation}
	\begin{cases}
		\log \bmutp(s) - (1-\eta\tau)\log \mut(s) &\overset{\mathbf{1}}{=} \eta \Q{t}(s)\bnut(s)\\
		\log \bar{\nu}^{(t+1)}(s) - (1-\eta\tau)\log \nut(s) &\overset{\mathbf{1}}{=} -\eta \Q{t}(s)^\top\bmut(s)
	\end{cases},
	\label{eq:log_bar_update}
\end{equation}
where we recall the notation in \eqref{eq:equiv_one}.

Subtracting \eqref{eq:log_opt} from \eqref{eq:log_bar_update} and taking inner product with $\bztp(s) - \best{\zeta}(s)$ gives
\begin{align*}
	&\innprod{\log \bztp(s) - (1-\eta\tau)\log \zt(s) - \eta\tau \log \best{\zeta}(s), \bztp(s) - \best{\zeta}(s)}\\
	&= \eta\innprod{\bmutp(s) - \best{\mu}(s), \Q{t}(s)\bnut(s) - \best{Q}(s)\best{\nu}(s)}\\
	&\qquad - \eta \innprod{\bnutp(s) - \best{\nu}(s), \Q{t}(s)^\top\bmut(s) - \best{Q}(s)^\top \best{\mu}(s)}\\
	&\le \eta\innprod{\bmutp(s) - \best{\mu}(s), \Q{t}(s)\big(\bnut(s) - \best{\nu}(s)\big)}\\
	&\qquad - \eta \innprod{\bnutp(s) - \best{\nu}(s), \Q{t}(s)^\top\big(\bmut(s) - \best{\mu}(s)\big)} + 2\eta \normbig{\Q{t}(s) - \best{Q}(s)}_\infty \\
	&\le \eta\innprod{\bmutp(s) - \best{\mu}(s), \Q{t}(s)\big(\bnut(s) - \bnutp(s)\big)}\\
	&\qquad - \eta \innprod{\bnutp(s) - \best{\nu}(s), \Q{t}(s)^\top\big(\bmut(s) - \bmutp(s)\big)} + 2\eta \normbig{\Q{t}(s) - \best{Q}(s)}_\infty \\
	&\le \frac{2\eta}{1-\gamma}\Big(2\KLs{\best{\zeta}}{\bztp} + \KLs{\bztp}{\zt} + \KLs{\zt}{\bzt}\Big) + 2\eta \normbig{\Q{t}(s) - \best{Q}(s)}_\infty.
\end{align*}
LHS can be written as
\begin{align*}
	&\innprod{\log \bztp(s) - (1-\eta\tau)\log \zt(s) - \eta\tau \log \best{\zeta}(s), \bztp(s) - \best{\zeta}(s)}\\
	&=- \innprod{\log \bztp(s) - (1-\eta\tau)\log \zt(s) - \eta\tau \log \best{\zeta}(s), \best{\zeta}(s)}\\
	&\qquad + \innprod{\log \bztp(s) - (1-\eta\tau)\log {\zeta}^{(t)}(s) - \eta\tau \log \best{\zeta}(s), \bztp(s)}\\
	&= \KLs{\best{\zeta}}{\bztp}- (1-\eta\tau)\KLs{\best{\zeta}}{\zt} \\
	&\qquad + (1-\eta\tau)\KLs{\bztp}{\zt} + \eta\tau \KLs{\bztp}{\best{\zeta}}.
\end{align*}
So we conclude that
\begin{align*}
	&\Big(1-\frac{4\eta}{1-\gamma}\Big)\KLs{\best{\zeta}}{\bztp}- (1-\eta\tau)\KLs{\best{\zeta}}{\zt} \\
	&\qquad + \Big(1-\eta\tau - \frac{2\eta}{1-\gamma}\Big)\KLs{\bztp}{\zt} + \eta\tau \KLs{\bztp}{\best{\zeta}}\\
	&\le \frac{2\eta}{1-\gamma}\KLs{\zt}{\bzt} + 2\eta \normbig{\Q{t}(s) - \best{Q}(s)}_\infty.
\end{align*}
With $0 < \eta \le \frac{1-\gamma}{8}$, we have
\begin{align*}
	&\frac{1}{2}\KLs{\best{\zeta}}{\bztp} + \eta\tau \KLs{\bztp}{\best{\zeta}}\\
	&\le (1-\eta\tau)\KLs{\best{\zeta}}{\zt} + \frac{2\eta}{1-\gamma}\KLs{\zt}{\bzt} + 2\eta \normbig{\Q{t}(s) - \best{Q}(s)}_\infty.
\end{align*}

\subsection{Proof of Lemma \ref{lemma:dualgap_markov_2_matrix}}
\label{sec:pf_lemma_dualgap_markov_2_matrix}
	By definition of value function $V_\tau$, we have
	\begin{align*}
		V_\tau^{\mu,\nu}(s) - \best{V}(s) &= \mu(s)^\top Q_{\tau}^{\mu, \nu}(s) \nu(s) + \tau \ent{\mu(s)} - \tau \ent{\nu(s)}\\
		&\qquad - \best{\mu}(s)^\top \best{Q}(s) \best{\nu}(s) - \tau \ent{\best{\mu}(s)} + \tau \ent{\best{\nu}(s)}\\
		&=\mu(s)^\top Q_{\tau}^{\mu, \nu}(s) \nu(s) - \mu(s)^\top \best{Q}(s) \nu(s) + f_s(\best{Q}, \mu, \nu) - f_s(\best{Q}, \best{\mu}, \best{\nu})\\
		&= \gamma \exlim{\substack{a\sim \mu(\cdot|s),b\sim \nu(\cdot|s),\\s'\sim P(\cdot|s,a,b)}}{V_\tau^{\mu, \nu}(s') - \best{V}(s')}+ f_s(\best{Q}, \mu, \nu) - f_s(\best{Q}, \best{\mu}, \best{\nu}).
	\end{align*}
	Applying the relation recursively and averaging $s$ over $\rho$, we arrive at
	\begin{align}
				V_\tau^{\mu,\nu}(\rho) - \best{V}(\rho)&= \frac{1}{1-\gamma} \exlim{s'\sim d_\rho^{\mu, \nu}}{f_{s'}(\best{Q}, \mu, \nu) - f_{s'}(\best{Q}, \best{\mu}, \best{\nu})},
		\label{eq:performance_diff_lemma_game}
	\end{align}
	which is the well-known performance difference lemma applied to the setting of Markov games. It follows that
	\begin{align}
		V_\tau^{\mu_\tau^\dagger(\nu),\nu}(\rho) - \best{V}(\rho)& = \frac{1}{1-\gamma} \exlim{s'\sim d_\rho^{\mu_\tau^\dagger(\nu), \nu}}{f_{s'}(\best{Q}, \mu_\tau^\dagger(\nu), \nu) - f_{s'}(\best{Q}, \best{\mu}, \best{\nu})}\nonumber\\
		&\le \frac{1}{1-\gamma} \exlim{s'\sim d_\rho^{\mu_\tau^\dagger(\nu), \nu}}{f_{s'}(\best{Q}, \mu_\tau^\dagger(\nu), \nu) - f_{s'}(\best{Q}, \mu, \best{\nu})}\nonumber\\
		&\le  \frac{1}{1-\gamma} \exlim{s'\sim d_\rho^{\mu_\tau^\dagger(\nu), \nu}}{\max_{\mu', \nu'}\Big(f_{s'}(\best{Q}, \mu', \nu) - f_{s'}(\best{Q}, \mu, \nu')\Big)}\label{eq:perf_diff_crashed}\\
		&\le \frac{\cC_{\rho,\tau}^\dagger}{1-\gamma} \exlim{s\sim\rho}{\max_{\mu', \nu'}\Big(f_{s}(\best{Q}, \mu', \nu) - f_{s}(\best{Q}, \mu, \nu')\Big)}.\nonumber
	\end{align}
A similar argument gives $\best{V}(\rho) - V_\tau^{\mu,\nu_\tau^\dagger(\mu)}(\rho) \le \frac{\cC_{\rho,\tau}^\dagger}{1-\gamma} \exlim{s\sim\rho}{\max_{\mu', \nu'}\Big(f_{s}(\best{Q}, \mu', \nu) - f_{s}(\best{Q}, \mu, \nu')\Big)}$. Summing the two inequalities proves \eqref{eq:dualgap_markov_2_matrix_rho}. Alternatively, we continue from \eqref{eq:perf_diff_crashed} and show that
\begin{align*}
		V_\tau^{\mu_\tau^\dagger(\nu),\nu}(s) - \best{V}(s)&\le  \frac{1}{1-\gamma} \exlim{s'\sim d_s^{\mu_\tau^\dagger(\nu), \nu}}{\max_{\mu', \nu'}\Big(f_{s'}(\best{Q}, \mu', \nu) - f_{s'}(\best{Q}, \mu, \nu')\Big)} \\
		&\le \frac{\|1/\rho\|_\infty}{1-\gamma} \exlim{s\sim\rho}{\max_{\mu', \nu'}\Big(f_{s}(\best{Q}, \mu', \nu) - f_{s}(\best{Q}, \mu, \nu')\Big)}. 
	\end{align*}
Summing the inequality with the one for $\best{V}(s) - V_\tau^{\mu,\nu_\tau^\dagger(\mu)}(s)$ and taking maximum over $s\in\cS$ completes the proof for \eqref{eq:dualgap_markov_2_matrix_max}.

\section{Proof of key lemmas for the finite-horizon setting}
\subsection{Proof of Lemma \ref{lemma:Hamlet}}
\label{sec:pf_lemma_Hamlet}

Following similar arguments of arriving \eqref{eq:core_step_1}, we have
\begin{align*}
	&\innprod{\log \ztp_h(s) - (1-\eta\tau)\log \zt_h(s) - \eta\tau \log \besth{\zeta}(s), \bztp_h(s) - \besth{\zeta}(s)}\\
	&\le 2\eta \normbig{\Q{t+1}_h(s) - \besth{Q}(s)}_\infty.
\end{align*}
We rewrite the LHS as
\begin{align*}
	&\innprod{\log \ztp_h(s) - (1-\eta\tau)\log \zt_h(s) - \eta\tau \log \besth{\zeta}(s), \bztp_h(s) - \besth{\zeta}(s)}\\
	&=- \innprod{\log \ztp_h(s) - (1-\eta\tau)\log \zt_h(s) - \eta\tau \log \besth{\zeta}(s), \besth{\zeta}(s)}\\
	&\qquad + \innprod{\log \bztp_h(s) - (1-\eta\tau)\log \zt_h(s) - \eta\tau \log \besth{\zeta}(s), \bztp_h(s)}\\
	&\qquad + \innprod{\log \ztp_h(s) - \log \bztp_h(s), \bztp_h(s)}\\
	&= \KLs{\besth{\zeta}}{\ztp_h}- (1-\eta\tau)\KLs{\besth{\zeta}}{\zt_h} \\
	&\qquad + (1-\eta\tau)\KLs{\bztp_h}{\zt_h} + \eta\tau \KLs{\bztp_h}{\besth{\zeta}}\\
	&\qquad + \KLs{\ztp_h}{\bztp_h} - \innprod{\log \bztp_h(s)-\log \ztp_h(s),  \bztp_h(s)- \ztp_h(s)}.
\end{align*}
Rearranging terms gives
\begin{align} \label{eq:core_step_1_epi}
		&\KLs{\besth{\zeta}}{\ztp_h} - (1-\eta\tau)\KLs{\besth{\zeta}}{\zt_h} +(1-\eta\tau)\KLs{\bztp_h}{\zt_h} \nonumber \\
		&\qquad + \eta\tau \KLs{\bztp_h}{\besth{\zeta}} + \KLs{\ztp_h}{\bztp_h} \nonumber\\
		&\qquad - \innprod{\log \bztp_h(s)-\log \ztp_h(s),  \bztp_h(s)- \ztp_h(s)} \nonumber\\
		&\le 2\eta \normbig{\Q{t+1}(s) - \best{Q}(s)}_\infty. 
\end{align}
Note that 
\begin{align}\label{eq:core_step_2_epi} 
		&\innprod{\log \bmutp_h(s)-\log \mutp_h(s),  \bmutp_h(s)- \mutp_h(s)} \nonumber \\
		&= \eta \innprod{\Q{t}_h(s)\bnut_h(s) - \Q{t+1}_h(s)\bnutp_h(s), \bmutp_h(s) - \mutp_h(s)} \nonumber\\
		&\le \eta \normbig{\Q{t}_h(s)\bnut_h(s) - \Q{t+1}_h(s)\bnutp_h(s)}_1\normbig{\bmutp_h(s) - \mutp_h(s)}_1.	
	\end{align}
We bound $\normbig{\Q{t}_h(s)\bnut_h(s) - \Q{t+1}_h(s)\bnutp_h(s)}_1$ as
\begin{align*}
	&\normbig{\Q{t}_h(s)\bnut_h(s) - \Q{t+1}_h(s)\bnutp_h(s)}_1 \\
	&\le \normbig{\Q{t+1}_h(s)\big(\bnut_h(s) - \bnutp_h(s)\big)}_1 + \normbig{\big(\Q{t}_h(s)- \Q{t+1}_h(s)\big)\bnut_h(s)}_1\\
	&\le 2H\normbig{\bnut_h(s) - \bnutp_h(s)}_1 + \normbig{\Q{t}_h(s)- \Q{t+1}_h(s)}_\infty\\
	&\le 2H\normbig{\bnutp_h(s) - \nut_h(s)}_1 + 2H\normbig{\nut_h(s) - \bnut_h(s)}_1 + \normbig{\Q{t}_h(s)- \Q{t+1}_h(s)}_\infty.
\end{align*}
Plugging the above inequality into \eqref{eq:core_step_2_epi} and invoking Young's inequality yields
\begin{align*}
	&\innprod{\log \bmutp_h(s)-\log \mutp_h(s),  \bmutp_h(s)- \mutp_h(s)}\\
	&\le \eta H\Big(\normbig{\bnutp_h(s) - \nut_h(s)}_1^2 + \normbig{\nut_h(s) - \bnut_h(s)}_1^2 + 2\normbig{\bmutp_h(s) - \mutp_h(s)}_1^2\Big)\\
	&\qquad + \eta \normbig{\Q{t}_h(s)- \Q{t+1}_h(s)}_\infty\normbig{\bmutp_h(s) - \mutp_h(s)}_1\\
	&\le 2\eta H\KLs{\bnutp_h}{\nut_h} + 2\eta H\KLs{\nut_h}{\bnut_h} + 4\eta H\KLs{\mutp_h}{\bmutp_h}  + 2\eta^2 H \normbig{\Q{t}_h(s) - \Q{t+1}_h(s)}_\infty,
\end{align*}
where the last step results from Pinsker's inequality and Lemma \ref{lemma:one_step_policy_bound_epi}. Similarly, we have
\begin{align*}
	&\innprod{\log \bnutp_h(s)-\log \nutp_h(s),  \bnutp_h(s)- \nutp_h(s)}\\
	&\le 2\eta H\KLs{\bmutp_h}{\mut_h} + 2\eta H\KLs{\mut_h}{\bmut_h} + 4\eta H\KLs{\nutp_h}{\bnutp_h}   + 2\eta^2 H \normbig{\Q{t}_h(s) - \Q{t+1}_h(s)}_\infty.
\end{align*}
Summing the above two inequalities gives
\begin{align*}
	&\innprod{\log \bztp_h(s)-\log \ztp_h(s),  \bztp_h(s)- \ztp_h(s)}\\
	&\le 2\eta H\KLs{\bztp_h}{\zt_h} + 2\eta H\KLs{\zt_h}{\bzt_h} + 4\eta H\KLs{\ztp_h}{\bztp_h} \\
	&\qquad + 4\eta^2 H \normbig{\Q{t}_h(s) - \Q{t+1}_h(s)}_\infty\\
		&\le 2\eta H\KLs{\bztp_h}{\zt_h} + 2\eta H\KLs{\zt_h}{\bzt_h} + 4\eta H\KLs{\ztp_h}{\bztp_h} \\
	&\qquad + \frac{\eta}{2}\Big( \normbig{\Q{t}_h(s) - \besth{Q}(s)}_\infty + \normbig{\Q{t+1}_h(s) - \besth{Q}(s)}_\infty\Big),
\end{align*}
where the second step invokes triangular inequality and the fact that $\eta \le \frac{1}{8H}$. Plugging the above inequality into \eqref{eq:core_step_1_epi} gives
	\begin{align*}
		&\KLs{\besth{\zeta}}{\ztp_h} - (1-\eta\tau)\KLs{\besth{\zeta}}{\zt_h} +(1-\eta(\tau+2H))\KLs{\bztp_h}{\zt_h} \\
		&\qquad + \eta\tau \KLs{\bztp_h}{\besth{\zeta}} + (1-4\eta H)\KLs{\ztp_h}{\bztp_h} - 2\eta H\KLs{\zt_h}{\bzt_h}\\
		&\le \frac{5\eta}{2} \normbig{\Q{t+1}_h(s) - \best{Q}(s)}_\infty + \frac{\eta}{2} \normbig{\Q{t}_h(s) - \best{Q}(s)}_\infty.
	\end{align*}
With $\eta \le \frac{1}{8H}$, we have $(1-\eta\tau)(1-4\eta H) \ge 2\eta H$ and $1-\eta(\tau + 2H) \ge 0$. It follows that
\begin{align*}
	&\KLs{\besth{\zeta}}{\ztp_h} + (1-4\eta H)\KLs{\ztp_h}{\bztp_h} + \eta\tau \KLs{\bztp_h}{\besth{\zeta}}\\
	&\le (1-\eta\tau)\KLs{\besth{\zeta}}{\zt_h} + 2\eta H\KLs{\zt_h}{\bzt_h}\\
	&\qquad + \frac{5\eta}{2} \normbig{\Q{t+1}_h(s) - \best{Q}(s)}_\infty + \frac{\eta}{2} \normbig{\Q{t}_h(s) - \best{Q}(s)}_\infty\\
	&\le (1-\eta\tau)\Big(\KLs{\besth{\zeta}}{\zt_h} + (1-4\eta H)\KLs{\zt_h}{\bzt_h}\Big)\\
	&\qquad + \frac{5\eta}{2} \normbig{\Q{t+1}_h(s) - \best{Q}(s)}_\infty + \frac{\eta}{2} \normbig{\Q{t}_h(s) - \best{Q}(s)}_\infty.
\end{align*}
Therefore, it holds for $0 \le t_1 < t_2$ that
\begin{align*}
	&\KLs{\besth{\zeta}}{\zeta_h^{(t_2)}} + (1-4\eta H)\KLs{\zeta_h^{(t_2)}}{\bar{\zeta}_h^{(t_2)}} + \eta\tau \KLs{\bar{\zeta}_h^{(t_2)}}{\besth{\zeta}}\\
	&\le (1-\eta\tau)^{t_2-t_1}\Big(\KLs{\besth{\zeta}}{\zeta_h^{(t_1)}} + (1-4\eta H)\KLs{\zeta_h^{(t_1)}}{\bar{\zeta}_h^{t_1}}\Big)\\
	&\qquad +\sum_{t'=t_1+1}^{t_2} (1-\eta\tau)^{t_2-l}\Big[ \frac{5\eta}{2} \normbig{\Q{l}_h(s) - \best{Q}(s)}_\infty + \frac{\eta}{2} \normbig{\Q{l-1}_h(s) - \best{Q}(s)}_\infty\Big]\\
	&\le (1-\eta\tau)^{t_2-t_1}\Big(\KLs{\besth{\zeta}}{\zeta_h^{(t_1)}} + (1-4\eta H)\KLs{\zeta_h^{(t_1)}}{\bar{\zeta}_h^{(t_1)}}\Big)\\
	&\qquad +4\eta\sum_{l=t_1}^{t_2} (1-\eta\tau)^{t_2-l}  \normbig{\Q{l}_h(s) - \best{Q}(s)}_\infty.
\end{align*}

\subsection{Proof of Lemma \ref{lemma:Horatio}}
\label{sec:pf_lemma_Horatio}
For $t_2 > 0$, we have
\begin{align}
	&\Q{t_2}_{h-1}(s,a,b) - Q_{h-1,\tau}^\star(s,a,b) \nonumber\\
	&= \exlim{s'\sim P_{h-1}(\cdot|s,a,b)}{\V{t_2-1}_{h}(s') - \besth{V}(s')}\nonumber\\
	 &= \mathop{\mathbb{E}}_{s'\sim P_{h-1}(\cdot|s,a,b)}\Big[(1-\eta\tau)^{t_2-t_1}\big(\V{t_1-1}_{h}(s') - \besth{V}(s')\big) \nonumber\\
	 &\qquad + \eta\tau\sum_{l=t_1}^{t_2-1}(1-\eta\tau)^{t_2-1-l} \big(f_{s'}(\Q{t_1}, \bar{\mu}_h^{(t_1)}, \bar{\nu}_h^{(t_1)}) - f_{s'}(\besth{Q}, \besth{\mu}, \besth{\nu})\big)\Big]\nonumber\\
	 &\le (1-\eta\tau)^{t_2-t_1} 2H + \exlim{s'\sim P_{h-1}(\cdot|s,a,b)}{\eta\tau\sum_{l=t_1}^{t_2-1}(1-\eta\tau)^{t_2-1-l} \big(f_{s'}(\Q{l}_{h}, \bar{\mu}_h^{(l)}, \bar{\nu}_h^{(l)}) - f_{s'}(\besth{Q}, \besth{\mu}, \besth{\nu})\big)}.
	 \label{eq:Q_gap_epi}
\end{align}
We start by decomposing $f_s^{(t)} - f_{s}^\star$ as
\begin{align*}
	&f_s(\Q{t}_h, \bmut_h, \bnut_h) - f_{s}(\besth{Q}, \besth{\mu}, \besth{\nu})\\
	&= \prn{f_s(\Q{t}_h, \bmut_h, \bnut_h) - f_s(\Q{t}_h, \bmut_h, \besth{\nu})} + f_s(\Q{t}_h, \bmut_h, \besth{\nu}) - f_s(\besth{Q}, \besth{\mu}, \besth{\nu})\\
	&\le \prn{f_s(\Q{t}_h, \bmut_h, \bnut_h) - f_s(\Q{t}_h, \bmut_h, \besth{\nu})} + f_s(\best{Q}, \bmut, \besth{\nu})- f_s(\besth{Q}, \besth{\mu}, \besth{\nu})\\
	&\qquad + \normbig{\Q{t}_h(s) - \besth{Q}(s)}_\infty\\
	&\le f_s(\Q{t}_h, \bmut_h, \bnut_h) - f_s(\Q{t}_h, \bmut_h, \besth{\nu}) + \normbig{\Q{t}_h(s) - \besth{Q}(s)}_\infty.
\end{align*}
Note that Lemma \ref{lemma:one_step_regret} can be applied to the episodic setting by simply replacing $1/(1-\gamma)$ with $H$, which yields
\begin{align} 
	f_s(\Q{t}_h, \bmut_h, \bnut_h) - f_{s}(\besth{Q}, \besth{\mu}, \besth{\nu})
	&\le \normbig{\Q{t}_h(s) - \besth{Q}(s)}_\infty +  2\eta H\normbig{\Q{t}_h(s) - \Q{t-1}_h(s)}_\infty  \nonumber\\
	&\qquad+ \frac{1-\eta\tau}{\eta}\KLs{\besth{\nu}}{\nutm_h} - \frac{1}{\eta} \KLs{\besth{\nu}}{\nut_h} \nonumber\\
	&\qquad   - \frac{1}{\eta} \big(1-4\eta H\big) \KLs{\nut_h}{\bnut_h}- \frac{1-\eta\tau}{\eta}\KLs{\bnut_h}{\nutm_h} \nonumber \\
	&\qquad +2 H\Big(\KLs{\bmut_h}{\mutm_h} + \KLs{\mutm_h}{\bar{\mu}^{(t-1)}_h}\Big).  \label{eq:opt_gap_pos_epi}
\end{align}
By a similar argument,
\begin{align} 
	 f_{s}(\besth{Q}, \besth{\mu}, \besth{\nu}) - f_s(\Q{t}_h, \bmut_h, \bnut_h)  
	&\le \normbig{\Q{t}_h(s) - \besth{Q}(s)}_\infty +  2\eta H\normbig{\Q{t}_h(s) - \Q{t-1}_h(s)}_\infty \nonumber\\
	&\qquad+ \frac{1-\eta\tau}{\eta}\KLs{\besth{\mu}}{\mutm_h} - \frac{1}{\eta} \KLs{\besth{\mu}}{\mut_h} \nonumber\\
	&\qquad   - \frac{1}{\eta} \big(1-4\eta H\big) \KLs{\mut_h}{\bmut_h}- \frac{1-\eta\tau}{\eta}\KLs{\bmut_h}{\mutm_h} \nonumber\\
	&\qquad +2 H\Big(\KLs{\bnut_h}{\nutm_h} + \KLs{\nutm_h}{\bar{\nu}^{(t-1)}_h}\Big).
\label{eq:opt_gap_neg_epi}
\end{align}
Combining \eqref{eq:opt_gap_pos_epi} + $\frac{2}{3}\cdot$ \eqref{eq:opt_gap_neg_epi} gives
\begin{align}
	&\frac{1}{3}\big[f_s(\Q{t}_h, \bmut_h, \bnut_h) - f_{s}(\besth{Q}, \besth{\mu}, \besth{\nu})\big]\nonumber\\
	&\le \frac{5}{3}\big[\normbig{\Q{t}_h(s) - \besth{Q}(s)}_\infty +  2\eta H\normbig{\Q{t}_h(s) - \Q{t-1}_h(s)}_\infty \big]\nonumber\\
	&\qquad + \frac{1-\eta\tau}{\eta}\Big[\KLs{\besth{\nu}}{\nutm_h} + \frac{2}{3}\KLs{\besth{\mu}}{\mutm_h}\Big] - \frac{1}{\eta}\Big[\KLs{\besth{\nu}}{\nut_h} + \frac{2}{3}\KLs{\besth{\mu}}{\mut_h}\Big]\nonumber\\
	&\qquad + 2H\Big[\KLs{\mutm_h}{\bar{\mu}^{(t-1)}_h} + \frac{2}{3}\KLs{\nutm_h}{\bar{\nu}^{(t-1)}_h} \Big]\nonumber\\
	&\qquad - \frac{1}{\eta} \big(1-4\eta H\big)\Big[ \frac{2}{3} \KLs{\mut_h}{\bmut_h} + \KLs{\nut_h}{\bnut_h}\Big]\nonumber\\
	&\qquad +\Big(2H - \frac{1-\eta\tau}{\eta}\cdot\frac{2}{3}\Big)\KLs{\bmut}{\mutm} + \Big(2H \cdot \frac{2}{3} - \frac{1-\eta\tau}{\eta}\Big)\KLs{\bnut}{\nutm}.
	\label{eq:opt_gap_pos_mix_epi}
\end{align}
With $\eta \le \frac{1}{8H}$, we have
\[
2H - \frac{1-\eta\tau}{\eta}\cdot\frac{2}{3} \le 0,\quad 2H \cdot \frac{2}{3} - \frac{1-\eta\tau}{\eta}\le 0, \quad \text{and}\quad \frac{1}{\eta}(1-\eta\tau)(1-4\eta H) \cdot\frac{2}{3}\ge 2 H.
\]
Let 
\begin{align*}
G_h^{(t)}(s) &= \KLs{\besth{\nu}}{\nut_h} + \frac{2}{3}\KLs{\besth{\mu}}{\mut_h} + \frac{2}{3}(1-4\eta H)\Big[\KLs{\mut_h}{\bmut_h} + \KLs{\nut_h}{\bnut_h}\Big].
\end{align*}
We can simplify \eqref{eq:opt_gap_pos_mix_epi} as
\begin{align*}
	&f_s(\Q{t}_h, \bmut_h, \bnut_h) - f_{s}(\besth{Q}, \besth{\mu}, \besth{\nu})\\
	&\le 5\big[\normbig{\Q{t}_h(s) - \besth{Q}(s)}_\infty +  2\eta H\normbig{\Q{t}_h(s) - \Q{t-1}_h(s)}_\infty \big] +\frac{1-\eta\tau}{\eta}G_h^{(t-1)}(s) - \frac{1}{\eta}G_h^{(t)}(s).
\end{align*}
Plugging the above inequality into \eqref{eq:Q_gap_epi} gives
\begin{align*}
	&\Q{t_2}_{h-1}(s,a,b) - Q_{h-1,\tau}^\star(s,a,b) \\
	 &\le (1-\eta\tau)^{t_2-t_1} 2H\\
	 &\qquad + \exlim{s'\sim P_{h-1}(\cdot|s,a,b)}{5\eta\tau\sum_{l=t_1}^{t_2-1}(1-\eta\tau)^{t_2-1-l} \big(\normbig{\Q{l}_h(s') - \besth{Q}(s')}_\infty +  2\eta H\normbig{\Q{l}_h(s') - \Q{l-1}_h(s')}_\infty \big)}\\
	 &\qquad + \exlim{s'\sim P_{h-1}(\cdot|s,a,b)}{\tau (1-\eta\tau)^{t_2-t_1}G_h^{(t_1-1)}(s')}\\
	 &\le (1-\eta\tau)^{t_2-t_1} 2H\\
	 &\qquad + 10\eta\tau\exlim{s'\sim P_{h-1}(\cdot|s,a,b)}{\sum_{l=t_1-1}^{t_2-1}(1-\eta\tau)^{t_2-1-l} \normbig{\Q{l}_h(s') - \besth{Q}(s')}_\infty }\\
	 &\qquad + \tau (1-\eta\tau)^{t_2-t_1}\exlim{s'\sim P_{h-1}(\cdot|s,a,b)}{\mathsf{KL}_{s'}\big(\besth{\zeta}\,\|\,\zeta^{(t_1-1)}_h\big)  + (1-4\eta H)\mathsf{KL}_{s'}\big(\zeta^{(t_1-1)}_h\,\|\,\bar{\zeta}^{(t_1-1)}_h\big)}.
\end{align*}
The other side of Lemma \ref{lemma:Horatio} can be shown with a similar proof and is therefore omitted.

\section{Proof of auxiliary lemmas}

\subsection{Proof of Lemma \ref{lemma:one_step_policy_bound}}
\label{sec:pf_lemma_one_step_policy_bound}
We first single out a set of bounds for $\V{t}$ and $\Q{t}$, which can be obtained by a simple induction:
\begin{equation}	\label{eq:value_bound}
\forall (s,a,b) \in \cS\times\cA\times\cB,\qquad
	\begin{cases}
		-\frac{\tau \log |\cB|}{1-\gamma} \le V^{(t)}(s) \le \frac{1 + \tau \log |\cA|}{1-\gamma}\\
		-\frac{\gamma \tau  \log |\cB|}{1-\gamma}\le Q^{(t)}(s,a,b) \le \frac{1 + \gamma\tau\log |\cA|}{1-\gamma}
	\end{cases}.
\end{equation}
We invoke the following lemma to bound several key quantities that will be helpful in the analysis.
\begin{lemma}[{\cite[Lemma 24]{mei2020global}}]
	Let $\pi, \pi' \in \Delta(\cA)$ such that $\pi(a)\propto \exp(\theta(a))$, $\pi'(a) \propto \exp(\theta'(a))$ for some $\theta, \theta' \in \mathbb{R}^{|\cA|}$. It holds that
	\begin{equation*}
		\normbig{\pi - \pi'}_1 \le \normbig{\theta - \theta'}_\infty.
	\end{equation*}
\end{lemma}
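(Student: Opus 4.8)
The plan is to regard $\pi = f(\theta)$ and $\pi' = f(\theta')$ as outputs of the softmax map $f:\mathbb{R}^{|\cA|}\to\simplexA$ defined by $f(\theta)(a) = \exp(\theta(a))/\sum_{a'}\exp(\theta(a'))$, and to control $\normbig{\pi - \pi'}_1$ by integrating the derivative of $f$ along the segment joining $\theta'$ to $\theta$. Writing $v = \theta - \theta'$ and $\theta_u = \theta' + uv$ for $u\in[0,1]$, smoothness of $f$ and the fundamental theorem of calculus give $\pi - \pi' = \int_0^1 J(\theta_u)\,v\,du$, where $J(\theta_u)$ is the Jacobian of $f$ at $\theta_u$. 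By the triangle inequality for integrals it then suffices to establish the pointwise bound $\normbig{J(\theta_u)v}_1 \le \normbig{v}_\infty$ for every $u$, after which $\normbig{\pi - \pi'}_1 \le \int_0^1\normbig{J(\theta_u)v}_1\,du \le \normbig{v}_\infty = \normbig{\theta-\theta'}_\infty$.

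I would next compute the Jacobian explicitly. Writing $p = f(\theta_u)$, one has $\partial p(a)/\partial\theta(b) = p(a)(\delta_{ab} - p(b))$, i.e. $J = \mathrm{diag}(p) - pp^\top$, so that $[Jv]_a = p(a)\big(v(a) - \bar v\big)$ with $\bar v = \sum_{a'}p(a')v(a') = \langle p, v\rangle$. Consequently $\normbig{Jv}_1 = \sum_a p(a)\,\big|v(a)-\bar v\big|$, which is precisely the mean absolute deviation of $v$ under the distribution $p$.

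The heart of the proof --- and the step I expect to be the main obstacle --- is to show that this mean absolute deviation is at most half the range of $v$, namely $\sum_a p(a)|v(a)-\bar v| \le \tfrac12\big(\max_a v(a) - \min_a v(a)\big)$. I would argue as follows. Since $\langle p, v - \bar v\rangle = 0$, the positive and negative parts balance, giving $\sum_a p(a)|v(a)-\bar v| = 2\sum_a p(a)(v(a)-\bar v)_+$. Setting $m = \min_a v(a)$ and $M = \max_a v(a)$, the convex function $x\mapsto (x-\bar v)_+$ lies below its chord on $[m,M]$, so $(v(a)-\bar v)_+ \le \frac{v(a)-m}{M-m}(M-\bar v)$; taking the $p$-average yields $\sum_a p(a)(v(a)-\bar v)_+ \le \frac{(\bar v - m)(M-\bar v)}{M-m}$, and AM--GM bounds the right-hand side by $\tfrac14(M-m)$. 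Doubling gives the half-range bound.

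Finally, to close the argument I would note that the half-range is always dominated by the sup-norm, $\tfrac12(\max_a v(a)-\min_a v(a)) \le \normbig{v}_\infty$, which delivers $\normbig{Jv}_1\le\normbig{v}_\infty$ and hence the claim. The shift-invariance of softmax, equivalently $J\one = 0$, is consistent with this, since $\normbig{Jv}_1$ depends on $v$ only through its range. An alternative elementary route avoids calculus by bounding the ratios $\pi'(a)/\pi(a)$ within $[e^{-2\delta},e^{2\delta}]$ for $\delta = \normbig{\theta-\theta'}_\infty$, but this only yields the bound up to constants and does not recover the sharp inequality, so I would prefer the Jacobian-plus-mean-absolute-deviation approach outlined above.
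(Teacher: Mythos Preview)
Your argument is correct. The Jacobian of the softmax is indeed $J=\mathrm{diag}(p)-pp^\top$, so $\|Jv\|_1=\sum_a p(a)\,|v(a)-\bar v|$ is the mean absolute deviation of $v$ under $p$; your chord argument (convexity of $x\mapsto(x-\bar v)_+$ on $[m,M]$, then AM--GM) cleanly yields the half-range bound $\sum_a p(a)|v(a)-\bar v|\le\tfrac12(M-m)\le\|v\|_\infty$, and integrating along the segment finishes the proof. The only cosmetic omission is the degenerate case $M=m$, which is trivial since then $v$ is constant and $Jv=0$.

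As for comparison with the paper: there is nothing to compare against, because the paper does not prove this lemma. It is quoted verbatim as \cite[Lemma~24]{mei2020global} and used as a black box in the proof of Lemma~\ref{lemma:one_step_policy_bound}. The proof in the cited reference also proceeds via the softmax Jacobian and a mean-value argument, so your route is essentially the standard one; your explicit reduction to the half-range inequality for the mean absolute deviation is a nice self-contained way to extract the constant~$1$.
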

With this lemma in mind, for any $t \ge 0$, it follows that
\begin{align*}
	\normbig{\bmutp(s) - \mutp(s)}_1 &\le \min_{c\in \mathbb{R}}\normbig{\log\bmutp(s) - \log\mutp(s) - c\cdot\one}_\infty\\
	&\le \eta \normbig{\Q{t}(s)\bnut(s) - \Q{t+1}(s)\bnutp(s)}_\infty\\
	&\le \eta\cdot \frac{1+\gamma\tau(\log|\cA| + \log|\cB|)}{1-\gamma}\le \frac{2\eta}{1-\gamma},
\end{align*}
where the second line follows from the update rule \eqref{eq:update_whole}, and the last line follows from \eqref{eq:value_bound}. A similar argument reveals that
\begin{equation*}
	\normbig{\bnutp(s) - \nu^{(t+1)}(s)}_1 \le \frac{2\eta}{1-\gamma},
\end{equation*}
which completes the proof of \eqref{eq:diff_bar_unbar}.

Moving onto the second claim \eqref{eq:diff_bar}, we make note of the fact that when $t \ge 1$,  
\begin{align} 
		\bmutp(a|s) &\propto \mut(a|s)^{1-\eta\tau} \exp(\eta [ \Q{t}(s)\bnut(s)]_a ) \nonumber \\
		&\overset{\mathrm{(i)}}{\propto} \bmut(a|s)^{1-\eta\tau} \exp\Big(\eta \big[\Q{t}(s)\bnut(s)+(1-\eta\tau)(\Q{t}(s)\bnut(s) - \Q{t-1}(s)\bar{\nu}^{(t-1)}(s))\big]_a \Big) \nonumber \\
		&\propto \bmut(a|s) \exp(\eta w^{(t)}(a)),
	\label{eq:mu_update_single_ver}
\end{align}
where
\begin{equation*}
	w^{(t)} = \Q{t}(s)\bnut(s)+(1-\eta\tau)\big(\Q{t}(s)\bnut(s) - \Q{t-1}(s)\bar{\nu}^{(t-1)}(s)\big) - \tau \log \bmut(s).
\end{equation*}
Here, (i) follows from the update rule \eqref{eq:update_whole} as
\begin{align*}
\mut(a|s) & \propto \mutm(a|s)^{1-\eta\tau} \exp(\eta [\Q{t}(s)\bnut(s)]_a) \\
& \propto \mutm(a|s)^{1-\eta\tau} \exp(\eta [\Q{t-1}(s)\bar{\nu}^{(t-1)}(s)]_a) \exp(\eta [\Q{t}(s)\bnut(s)- \Q{t-1}(s)\bar{\nu}^{(t-1)}(s)]_a) \\
& \propto \bmut(a|s) \exp(\eta [\Q{t}(s)\bnut(s)- \Q{t-1}(s)\bar{\nu}^{(t-1)}(s)]_a).
\end{align*}
Moreover, $w^{(t)}$ satisfies 
\begin{align*}
\normbig{w^{(t)}}_\infty 
&\le \normbig{\Q{t}(s)\bnut(s)}_\infty + \normbig{\tau \log \bmut(s)}_\infty + (1-\eta\tau)\normbig{\Q{t}(s)\bnut(s) - \Q{t-1}(s)\bar{\nu}^{(t-1)}(s)}_\infty\\
&\le \frac{2}{1-\gamma} + \frac{2}{1-\gamma} + \frac{2(1-\eta\tau)}{1-\gamma}\le \frac{6}{1-\gamma}.
\end{align*}
Here, the second step is due to \eqref{eq:log_bound}, which we shall prove momentarily. 
Recall that when $t = 0$, we have $\bmutp = \bar{\mu}^{(0)}$. In sum, we have
\begin{equation*}
\forall s\in\cS, t \ge 0,\qquad \normbig{\bmutp(s) - \bmut(s)}_1 \le \frac{6\eta}{1-\gamma},
\end{equation*}
concluding the proof of \eqref{eq:diff_bar}.

It remains to prove the claim \eqref{eq:log_bound}. For simplicity we focus on the bound with $\normbig{\log \mut (s)}_{\infty}$; the other bounds follow similarly.
	It is worth noting that $\mut(s)$ can be always written as $\mut(a|s) \propto \exp(w^{(t)}(a)/\tau)$ for some $w^{(t)} \in \mathbb{R}^{|\cA|}$ satisfying 
	\begin{equation*}
	\forall a\in\cA,\qquad 
		-\frac{\gamma \tau  \log |\cB|}{1-\gamma}\le w^{(t)}(a) \le \frac{1 + \gamma\tau\log |\cA|}{1-\gamma}.
	\end{equation*}
	To see this, note that the claim trivially holds for $t = 0$ with $w^{(0)} = \mathbf{0}$. When the statement holds for some $t \ge 0$, we have
	\begin{align*}
		\mutp(a|s) &\propto \mut(a|s)^{1-\eta\tau} \exp(\eta \Q{t+1}(s)\bnutp(s))\\
		&\propto \exp\big( ((1-\eta\tau) w^{(t)} + \eta\tau \Q{t+1}(s)\bnutp(s))/\tau \big)\\
		&\propto \exp\big( w^{(t+1)}/\tau \big),
	\end{align*}
	with $w^{(t+1)} = (1-\eta\tau) w^{(t)} + \eta\tau \Q{t+1}(s)\bnutp(s)$. We conclude that the claim holds for $t+1$ by recalling \eqref{eq:value_bound}.
	It then follows straightforwardly that
	\begin{equation*}
		\frac{\mut(a_1|s)}{\mut(a_2|s)} = \exp\Big(\frac{w^{(t)}(a_1)-w^{(t)}(a_2)}{\tau}\Big) \le \exp\Big(\frac{1+\gamma\tau(\log|\cA| + \log|\cB|)}{(1-\gamma)\tau}\Big) 
	\end{equation*}
	for any $a_1, a_2 \in \cA$. This allows us to show that
	\begin{equation*}
		\min_{a\in\cA} \mut (a|s) \ge \frac{1}{|\cA|\exp\big(\frac{1+\gamma\tau(\log|\cA| + \log|\cB|)}{(1-\gamma)\tau}\big)} \sum_{a\in\cA} \mut (a|s) =  \frac{1}{|\cA|\exp\big(\frac{1+\gamma\tau(\log|\cA| + \log|\cB|)}{(1-\gamma)\tau}\big)},
	\end{equation*}
	which gives
	\begin{align*}
		\|\log \mut(s)\|_\infty &\le \frac{1+\gamma\tau(\log|\cA| + \log|\cB|)}{(1-\gamma)\tau} + \log |\cA|\le \frac{1}{(1-\gamma)\tau} + \frac{\log|\cA| + \gamma\log|\cB|}{1-\gamma}\\
		&\le \frac{2}{(1-\gamma)\tau}.
	\end{align*}

\subsection{Proof of Lemma \ref{lemma:f_diff}}
\label{sec:pf_lemma_f_diff}
We decompose the term $f_s(\Q{t+1}, \bmutp, \bnutp) - f_s(\Q{t}, \bmut, \bnut)$ as follows:
\begin{align*}
	&f_s(\Q{t+1}, \bmutp, \bnutp) - f_s(\Q{t}, \bmut, \bnut)\\
	&= 	f_s(\Q{t+1}, \bmutp, \bnutp) - f_s(\Q{t}, \bmutp, \bnutp) + 	f_s(\Q{t}, \bmutp, \bnutp) - f_s(\Q{t}, \bmut, \bnut)\\
	&= \bmutp(s)^\top\Big(\Q{t+1}(s) - \Q{t}(s)\Big)\bnutp(s) \\
	&\qquad + f_s(\Q{t}, \bmutp, \bnut) - f_s(\Q{t}, \bmut, \bnut) + f_s(\Q{t}, \bmut, \bnutp) - f_s(\Q{t}, \bmut, \bnut)\\
	&\qquad + \Big[f_s(\Q{t}, \bmutp, \bnutp) + f_s(\Q{t}, \bmut, \bnut)-f_s(\Q{t}, \bmutp, \bnut) - f_s(\Q{t}, \bmut, \bnutp)\Big].
\end{align*}
Note that $\big|\bmutp(s)^\top\big(\Q{t+1}(s) - \Q{t}(s)\big)\bnutp(s)\big| \le \normbig{\Q{t+1}(s) - \Q{t}(s)}_\infty$. For the terms in the bracket, we have
\begin{align*}
	&\Big|\Big[f_s(\Q{t}, \bmutp, \bnutp) + f_s(\Q{t}, \bmut, \bnut)-f_s(\Q{t}, \bmutp, \bnut) - f_s(\Q{t}, \bmut, \bnutp)\Big]\Big|\\
	&= \Big|\big(\bmutp(s) - \bmut(s)\big)^\top \Q{t}(s)\big(\bnutp(s) - \bnut(s)\big)\Big|\\
	&\le \frac{2}{1-\gamma}\KLs{\bztp}{\bzt},
\end{align*}
where the last step invokes Cauchy-Schwarz inequality and Pinsker's inequality (see e.g., \eqref{eq:pinsker_trick_example}).
It remains to bound the two difference terms $\big|f_s(\Q{t}, \bmutp, \bnut) - f_s(\Q{t}, \bmut, \bnut)\big|$ and $\big|f_s(\Q{t}, \bmut, \bnutp) - f_s(\Q{t}, \bmut, \bnut)\big|$.
To proceed, we show that
\begin{align}
	&f_s(\Q{t}, \bmut, \bnut) - f_s(\Q{t}, \bmutp, \bnut)\nonumber\\
	&= \innprod{\bmut(s) - \bnutp(s), \Q{t}(s)^\top\bmut(s)} + \tau \cH(\bmut(s)) - \tau \cH(\bmutp(s))\nonumber\\
	&= \innprod{\bmut(s) - \bmutp(s),  \Q{t}(s)^\top\bnut(s)+ (1-\eta\tau)\big(\Q{t}(s)\bnut(s) - \Q{t-1}(s)\bar{\nu}^{(t-1)}(s)\big)}\nonumber\\
	&\qquad  + \tau \cH(\bmut(s)) - \tau \cH(\bmutp(s))\nonumber\\
	&\qquad - (1-\eta\tau)\innprod{\bmut(s) - \bmutp(s), \Q{t}(s)\bnut(s) - \Q{t-1}(s)\bar{\nu}^{(t-1)}(s)}\nonumber\\
	&=  - \frac{1}{\eta} \KLs{\bmut}{\bmutp}- \frac{1-\eta\tau}{\eta}\KLs{\bmutp}{\bmut}\nonumber\\
	&\qquad - (1-\eta\tau)\innprod{\bmut(s) - \bmutp(s), \Q{t}(s)\bnut(s) - \Q{t-1}(s)\bar{\nu}^{(t-1)}(s)} .
	\label{eq:f_mu_diff_tmp}
\end{align}
Here, the third step results from the special case of the following three-point lemma---which is proven in Appendix~\ref{sec:pf_three_pt}---in view of \eqref{eq:mu_update_single_ver}. 
\begin{lemma}[Regularized three-point lemma]
\label{lemma:strange_three_pt}
Let $x \in \Delta(\cA)$ be defined as
\begin{equation*}
	x(a) \propto y(a)^{1-\eta\tau} \exp(-\eta w(a))
\end{equation*}
for some $w \in \mathbb{R}^{|\cA|}$ and $y\in \Delta(\cA)$. It holds for all $z \in \Delta(\cA)$ that
\begin{equation*}
\frac{\eta}{1-\eta\tau}\Big[\innprod{x-z, w} - \tau \cH(x)+ \tau\cH(z)\Big] = \KL{z}{y} - \frac{1}{1-\eta\tau}\KL{z}{x} - \KL{x}{y}.
\end{equation*}
This immediately implies that
\begin{equation*}
\frac{\eta}{1-\eta\tau}\Big[\innprod{x-y, w} - \tau \cH(x)+ \tau\cH(y)\Big] =  - \frac{1}{1-\eta\tau}\KL{y}{x} - \KL{x}{y}.
\end{equation*}
\end{lemma}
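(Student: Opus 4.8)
The plan is to convert the multiplicative definition of $x$ into an additive (logarithmic) identity and then verify the claimed equation by expanding every term into sums over $a\in\cA$ and matching coefficients; there is no genuine analytic content, only careful bookkeeping of the constants $1$, $1-\eta\tau$, and $\eta\tau$. First I would write the normalized form
\[
x(a) = \frac{1}{Z}\, y(a)^{1-\eta\tau}\exp(-\eta w(a)), \qquad Z = \sum_{a'\in\cA} y(a')^{1-\eta\tau}\exp(-\eta w(a')),
\]
and take logarithms to obtain
\[
\eta w(a) = (1-\eta\tau)\log y(a) - \log x(a) - \log Z.
\]
This is the only place the hypothesis on $x$ enters; everything afterward is an exact rearrangement.

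Next I would substitute this expression into $\eta\innprod{x - z, w} = \sum_a (x(a) - z(a))\,\eta w(a)$. The crucial simplification is that the constant $-\log Z$ multiplies $\sum_a (x(a) - z(a)) = 0$, since $x, z \in \Delta(\cA)$ both have unit mass; hence the normalizer drops out entirely, leaving
\[
\eta\innprod{x - z, w} = \sum_a (x(a) - z(a))\big[(1-\eta\tau)\log y(a) - \log x(a)\big].
\]
I would then form the full quantity $\eta\big[\innprod{x-z,w} - \tau\cH(x) + \tau\cH(z)\big]$ by adding the entropy contributions $\eta\tau\sum_a x(a)\log x(a)$ and $-\eta\tau\sum_a z(a)\log z(a)$, and compare it against $(1-\eta\tau)$ times the right-hand side after writing each of $\KL{z}{y}$, $\KL{z}{x}$, $\KL{x}{y}$ explicitly as sums.

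The main (and essentially only) step is to check that the resulting groups of terms — those in $x\log x$, $x\log y$, $z\log z$, $z\log y$, $z\log x$ — recombine with exactly the right coefficients. For instance, the $x\log x$ terms collect as $-\sum_a x(a)\log x(a) + \eta\tau\sum_a x(a)\log x(a) = -(1-\eta\tau)\sum_a x(a)\log x(a)$, matching the $x\log x$ part of $-(1-\eta\tau)\KL{x}{y}$, while the $z\log z$ terms collapse to $-\eta\tau\sum_a z(a)\log z(a)$ via $(1-\eta\tau)-1 = -\eta\tau$. Verifying all five groups establishes $(1-\eta\tau)\cdot(\text{RHS}) = \eta\big[\innprod{x-z,w}-\tau\cH(x)+\tau\cH(z)\big]$, which is the claim after dividing by $1-\eta\tau$. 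Finally, the displayed corollary follows by specializing $z = y$: since $\KL{y}{y} = 0$, the first term on the right vanishes and one is left with $-\frac{1}{1-\eta\tau}\KL{y}{x} - \KL{x}{y}$, as stated. I expect the only source of error to be tracking which coefficient ($1$ versus $1-\eta\tau$) attaches to the $\KL{z}{x}$ term, so I would double-check that single coefficient against the normalization cancellation above.
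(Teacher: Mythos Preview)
Your proposal is correct and follows essentially the same approach as the paper: both take logarithms of the update rule to obtain $\eta w(a) = (1-\eta\tau)\log y(a) - \log x(a) - \log Z$, exploit that the normalizer $\log Z$ cancels against $\sum_a(x(a)-z(a))=0$, and then rearrange. The only cosmetic difference is that the paper organizes the algebra via the standard three-point identity $\KL{z}{y} = \KL{z}{x} + \KL{x}{y} - \innprod{z-x,\log y - \log x}$ before substituting, whereas you expand everything directly and match coefficients; the content is the same.
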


Recall from the earlier discussion (cf. \eqref{eq:mu_update_single_ver}) that $\bmutp(a|s) \propto \bmut(a|s) \exp(\eta w^{(t)}(s))$ for some $w^{(t)} \in \mathbb{R}^{|\cB|}$ satisfying 
\[
\normbig{w^{(t)}}_\infty \le \frac{6}{1-\gamma}.
\]
We can ensure that $\|\eta w^{(t)}\|_\infty \le 1/30$ as long as $\eta^{-1} \ge \frac{180}{1-\gamma}$, and the next lemma guarantees $\KLs{\bmut}{\bmutp} \le 2 \KLs{\bmutp}{\bmut}$ in this case.
\begin{lemma}
\label{lemma:KL_local_equiv}
Let $w \in \mathbb{R}^{|\cA|}$, $\pi, \pi' \in \Delta(\cA)$ satisfy, for each $a \in \cA$, $\pi'(a) \propto \pi(a) \exp(w(a))$ with $\|w\|_\infty \le \frac{1}{30}$. It holds that
\[
\KL{\pi}{\pi'} \le 2 \KL{\pi'}{\pi}.
\]
\end{lemma}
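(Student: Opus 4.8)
The plan is to pass to the log-ratio $\delta(a) := \log\pi'(a) - \log\pi(a) = w(a) - \log Z$, where $Z = \sum_{a}\pi(a)e^{w(a)}$ is the normalizing constant. Since $\|w\|_\infty \le 1/30$ forces $e^{-1/30}\le Z \le e^{1/30}$ and hence $|\log Z|\le 1/30$, I obtain the uniform bound $\|\delta\|_\infty \le 1/15$. The only structural fact I will use about $\delta$ is the normalization identity $\sum_{a}\pi(a)e^{\delta(a)} = \sum_{a}\pi'(a) = 1$, equivalently $\sum_{a}\pi(a)\big(e^{\delta(a)}-1\big)=0$.

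Next I would rewrite both divergences as weighted sums of a single scalar function of $\delta(a)$. Directly, $\KL{\pi}{\pi'} = -\sum_{a}\pi(a)\delta(a)$ and $\KL{\pi'}{\pi} = \sum_{a}\pi(a)e^{\delta(a)}\delta(a)$. Using the normalization identity to subtract suitable multiples of $e^{\delta(a)}-1$ (whose $\pi$-average vanishes), these become $\KL{\pi}{\pi'} = \sum_{a}\pi(a)\phi(\delta(a))$ and $\KL{\pi'}{\pi} = \sum_{a}\pi(a)\psi(\delta(a))$, where $\phi(x) = e^x - 1 - x$ and $\psi(x) = (x-1)e^x + 1$. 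Both $\phi$ and $\psi$ are nonnegative and vanish to second order at $0$, consistent with both divergences being small when $\delta$ is small.

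The crux is then the pointwise comparison $\phi(x)\le 2\psi(x)$ for $|x|\le 1/15$, after which summing against the nonnegative weights $\pi(a)$ finishes the proof. I would establish this by setting $g(x) := 2\psi(x)-\phi(x) = (2x-3)e^x + x + 3$ and checking $g(0)=0$, $g'(0) = 0$, and $g''(x) = (2x+1)e^x > 0$ for all $x > -1/2$; convexity on $(-1/2,\infty)$ together with the vanishing first derivative at $0$ places $g$ above its horizontal tangent line at $0$, so $g\ge 0$ on the whole interval, in particular on $[-1/15,1/15]$.

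The main obstacle---and the reason both the factor $2$ and the normalization constraint are genuinely needed---is that the integrands do \emph{not} obey the desired inequality pointwise in the raw form $e^x - 1 - x \le 2xe^x$: for negative $\delta(a)$ the left side is positive while $2\delta(a)e^{\delta(a)}$ is negative. The constraint $\sum_a \pi(a)e^{\delta(a)}=1$ is what rescues the argument, since it lets me replace $xe^x$ by $\psi(x)=(x-1)e^x+1\ge 0$ without changing the $\pi$-average, converting a false pointwise claim into the true convexity comparison above. Everything else is a routine Taylor/convexity computation.
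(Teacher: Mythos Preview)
Your proof is correct and takes a genuinely different route from the paper's. The paper first bounds $|\pi(a)/\pi'(a)-1|\le 3\|w\|_\infty$, then uses a cubic Taylor sandwich for $\log(1+x)$ to trap each KL divergence between constant multiples of a $\chi^2$-type sum, relates $\chi^2(\pi;\pi')$ and $\chi^2(\pi';\pi)$ through the ratio bound $\|\pi/\pi'\|_\infty\le 1+3\|w\|_\infty$, and finishes with a numerical check that the resulting constant is at most $2$ when $\|w\|_\infty\le 1/30$. Your argument instead works directly with the log-ratio $\delta$, uses the normalization identity $\sum_a\pi(a)(e^{\delta(a)}-1)=0$ to rewrite both divergences as $\pi$-averages of the nonnegative scalar functions $\phi(x)=e^x-1-x$ and $\psi(x)=(x-1)e^x+1$, and reduces the lemma to the clean pointwise inequality $\phi\le 2\psi$ on $(-1/2,\infty)$, verified by convexity of $g=2\psi-\phi$ with $g(0)=g'(0)=0$. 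Your approach is shorter and avoids the intermediate $\chi^2$ comparison and the final numerical verification; the paper's approach yields slightly more information (local equivalence of KL and $\chi^2$), though that is not used elsewhere in the paper. Note also that your convexity step actually gives $\phi(x)\le 2\psi(x)$ for all $x>-1/2$, so the precise value $1/30$ is only used to ensure $\|\delta\|_\infty\le 1/15<1/2$.
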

Therefore, we can continue to bound \eqref{eq:f_mu_diff_tmp} by  
\begin{align*}
	&\big|f_s(\Q{t}, \bmutp, \bnut) - f_s(\Q{t}, \bmut, \bnut)\big|\\
	&\le  \frac{1}{\eta} \KLs{\bmut}{\bmutp} + \frac{1-\eta\tau}{\eta}\KLs{\bmutp}{\bmut}\\
	&\qquad + \normbig{\bmutp(s) - \bmut(s)}_1\normbig{\Q{t}(s)\bnut(s) - \Q{t-1}(s)\bar{\nu}^{(t-1)}(s)}_\infty\\
	&\le \frac{3}{\eta}\KLs{\bmutp}{\bmut} + \normbig{\bmutp(s) - \bmut(s)}_1\normbig{\Q{t}(s) - \Q{t-1}(s)}_\infty\\
	&\qquad + \normbig{\Q{t}(s)}_\infty\normbig{\bmutp(s) - \bmut(s)}_1 \normbig{\bnut(s) - \bar{\nu}^{(t-1)}(s)}_1\\
	&\le \Big(\frac{3}{\eta} + \frac{2}{1-\gamma}\Big)\KLs{\bmutp}{\bmut}  + \frac{2}{1-\gamma}\KLs{\bmut}{\bar{\mu}^{(t-1)}} + \frac{6\eta}{1-\gamma}\normbig{\Q{t}(s) - \Q{t-1}(s)}_\infty,
\end{align*}
where the last line uses Lemma~\ref{lemma:one_step_policy_bound}, Cauchy-Schwarz inequality and Pinsker's inequality (see e.g., \eqref{eq:pinsker_trick_example}). One can bound $\big|f_s(\Q{t}, \bmut, \bnut) - f_s(\Q{t}, \bmut, \bnutp)\big|$ with similar arguments. Putting all pieces together, we arrive at
\begin{align*}
	&\big|f_s(\Q{t+1}, \bmutp, \bnutp) - f_s(\Q{t}, \bmut, \bnut)\big|\\
	&\le \Big\|\Q{t+1}(s) - \Q{t}(s)\Big\|_\infty + \Big(\frac{3}{\eta} + \frac{4}{1-\gamma}\Big)\KLs{\bztp}{\bzt}  + \frac{2}{1-\gamma}\KLs{\bzt}{\bar{\zeta}^{(t-1)}} \\
	&\qquad + \frac{12\eta}{1-\gamma} \normbig{\Q{t}(s) - \Q{t-1}(s)}_\infty.
\end{align*}

\subsection{Proof of Lemma \ref{lemma:one_step_regret}}
\label{sec:pf_lemma_one_step_regret}
Note that
\begin{align} 
	&f_s(\Q{t}, \bmut, \bnut) - f_s(\Q{t}, \bmut, \nu) \nonumber \\ 
		& = \innprod{\bnut(s) - \best{\nu}(s), \Q{t}(s)^\top\bmut(s)} - \tau \cH(\bnut(s))+ \tau \cH(\best{\nu}(s))\nonumber \\
		&=  \innprod{\bnut(s) - \nut(s),\Q{t}(s)^\top\bmut(s) - \Q{t-1}(s)^\top\bar{\mu}^{(t-1)}(s)} \nonumber\\
		&\qquad + \innprod{\bnut(s) - \nut(s), \Q{t-1}(s)^\top\bar{\mu}^{(t-1)}(s)}  - \tau \cH(\bnut(s))+ \tau \cH(\nut(s))\nonumber\\
		&\qquad + \innprod{{\nu}^{(t)}(s) - \best{\nu}(s), \Q{t}(s)^\top\bmut(s)}- \tau \cH(\nut(s))+ \tau \cH(\best{\nu}(s)) \nonumber\\
		&=  \innprod{\bnut(s) - \nut(s),\Q{t}(s)^\top\bmut(s) - \Q{t-1}(s)^\top\bar{\mu}^{(t-1)}(s)} \nonumber\\
		&\qquad  + \frac{1-\eta\tau}{\eta}\KLs{\nut}{\nutm} - \frac{1}{\eta} \KLs{\nut}{\bnut}- \frac{1-\eta\tau}{\eta}\KLs{\bnut}{\nutm} \nonumber\\
		&\qquad + \frac{1-\eta\tau}{\eta}\KLs{\best{\nu}}{\nutm} - \frac{1}{\eta} \KLs{\best{\nu}}{\nut}- \frac{1-\eta\tau}{\eta}\KLs{\nut}{\nutm} \nonumber \\
		&\le \normbig{\bnut(s) - \nut(s)}_1\normbig{\Q{t}(s)^\top\bmut(s) - \Q{t-1}(s)^\top\bar{\mu}^{(t-1)}(s)}_\infty \nonumber\\
		&\qquad   - \frac{1}{\eta} \KLs{\nut}{\bnut}- \frac{1-\eta\tau}{\eta}\KLs{\bnut}{\nutm} + \frac{1-\eta\tau}{\eta}\KLs{\best{\nu}}{\nutm} - \frac{1}{\eta} \KLs{\best{\nu}}{\nut}.
	\label{eq:one_step_regret_decomp}
\end{align}
Here, the second step results from Lemma~\ref{lemma:strange_three_pt}. We further bound the first term in \eqref{eq:one_step_regret_decomp} as follows.
\begin{align*}
	&\normbig{\bnut(s) - \nut(s)}_1\normbig{\Q{t}(s)^\top\bmut(s) - \Q{t-1}(s)^\top\bar{\mu}^{(t-1)}(s)}_\infty\\
	&\le \normbig{\bnut(s) - \nut(s)}_1\Big(\normbig{\big(\Q{t}(s) - \Q{t-1}(s)\big)^\top \bar{\mu}^{(t-1)}(s)}_\infty + \normbig{\Q{t}(s) \big(\bmut(s) - \bar{\mu}^{(t-1)}(s)\big)}_\infty\Big)\\
	&\le \normbig{\bnut(s) - \nut(s)}_1\normbig{\Q{t}(s) - \Q{t-1}(s)}_\infty + \frac{2}{1-\gamma}\normbig{\bnut(s) - \nut(s)}_1\normbig{\bmut(s) - \bar{\mu}^{(t-1)}(s)}_1\\
	&\le \frac{2\eta}{1-\gamma}\normbig{\Q{t}(s) - \Q{t-1}(s)}_\infty + \frac{1}{1-\gamma}\Big[2\normbig{\bnut(s) - \nut(s)}_1^2 \\
	&\qquad + \normbig{\bmut(s) - \mutm(s)}_1^2 + \normbig{\mutm(s) - \bar{\mu}^{(t-1)}(s)}_1^2\Big]\\
	&\le \frac{2\eta}{1-\gamma}\normbig{\Q{t}(s) - \Q{t-1}(s)}_\infty + \frac{4}{1-\gamma}\KLs{\nut}{\bnut} \\
	&\qquad + \frac{2}{1-\gamma}\KL{\bmut(s)}{\mutm(s)} + \frac{2}{1-\gamma}\KL{\mutm(s)}{\bar{\mu}^{(t-1)}(s)},
\end{align*}
where the penultimate inequality follows from Lemma \ref{lemma:one_step_policy_bound}, and the last line follows from  Pinsker's inequality. Substitution of the above inequality into \eqref{eq:one_step_regret_decomp} completes the proof.

\subsection{Proof of Lemma \ref{lemma:cabbage}}
\label{sec:pf_lemma_cabbage}
By definition, we have
\begin{align*}
	\delta_{l,t} &= \alpha_l \prod_{i=l+1}^{t}(1 - c_1 \alpha_i)\\
	&= \alpha_l \prod_{i=l+1}^{t}(1 - c_2 \alpha_i + (c_2 - c_1)\alpha_i)\\
	& = \alpha_l(c_2-c_1)\alpha_{l+1}\prod_{i=l+2}^{t}(1 - c_2 \alpha_i + (c_2 - c_1)\alpha_i) +  \alpha_l (1-c_2\alpha_{l+1})\prod_{i=l+2}^{t}(1 - c_2 \alpha_i + (c_2 - c_1)\alpha_i).
\end{align*}
Continuing this expansion recursively, we obtain
\begin{align*}
\delta_{l,t}	&= \alpha_l \sum_{i=l+1}^{t} (c_2 - c_1)\alpha_i \cdot\prod_{j=l+1}^{i}(1-c_2\alpha_j)\cdot \prod_{k=i+1}^{t}(1-c_1\alpha_k) + \alpha_l \prod_{i=l+1}^t (1-c_2\alpha_i)\\
	&= (c_2 - c_1)\sum_{i=l+1}^{t} \xi_{l,i} \delta_{i,t} + \xi_{l,t}.
\end{align*}
Rearranging terms, it follows that
\begin{align*}
	\sum_{i=l}^{t}\xi_{l,i}\delta_{i+1,t} &= \alpha_l \delta_{l+1,t} + \sum_{i=l+1}^{t}\xi_{l,i}\delta_{i+1,t}\\
	& = \frac{\alpha_{l+1}}{1-c_1\alpha_{l+1}} \delta_{l,t} + \sum_{i=l+1}^{t}\xi_{l,i}\delta_{i,t}\cdot\frac{\alpha_{i+1}}{\alpha_{i}(1-c_1\alpha_{i+1})}\\
	&\overset{\mathrm{(i)}}{\le} \delta_{l,t} + 2 \sum_{i=l+1}^{t}\xi_{l,i}\delta_{i,t} = \delta_{l,t} + \frac{2}{c_2-c_1}(\delta_{l,t} - \xi_{l,t})\le \Big(1+\frac{2}{c_2-c_1}\Big)\delta_{l,t},
\end{align*}
where the second line results from the definition of $\delta_{l,t}$ and (i) is due to $\{\alpha_i\}$ being non-increasing and
\[
\alpha_{l+1} \le \eta\tau \le  1/2, \quad 1-c_1 \alpha_l \ge 1/2
\]
for all $l \ge 1$.

\subsection{Proof of Lemma \ref{lemma:strange_three_pt}}
 \label{sec:pf_three_pt}
We have
\begin{align*}
	\KL{z}{y} &= -\cH(z) + \cH(y) - \innprod{z-y, \log y}\\
	&= -\cH(z) + \cH(x) - \innprod{z-x, \log y} -\cH(x) + \cH(y) - \innprod{x-y, \log y}\\
	&= -\cH(z) + \cH(x) - \innprod{z-x, \log x} -\cH(x) + \cH(y) - \innprod{x-y, \log y} - \innprod{z-x, \log y - \log x}\\
	&= \KL{z}{x} + \KL{x}{y} - \frac{\eta}{1-\eta\tau}\innprod{z-x, w + \tau \log x},
\end{align*}
where the last line follows from the update rule. Rearranging terms gives
\begin{align*}
	\frac{\eta}{1-\eta\tau}\innprod{x-z, w } &= \KL{z}{y} -\KL{z}{x} - \KL{x}{y}  + \frac{\eta\tau}{1-\eta\tau}\innprod{z-x, \log x}.
\end{align*}
Adding $ \frac{\eta\tau}{1-\eta\tau}(-\cH(x)+\cH(z))$ to both sides, we are left with
\begin{align*}
	\frac{\eta}{1-\eta\tau}\Big[\innprod{x-z, w} - \tau \cH(x)+ \tau\cH(z)\Big] &= \KL{z}{y} - \KL{z}{x} - \KL{x}{y} \\
	&\qquad - \frac{\eta\tau}{1-\eta\tau} \big(- \cH(z) + \cH(x) - \innprod{z-x, \log x}\big)\\
	 &= \KL{z}{y} - \frac{1}{1-\eta\tau}\KL{z}{x} - \KL{x}{y}.
\end{align*}

\subsection{Proof of Lemma \ref{lemma:KL_local_equiv}}

We begin with a simple sandwich bound of $\log (1+x) $ which will be used later: when $x > -\frac{1}{10}$,
 we have  
	\begin{equation} \label{eq:logx_sandwich}
		x - \Big(\frac{1}{2} + \frac{|x|}{2}\Big)x^2 \le \log (1+x) \le x - \Big(\frac{1}{2} - \frac{|x|}{3}\Big)x^2.
	\end{equation}
We shall prove this at the end of this proof. The following lemma, which is standard (see, e.g., \cite[Lemma 23]{mei2020global}, \cite[Lemma 3]{cen2020fast}), allows us to control $\normbig{\log \pi - \log \pi'}_\infty$, and in turn $\norm{\pi /\pi'}_\infty$.
	\begin{lemma}
		Let $\pi, \pi' \in \Delta(\cA)$ satisfy $\pi(a) \propto \exp(\theta(a))$ and $\pi'(a) \propto \exp(\theta'(a))$ for some $\theta, \theta' \in \mathbb{R}^{|\cA|}$. It holds that
		\begin{equation*}
			\normbig{\log\pi - \log\pi'}_\infty \le 2\normbig{\theta - \theta'}_\infty.
		\end{equation*}
	\end{lemma}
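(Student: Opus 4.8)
The plan is to reduce the entire claim to the Lipschitz continuity of the log-normalizer. Writing out the softmax explicitly, we have $\log \pi(a) = \theta(a) - Z$ and $\log \pi'(a) = \theta'(a) - Z'$ for every $a \in \cA$, where $Z = \log \sum_{a'\in\cA}\exp(\theta(a'))$ and $Z' = \log \sum_{a'\in\cA}\exp(\theta'(a'))$ are the two log-partition constants. Subtracting these identities and applying the triangle inequality entrywise yields
\[
\normbig{\log\pi - \log\pi'}_\infty \le \normbig{\theta - \theta'}_\infty + |Z - Z'|,
\]
so that the whole statement follows once I show $|Z - Z'| \le \normbig{\theta - \theta'}_\infty$.

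To control $|Z - Z'|$, I would establish that the map $\theta \mapsto \log \sum_{a\in\cA}\exp(\theta(a))$ is $1$-Lipschitz with respect to the $\ell_\infty$ norm, via a direct monotonicity argument. Setting $r = \normbig{\theta - \theta'}_\infty$, the pointwise bound $\theta(a) \le \theta'(a) + r$ gives $\sum_{a\in\cA} \exp(\theta(a)) \le e^{r}\sum_{a\in\cA} \exp(\theta'(a))$, and taking logarithms yields $Z \le Z' + r$. Swapping the roles of $\theta$ and $\theta'$ furnishes the reverse inequality, whence $|Z - Z'| \le r = \normbig{\theta - \theta'}_\infty$. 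Combining this with the display above produces the factor of $2$.

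There is no genuine obstacle here: the result is elementary, and the single nontrivial ingredient is the $\ell_\infty$-Lipschitzness of the log-sum-exp function, which the monotonicity argument delivers in essentially one line. It is worth noting that the factor of $2$ (rather than $1$) is an artifact of not centering the logits — it accounts for the normalization shift $Z - Z'$ on top of the raw perturbation $\theta - \theta'$ — and cannot be improved in general without imposing a gauge-fixing convention (e.g.\ $\sum_a \theta(a) = 0$) on the parameter vectors.
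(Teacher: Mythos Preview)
Your argument is correct: decomposing $\log\pi - \log\pi' = (\theta - \theta') - (Z - Z')\one$ and bounding the log-partition shift by the $\ell_\infty$-Lipschitz property of log-sum-exp is exactly the standard route, and it delivers the factor of~$2$ cleanly. The paper does not actually supply a proof of this lemma; it merely quotes the result as standard and points to \cite[Lemma 23]{mei2020global} and \cite[Lemma 3]{cen2020fast}, whose proofs proceed along the same lines as yours. One minor remark on your closing comment: the factor of~$2$ is the right constant for the bound stated in terms of $\normbig{\theta-\theta'}_\infty$, but the sharper quantity governing $\normbig{\log\pi-\log\pi'}_\infty$ is the span seminorm $\max_a(\theta-\theta')(a)-\min_a(\theta-\theta')(a)$, which your argument in fact yields before the final triangle inequality; the constant~$2$ arises only when passing from span to $\ell_\infty$.
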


In view of the above lemma, and since $ \normbig{w}_\infty < 1/30$, we have $\forall a\in \cA$:
	\begin{align} \label{eq:pumpkin}
	\Big|\frac{\pi(a)}{\pi'(a)} - 1\Big| &= \Big|\exp\Big(\log\frac{\pi(a)}{\pi'(a)}\Big) - \exp(0)\Big| \le |\log \pi(a) - \log \pi'(a)|\max\Big\{1, \frac{\pi(a)}{\pi'(a)}\Big\} \nonumber \\
	&\le 2\normbig{w}_\infty  \exp(2 \normbig{w}_\infty ) \le 3\normbig{w}_\infty .
	\end{align}
 Therefore, we can bound $\KL{\pi}{\pi'}$ as  
	\begin{align} 
			\KL{\pi}{\pi'} &= \sum_{a\in\cA} \pi(a)\log \frac{\pi(a)}{\pi'(a)} \nonumber\\
			&\overset{\mathrm{(i)}}{\le} \sum_{a\in\cA} \pi(a) \left( \frac{\pi(a)}{\pi'(a)}-1 - \Big(\frac{1}{2} - \normbig{w}_\infty \Big)\Big(\frac{\pi(a)}{\pi'(a)}-1\Big)^2 \right) \nonumber \\
			&\overset{\mathrm{(ii)}}{ =} \sum_{a\in\cA} \big(\pi(a) - \pi'(a)\big)\Big( \frac{\pi(a)}{\pi'(a)}-1\Big) + \sum_{a\in\cA} \pi'(a)\Big( \frac{\pi(a)}{\pi'(a)}-1\Big) - \Big(\frac{1}{2} - \normbig{w}_\infty  \Big)\sum_{a\in\cA}\pi(a)\Big(\frac{\pi(a)}{\pi'(a)}-1\Big)^2 \nonumber\\
			&= \chi^2(\pi;\pi') - \Big(\frac{1}{2} - \normbig{w}_\infty  \Big)\sum_{a\in\cA}\pi(a)\Big(\frac{\pi(a)}{\pi'(a)}-1\Big)^2 \nonumber\\
			&\overset{\mathrm{(iii)}}{\le} \chi^2(\pi;\pi') - \Big(\frac{1}{2} - \normbig{w}_\infty  \Big) \left(1-3\normbig{w}_\infty  \right)\sum_{a\in\cA}\pi'(a)\Big(\frac{\pi(a)}{\pi'(a)}-1\Big)^2 \nonumber \\
			&= \left(1 - \Big(\frac{1}{2} - \normbig{w}_\infty \Big)\left(1-3\normbig{w}_\infty \right) \right)\chi^2(\pi;\pi'),
		\label{eq:KL_chi_equiv_upper}
	\end{align}
where (i) follows from \eqref{eq:logx_sandwich}, (ii) utilizes the fact $\sum_{a\in\cA} (\pi(a) - \pi'(a)) =0$, and (iii) makes use of \eqref{eq:pumpkin}. On the other hand, by similar arguments, we have
		\begin{align} 
			\KL{\pi'}{\pi} &= \sum_{a\in\cA} \pi'(a)\log \frac{\pi'(a)}{\pi(a)} \nonumber \\
			& \ge \sum_{a\in\cA} \pi'(a)\left( \frac{\pi'(a)}{\pi(a)}- 1 - \frac{(1+3\normbig{w}_\infty )}{2}\Big(\frac{\pi'(a)}{\pi(a)}-1\Big)^2 \right) \nonumber\\
			&= \chi^2(\pi';\pi) - \frac{(1+3\normbig{w}_\infty )}{2}\sum_{a\in\cA}\pi'(a)\Big(\frac{\pi'(a)}{\pi(a)}-1\Big)^2 \nonumber\\
			&\ge \chi^2(\pi';\pi) - \frac{(1+3\normbig{w}_\infty )^2}{2}\sum_{a\in\cA}\pi(a)\Big(\frac{\pi'(a)}{\pi(a)}-1\Big)^2 \nonumber\\
			&= \Big(1 - \frac{(1+3\normbig{w}_\infty )^2}{2} \Big)\chi^2(\pi';\pi).
		\label{eq:KL_chi_equiv_lower}
	\end{align}
	By definition of $\chi^2(\pi;\pi')$, we further have
	\begin{align} 
			\chi^2(\pi;\pi') &= \sum_{a\in\cA}\pi'(a)\Big(\frac{\pi(a)}{\pi'(a)}-1\Big)^2  \nonumber\\
			&\le \normbig{\pi/\pi'}_\infty \sum_{a\in\cA}\frac{\big(\pi'(a) - \pi(a)\big)^2}{\pi(a)} \nonumber\\
			&\le (1+3 \normbig{w}_\infty )\chi^2(\pi';\pi),
		\label{eq:chi_equiv}
	\end{align}
where the last line uses \eqref{eq:pumpkin}. Combining \eqref{eq:KL_chi_equiv_upper}, \eqref{eq:KL_chi_equiv_lower} and \eqref{eq:chi_equiv} gives
	\begin{equation*}
		\KL{\pi}{\pi'} \le (1+3\normbig{w}_\infty)\cdot\left[ \frac{1 - \big(1/2 - \normbig{w}_\infty\big)(1-3\normbig{w}_\infty)}{1 - (1+3\normbig{w}_\infty)^2/2} \right] \KL{\pi'}{\pi}.
	\end{equation*}
	It is straightforward to verify that the factor is less than $2$ when $\normbig{w}_\infty \le 1/30$.
 
\paragraph{Proof of \eqref{eq:logx_sandwich}.} For any $x >-1$, it holds that
	\begin{align*}
	\log (1+x) &\le x - \frac{x^2}{2} + \frac{x^3}{3} \\
	&\le x - \frac{x^2}{2} + \frac{|x^3|}{3} = x - \Big(\frac{1}{2} - \frac{|x|}{3}\Big)x^2,
	\end{align*}
	and that
	\begin{align*}
		\log (1+x) &\ge x - \frac{x^2}{2} + \frac{x^3}{3(1+x)^3}\\
		&\ge x - \frac{x^2}{2} - \frac{|x^3|}{3(1+x)^3} = x - \Big(\frac{1}{2} + \frac{|x|}{3(1+x)^3}\Big)x^2.
	\end{align*}
	Therefore, when $x > -\frac{1}{10}$,
 we have $(1+x)^3>\frac{2}{3}$ and thus
	\begin{equation*}
		x - \Big(\frac{1}{2} + \frac{|x|}{2}\Big)x^2 \le \log (1+x) \le x - \Big(\frac{1}{2} - \frac{|x|}{3}\Big)x^2.
	\end{equation*}

\section{Further discussion regarding \citet{wei2021last}}

\label{sec:translation}


This section demonstrates how the last-iterate convergence result in \citet[Theorem 2]{wei2021last} in terms of the Euclidean distance to the set of NEs can be translated to that of the duality gap. Given any policy pair $\zeta = (\mu, \nu)$ and a NE $\zeta^\star = (\mu^\star, \nu^\star)$, we can invoke the performance difference lemma \eqref{eq:performance_diff_lemma_game} and obtain:
\begin{align*}
		V^{\mu,\nu}(\rho) - V^\star(\rho)&= \frac{1}{1-\gamma} \exlim{s'\sim d_\rho^{\mu, \nu}}{\mu(s')^\top Q^\star(s') \nu(s') - \mu^\star(s')^\top Q^\star(s') \nu^\star(s')}\\
		&\le \frac{1}{1-\gamma} \exlim{s'\sim d_\rho^{\mu, \nu}}{\max_{\mu'}\mu'(s')^\top Q^\star(s') \nu(s') - \mu^\star(s')^\top Q^\star(s') \nu^\star(s')}\\
		&= \frac{1}{1-\gamma} \exlim{s'\sim d_\rho^{\mu, \nu}}{\max_{\mu'}\mu'(s')^\top Q^\star(s') \nu(s') - \max_{\mu'}\mu'(s')^\top Q^\star(s') \nu^\star(s')}\\
		&\le  \frac{1}{1-\gamma} \exlim{s'\sim d_\rho^{\mu, \nu}}{\max_{\mu'}\mu'(s')^\top Q^\star(s')\big( \nu(s') - \nu^\star(s')\big)}\\
		&\le \frac{1}{(1-\gamma)^2} \exlim{s'\sim d_\rho^{\mu, \nu}}{\normbig{\nu(s') - \nu^\star(s')}_1}.
\end{align*}
Setting $\mu$ to the best-response policy of $\nu$, i.e., $\mu = \mu^\dagger(\nu) := \arg\max_{\mu} V^{\mu, \nu}(\rho)$, we get
\begin{align*}
	\max_{\mu'}V^{\mu',\nu}(\rho) - V^\star(\rho) &= V^{\mu^\dagger(\nu),\nu}(\rho) - V^\star(\rho)\\
	&\le \frac{1}{(1-\gamma)^2} \exlim{s'\sim d_\rho^{\mu^\dagger(\nu), \nu}}{\normbig{\nu(s') - \nu^\star(s')}_1}\\
	&\le \frac{\normbig{d_\rho^{\mu^\dagger(\nu), \nu}}_\infty}{(1-\gamma)^2} \sum_{s\in \cS}{\normbig{\nu(s) - \nu^\star(s)}_1}.
\end{align*}
Similarly, we have
\begin{align*}
	V^\star(\rho) - \min_{\nu'}V^{\mu,\nu'}(\rho)  \le \frac{\normbig{d_\rho^{\mu, \nu^\dagger(\mu)}}_\infty}{(1-\gamma)^2} \sum_{s \in \cS}{\normbig{\mu(s') - \mu^\star(s')}_1}.
\end{align*}
Taken together, the duality gap can be bounded by the policy pair's $\ell_1$ distance to the NE $(\mu^\star, \nu^\star)$ as
\begin{align*}
	\max_{\mu', \nu'} \Big[ V^{\mu',\nu}(\rho) - V^{\mu,\nu'}(\rho) \Big] &\le \frac{1}{(1-\gamma)^2}  \sum_{s\in \cS}\Big(\normbig{\nu(s') - \nu^\star(s')}_1 + \normbig{\mu(s') - \mu^\star(s')}_1\Big)\\
	&\le \frac{|\cS|^{1/2}(|\cA|+|\cB|)^{1/2}}{(1-\gamma)^2} \bigg[\sum_{s\in \cS} \Big(\normbig{\nu(s') - \nu^\star(s')}_2^2 + \normbig{\mu(s') - \mu^\star(s')}_2^2\Big)\bigg]^{1/2},
\end{align*}
where the second step results from Cauchy-Schwarz inequality. Finally, recall from \citet[Theorem 2]{wei2021last} that it takes at most
\[
\cO\bigg( \frac{|\cS|^2}{\eta^4 c^4 (1-\gamma)^4 \epsilon^2}\bigg)
\]
iterations to ensure
\[
\frac{1}{|\cS|} \sum_{s\in \cS} \Big(\normbig{\nu(s') - \nu^\star(s')}_2^2 + \normbig{\mu(s') - \mu^\star(s')}_2^2\Big) \le \epsilon^2,
\]
with $\eta^2 = \cO((1-\gamma)^5|\cS|^{-1})$.
Putting pieces together and minimizing the bound over $\eta$, this leads to an iteration complexity of
\[
\cO\bigg( \frac{|\cS|^5(|\cA|+|\cB|)^{1/2}}{(1-\gamma)^{16}c^4\epsilon^2}\bigg)
\]
to achieve $\epsilon$-NE in a last-iterate fashion.

\end{document}